\DeclareMathAlphabet{\mathcal}{OMS}{cmsy}{m}{n}
\definecolor{olive}{rgb}{0.6, 0.6, 0.2}
\definecolor{sand}{rgb}{0.8666666666666667, 0.8, 0.4666666666666667}
\definecolor{wine}{rgb}{0.5333333333333333, 0.13333333333333333, 0.3333333333333333}
\definecolor{deblue}{RGB}{11,132,147}
\definecolor{ocra}{RGB}{204, 119, 34}
\definecolor{customColor}{HTML}{C0C6D4}
\definecolor{lightBlue}{HTML}{E0ECF8}
\definecolor{verylightgray}{gray}{0.95} 
\definecolor{mygreen}{rgb}{0.9, 1.0, 0.9} 
\newtcolorbox{CatchyBox}[2][]{
    lower separated=false,
    colback=white!80!sand!90!ocra,
    colframe=white, fonttitle=\bfseries,
    colbacktitle=white!50!sand!90!ocra,
    coltitle=black,
    enhanced,
    breakable,
    attach boxed title to top left={xshift=.02\linewidth,yshift=-4mm},
    title=#2,#1}
\newtcolorbox{CatchyBox2}[2][]{
    lower separated=false,
    colback=lightBlue,
    colframe=white, fonttitle=\bfseries,
    colbacktitle=lightBlue,
    coltitle=black,
    enhanced,
    breakable,
    attach boxed title to top left={xshift=.02\linewidth,yshift=-4mm},
    title=#2,#1}
\newtcolorbox{CatchyBox_green}[2][]{
    lower separated=false,
    colback=mygreen,
    colframe=white, fonttitle=\bfseries,
    colbacktitle=mygreen,
    coltitle=black,
    enhanced,
    breakable,
    attach boxed title to top left={xshift=.02\linewidth,yshift=-4mm},
    title=#2,#1}
\newtcolorbox{CatchyBox3}[2][]{
    lower separated=false,
    colback=verylightgray,
    colframe=white, fonttitle=\bfseries,
    colbacktitle=gray,
    coltitle=black,
    enhanced,
    breakable,
    attach boxed title to top left={xshift=.02\linewidth,yshift=-4mm},
    title=#2,#1}
\newtcolorbox{SketchOfProof}[1][]{
  blanker,
  breakable,
  left=1em,
  before skip=5pt,
  after skip=10pt,
  borderline west={2pt}{0pt}{gray},
  before upper={\textit{\textbf{Proof sketch of #1.}}~},
  after upper={\hfill\tikz[xscale=0.25,yscale=0.25]{%
    \draw (0,0)--(1,0)--(1,1)--(0,1)--cycle;}%
  },
}
\newtcolorbox{Example}[1][]{
  blanker,
  breakable,
  left=1em,
  before skip=5pt,
  after skip=10pt,
  borderline west={2pt}{0pt}{black!15!white},
  before upper={\textit{\textbf{Example #1.}}~},
  after upper={\hfill\tikz[xscale=0.25,yscale=0.25]{%
    \draw (0,0)--(1,0)--(1,1)--(0,1)--cycle;}%
  },
}
\newtcolorbox{codecontribution}[1][]{
    lower separated=false,
    colback=gray!10!white,
    colframe=gray!50!black,
    fonttitle=\sffamily\bfseries,
    enhanced,
    title={\small Contributions},
    left=1mm, 
    right=10mm, 
    top=2mm, 
    bottom=2mm, 
    #1
}
\let\expandafter\oldproof\csname\string\proof\endcsname
\let\oldendproof\endproof
\renewenvironment{proof}[1][\proofname]{%
  \oldproof[\bfseries\itshape #1] 
}{\oldendproof}
\def\1{\bm{1}}
\DeclareMathAlphabet{\mathsfit}{\encodingdefault}{\sfdefault}{m}{sl}
\SetMathAlphabet{\mathsfit}{bold}{\encodingdefault}{\sfdefault}{bx}{n}
\DeclareMathOperator{\Exploitability}{Exploit}
\DeclareMathOperator{\RMD}{\mathtt{RMD}}
\DeclareMathOperator{\APP}{\mathtt{APP}}
\renewcommand*{\eval}[1]{\left.{#1}\right|}
\DeclareFontFamily{U}{matha}{\hyphenchar\font45}
\DeclareFontShape{U}{matha}{m}{n}{
<-6> matha5 <6-7> matha6 <7-8> matha7
<8-9> matha8 <9-10> matha9
<10-12> matha10 <12-> matha12
}{}
\DeclareSymbolFont{matha}{U}{matha}{m}{n}
\DeclareFontFamily{U}{mathx}{\hyphenchar\font45}
\DeclareFontShape{U}{mathx}{m}{n}{
<-6> mathx5 <6-7> mathx6 <7-8> mathx7
<8-9> mathx8 <9-10> mathx9
<10-12> mathx10 <12-> mathx12
}{}
\DeclareSymbolFont{mathx}{U}{mathx}{m}{n}
\DeclareMathDelimiter{\vvvert} {0}{matha}{"7E}{mathx}{"17}%
\DeclarePairedDelimiterX{\normiii}[1]
{\vvvert}
{\vvvert}
{\ifblank{#1}{\:\cdot\:}{#1}}
\SetMathAlphabet{\mathcal}{normal}{OMS}{cmsy}{m}{n} 
\SetMathAlphabet{\mathcal}{bold}{OMS}{cmsy}{m}{n} 
\DeclareSymbolFont{EulerExtension}{U}{euex}{m}{n}
\DeclareMathSymbol{\euintop}{\mathop} {EulerExtension}{"52}
\DeclareMathSymbol{\euointop}{\mathop} {EulerExtension}{"48}
\let\intop\euintop
\DeclareSymbolFont{cmletters}{OML}{cmm}{m}{it}
\DeclareSymbolFont{cmsymbols}{OMS}{cmsy}{m}{n}
\DeclareSymbolFont{cmlargesymbols}{OMX}{cmex}{m}{n}
\DeclareMathSymbol{\myjmath}{\mathord}{cmletters}{"7C}
\DeclareMathSymbol{\myamalg}{\mathbin}{cmsymbols}{"71}
\DeclareMathSymbol{\mycoprod}{\mathop}{cmlargesymbols}{"60}
\let\jmath\myjmath
\numberwithin{equation}{section}
\tikzset{
    ncbar angle/.initial=90,
    ncbar/.style={
        to path=(\tikztostart)
        -- ($(\tikztostart)!#1!\pgfkeysvalueof{/tikz/ncbar angle}:(\tikztotarget)$)
        -- ($(\tikztotarget)!($(\tikztostart)!#1!\pgfkeysvalueof{/tikz/ncbar angle}:(\tikztotarget)$)!\pgfkeysvalueof{/tikz/ncbar angle}:(\tikztostart)$)
        -- (\tikztotarget)
    },
    ncbar/.default=0.5cm,
}
\tikzset{square left brace/.style={ncbar=0.5cm}}
\tikzset{square right brace/.style={ncbar=-0.5cm}}
\tikzset{round left paren/.style={ncbar=0.5cm,out=120,in=-120}}
\tikzset{round right paren/.style={ncbar=0.5cm,out=60,in=-60}}
\tikzset{block/.style={draw, thick, text width=2cm ,minimum height=1.3cm, align=center},   
line/.style={-latex}     
}  
\newcommand{\e}{\mathrm{e}}
\newcommand{\R}{\mathbb{R}}
\newcommand{\N}{\mathbb{N}}
\newcommand{\State}{\mathcal{S}}
\newcommand{\KL}{D_{\mathrm{KL}}}
\newcommand{\A}{\mathcal{A}}
\newcommand{\PolicySet}{(\Delta(\A)^\State)^H}
\newcommand{\MeanfieldSet}{\Delta(\State)^H}
\newcommand{\la}{\left\langle}
\newcommand{\ra}{\right\rangle}
\newcommand{\one}{\mathrm{I}}
\newcommand{\pushright}[1]{\ifmeasuring@#1\else\omit\hfill$\displaystyle#1$\fi\ignorespaces}
\newcommand{\pushleft}[1]{\ifmeasuring@#1\else\omit$\displaystyle#1$\hfill\fi\ignorespaces}
\newcommand{\given}[2]{
  \left(#1 \;\middle\vert\; #2\right)
}
\newcommand{\Given}[2]{
  \left[#1 \;\middle\vert\; #2\right]
}
\newcommand{\givenbullet}[1]{(#1)}
\newcommand{\givenbullets}{(s)}
\newcommand{\Qlambdasigma}{Q^{\lambda,\sigma}_h}
\newcommand{\Dmuast}{D_{\mu^\ast}}
\DeclareMathOperator{\dist}{\mathrm{dist}}
\DeclareMathOperator{\Expect}{\mathbb{E}}
\DeclareMathOperator*{\argmax}{\mathop{\rm arg~max}}
\DeclareMathOperator{\Unif}{\mathrm{U}}
\def\Set#1{\Setdef#1\Setdef}
\def\Setdef#1|#2\Setdef{\left\{#1\,\;\mathstrut\vrule\,\;#2\right\}}%
\renewcommand{\th}{%
    \ifmmode
        ^\mathrm{th}%
    \else%
        \textsuperscript{th}\xspace%
    \fi%
}
\newcommand{\subalign}[1]{%
  \vcenter{%
    \Let@ \restore@math@cr \default@tag
    \baselineskip\fontdimen10 \scriptfont\tw@
    \advance\baselineskip\fontdimen12 \scriptfont\tw@
    \lineskip\thr@@\fontdimen8 \scriptfont\thr@@
    \lineskiplimit\lineskip
    \ialign{\hfil$\m@th\scriptstyle##$&$\m@th\scriptstyle{}##$\hfil\crcr
      #1\crcr
    }%
  }%
}
\newtcolorbox{algbox}[2][]{
  enhanced,
  breakable,
  float=H, 
  colback=white,
  colframe=black!20,
  arc=3pt,
  width=0.48\columnwidth,
  right=2cm, 
  overlay unbroken and first={
    \node[anchor=north east] 
      at (frame.north east) {\textbf{Algorithm #2}};
  },
  #1
}
\definecolor{burgundy}{rgb}{0.5, 0.0, 0.13}
\theoremstyle{plain}
\newtheorem{theorem}{Theorem}[section]
\newtheorem{proposition}[theorem]{Proposition}
\newtheorem{lemma}[theorem]{Lemma}
\newtheorem{claim}[theorem]{Claim}
\theoremstyle{definition}
\newtheorem{definition}[theorem]{Definition}
\newtheorem{assumption}[theorem]{Assumption}
\theoremstyle{remark}
\newtheorem{remark}[theorem]{Remark}
\crefname{theorem}{Theorem}{Theorems}
\crefname{definition}{Definition}{Definitions}
\crefname{problem}{Problem}{Problems}
\crefname{fact}{Fact}{Facts}
\crefname{proposition}{Proposition}{Proposition}
\crefname{lemma}{Lemma}{Lemmas}
\crefname{corollary}{Corollary}{Corollaries}
\crefname{assumption}{Assumption}{Assumptions}
\crefname{claim}{Claim}{Claims}
\crefname{remark}{Remark}{Remarks}
\crefname{example}{Example}{Examples}
\crefname{subsubsection}{Subsubsection}{Subsubsections}
\crefname{subsection}{\S}{\S}
\crefname{section}{\S}{\S}
\crefname{chapter}{\S}{\S}
\crefname{appendix}{Appendix}{Appendices}
\crefname{table}{Table}{Tables}
\crefname{figure}{Figure}{Figure}
\crefname{algorithm}{Algorithm}{Algorithms}
\crefname{myenumi}{item}{items}
\crefname{myenumii}{item}{items}
\crefname{myenumiii}{item}{items}
\title{Last Iterate Convergence \\in Monotone Mean Field Games}
\author{%
  Noboru Isobe \\
  RIKEN AIP \\
  Tokyo, Japan \\
  \texttt{noboru.isobe@riken.jp} \\
  \And
  Kenshi Abe\\
  CyberAgent \\
  Tokyo, Japan \\
  \texttt{abe\_kenshi@cyberagent.co.jp} \\
  \And
  Kaito Ariu \\
  CyberAgent \\
  Tokyo, Japan \\
  \texttt{kaito\_ariu@cyberagent.co.jp} \\
}
\begin{document}

\maketitle

\begin{abstract}
{\color{black}
In the Lasry--Lions framework, Mean-Field Games (MFGs) model interactions among an infinite number of agents.
However, existing algorithms either require strict monotonicity or only guarantee the convergence of averaged iterates, as in Fictitious Play in continuous time.
We address this gap with the following theoretical result.
First, we prove that the last-iterated policy of a proximal-point (PP) update with KL regularization converges to an equilibrium of MFG under non-strict monotonicity.
Second, we see that each PP update is equivalent to finding the equilibria of a KL-regularized MFG.
We then prove that this equilibrium can be found using Mirror Descent (MD) with an exponential last-iterate convergence rate.
Building on these insights, we propose the Approximate Proximal-Point ($\mathtt{APP}$) algorithm, which approximately implements the PP update via a small number of MD steps.
Numerical experiments on standard benchmarks confirm that the $\mathtt{APP}$ algorithm reliably converges to the unregularized mean-field equilibrium without time-averaging.
}
\end{abstract}
\vspace{-\baselineskip}
\section{Introduction}\label{sec:intro}
Mean Field Games (MFGs) provide a simple and powerful framework for approximating the behavior of large populations of interacting agents. Formulated initially by \textcite{LasryLions07,Huang06}, MFGs model the collective behavior of homogeneous agents in continuous time and state settings using partial differential equations \parencite{Cardaliaguet17,Lavigne2023,inoue2023}. The formulation of MFGs using Markov decision processes (MDPs) in \textcite{MR511544,MR1270015} has enabled the study of discrete-time and discrete-state models \parencite{GOMES2010308}.

In this context, a player’s policy $\pi$, i.e., a probability distribution over actions, induces the so-called mean field $\mu$. 
This mean field $\mu$---namely, the distribution of all players’ states---then affects both the state-transition dynamics and the rewards received by every agent.
This simple formulation has broadened the applicability of MFGs to Multi-Agent Reinforcement Learning (MARL) \parencite{YangLLZZW18,GuoHXZ19,DBLP:journals/mcss/AngiuliFL22,DBLP:conf/aistats/ZamanKBB23,angiuli2024analysismultiscalereinforcementqlearning}. Moreover, it has become possible to capture interactions among heterogeneous agents \parencite{DBLP:conf/cdc/GaoC17,DBLP:conf/cdc/CainesH19}.

The applicability of MFGs to MARL drives research into the theoretical aspects of numerical algorithms for MFGs.
Under fairly general assumptions, the problem of finding an equilibrium in MFGs is known to be PPAD-complete \parencite{DBLP:conf/atal/YardimGH24}. 
Consequently, it is essential to impose assumptions that allow for the existence of algorithms capable of efficiently computing an equilibrium.
One such assumption is contractivity \parencite{XieYWM21,Anahtarci2023,pmlr-v202-yardim23a}.
However, many MFG instances are known to be non-contractive in practice \parencite{cui21a}.
A more realistic assumption is the Lasry--Lions-type monotonicity employed in \textcite{2022perolat,zhang2023learning,yardim2024exploitingapproximatesymmetryefficient}, which intuitively implies that a player's reward monotonically decreases as more agents converge to a single state.
Under the monotonicity assumption, Online Mirror Descent (OMD) has been proposed and widely adopted \parencite{2022perolat,CuiK22,pmlr-v162-lauriere22a,pmlr-v206-fabian23a}. OMD, especially when combined with function approximation via deep learning, has enabled the application of MFGs to MARL \parencite{yang2021overviewmultiagentreinforcementlearning,Zhang2021,cui2022surveylargepopulationsystemsscalable}.

Theoretically, \emph{last-iterate convergence (LIC)} without time-averaging is particularly important in deep learning settings due to the constraints imposed by neural networks (NNs), as it ensures that the policy obtained in the last iteration converges.
In NNs, computing the time-averaged policy as in the celebrated Fictitious Play method \parencite{brownfp1951,PerrinPLGEP20} may be less meaningful due to nonlinearity in the parameter space.
This motivation has spurred significant research into developing algorithms that achieve LIC in finite $N$-player games, as seen in, e.g., \textcite{Mertikopoulos18,DBLP:conf/nips/PiliourasSS22,pmlr-v206-abe23a,abe2024adaptively}.
However, in the case of MFGs, the results on LIC under realistic assumptions are limited.
We refer the reader to read \cref{sec:related,sec:detail_review} to review the existing results in detail.

We aim to develop a simple method to achieve LIC for MFGs with a realistic assumption.
The first result of this paper is the development of a Proximal Point (PP) method using Kullback--Leibler (KL) divergence.
We establish a novel convergence result in \cref{thm:last_iterate_conv}, showing that the PP method achieves LIC under the monotonicity assumption.
When attempting to obtain convergence results in MFG, one faces the difficulty of controlling the mean field $\mu$, which changes along with the iterative updates of the policy $\pi$.
We overcome this difficulty using the \L{}ojasiewicz inequality, a classical tool from real analytic geometry.

We further propose the Approximate Proximal Point ($\APP$) method to make the PP method feasible, which can be interpreted as an approximation of it.
Here, we show that one iteration of the PP method corresponds to finding an equilibrium of the MFG regularized by KL divergence.
This insight leads to the idea of approximating the iteration of PP by regularized Mirror Descent ($\RMD$) introduced by \textcite{zhang2023learning}.
Our second theoretical result, {\color{black}presented in \cref{lem:monotonically_decresing}, is the LIC of $\RMD$ with an exponential rate}.
{\color{black}This result is a significant improvement over previous studies that only showed the convergence of the time-averaged policy or convergence at a polynomial rate.
In the proof, the dependence of the mean field $\mu$ on the policy $\pi$ makes it difficult to readily exploit the Lipschitz continuity of the $Q$-function.
We address this issue by utilizing the regularizing effect of the KL divergence.
}

Our experimental results also demonstrate LIC.
The $\APP$ method can be implemented by making only a small modification to the $\RMD$ and experimentally converges to the (unregularized) equilibrium.

In summary, the contributions of this paper are as follows:
\begin{codecontribution}
\begin{enumerate}
    \item We present an algorithm based on the celebrated PP method and, for the first time, establish LIC for \emph{non-strictly} monotone MFGs (\cref{thm:last_iterate_conv}).
    \item We show that one iteration of the PP method is equivalent to solving the regularized MFG, {\color{black}which can be solved exponentially fast by $\RMD$} (\cref{lem:monotonically_decresing}).
    \item Based on these two theoretical findings, we develop the $\APP$ method as an efficient approximation of the PP method (\cref{alg:reward_transform}).
\end{enumerate}
\end{codecontribution}

The organization of this paper is as follows: In \cref{sec:prelim}, we review the fundamental concepts of MFGs. 
In \cref{sec:slingshot}, we introduce the PP method and its convergence results.
In \cref{sec:RMFG}, we present the $\RMD$ algorithm and its convergence properties.
Finally, in \cref{sec:practical}, we propose a combined approximation method, demonstrating its convergence through experimental validation.
\cref{sec:related} provides a review of related works.
\vspace{-\baselineskip}
\section{Problem setting and preliminary facts}
\label{sec:prelim}
\paragraph{Notation:} For a positive integer $N\in\N$, $[N]\coloneqq\{1,\dots,N\}$.
    For a finite set $X$, $\Delta(X)\coloneqq\{p\in\R_{\geq0}^{\abs{X}} \mid \sum_{x\in X} p(x)=1\}$.
    For a function $f\colon X\to\R$ and a probability $\pi\in\Delta(X)$, $\la f,\pi\ra\coloneqq\la f(\bullet),\pi(\bullet)\ra\coloneqq\sum_{x\in X}f(x)\pi(x)$.
    For $p^0$, $p^1\in\Delta(X)$, define the KL divergence $\KL(p^0,p^1)\coloneqq\sum_{x\in X}p^0(x)\log\qty({p^0(x)}/{p^1(x)})$, and the $\ell^1$ distance as $\norm{p^0-p^1}\coloneqq\sum_{x\in X}\abs{p^0(x)-p^1(x)}$.
\subsection{Mean-field games}
Consider a model based \emph{Mean-Field Game (MFG)} that is defined through a tuple $(\State,\A,H,P,r,\mu_1)$. 
Here, $\State$ is a finite discrete space of states, $\A$ is a finite discrete space of actions, $H\in\N_{\geq2}$ is a time horizon, and $P=(P_h)_{h=1}^H$ is a sequence of transition kernels $P_h\colon\State\times\A\to\Delta(\State)$, that is, if a player with state $s_h\in\State$ takes action $a_h\in\A$ at time $h\in[H]$, the next state $s_{h+1}\in\State$ will transition according to $s_{h+1}\sim P_h\given{\cdot}{s_h,a_h}$.
In addition, $r=(r_h)_{h=1}^H$ is a sequence of reward functions $r_h\colon\State\times\A\times\Delta(\State)\to[0,1]$, and $\mu_1\in\Delta(\State)$ is an initial probability of state.
Note that, in the context of theoretical analysis of the online learning method for MFG \parencite{2022perolat,zhang2023learning}, $P$ is assumed to be independent of the state distribution.
It is reasonable to assume that at any time $h$, every
state $s^\prime\in\State$ is reachable:
\begin{assumption}
    \label{assump:transition_kernel}
    For each $(h,s^\prime)\in[H]\times\State$, there exists $(s,a)\in\State\times\A$ such that $P_h\given{s^\prime}{s,a}>0$.
\end{assumption}
Note that it does \emph{not} require that, for any state $ s^\prime\in\State $, it is reachable by \emph{any} state-action pair $ (s, a)\in\State\times\A$.
\begin{remark}
Our analysis excludes cases in which $P$ depends on $\mu$, as studied in \textcite{DBLP:conf/aistats/ZamanKBB23,zeng2024policy}).
These studies also rely on other conditions such as contraction or herding, which differ in nature from our monotonicity assumption.
Extending the analysis to a $\mu$-dependent $P$ requires a different approach than that in the existing literature, e.g., \textcite{2022perolat,zhang2023learning}.
A full treatment of the case is left for future work.
\end{remark}
Given a policy $\pi$, the probabilities $m[\pi]=(m[\pi]_h)_{h=1}^H\in\MeanfieldSet$ of the state is recursively defined as follows: $m[\pi]_{1}=\mu_1$ and
\setlength{\abovedisplayskip}{2pt}
\setlength{\belowdisplayskip}{2pt}
\begin{equation}
\begin{aligned}
    m[\pi]_{h}(s_{h})=
    {\sum\limits_{\substack{s_{h-1}\in\State,
    a_{h-1}\in\A}}}\pi_{h-1}\given{a_{h-1}}{s_{h-1}}
     P_{h-1}\given{s_{h}}{s_{h-1},a_{h-1}}
     m[\pi]_{h-1}(s_{h-1}),
\end{aligned}
    \label{eq:forward_eq}
\end{equation}
if $h=2,\dots,H$.
We aim to maximize the following cumulative reward
\begin{equation}
    J(\mu,\pi)\coloneqq
    {\sum_{(h,s,a)\in[H]\times\State\times\A}}\pi_h\given{a}{s}m[\pi]_h(s)r_h(s,a,\mu_h),
    \label{eq:cummulative_reward_unreg}
\end{equation}
with respect to the policy $\pi$, given a sequence of state distributions $\mu\in\MeanfieldSet$.
The \emph{mean-field equilibrium} defined below means the pair of probabilities $\mu$ and policies $\pi$ that achieves the maximum under the constraints \eqref{eq:forward_eq}.
\begin{definition}
    A pair $(\mu^\star,\pi^\star)\in\MeanfieldSet\times\PolicySet$ is a  \emph{mean-field equilibrium} if it satisfies (i) $J(\mu^\star,\pi^\star)=\max_{\pi\in\MeanfieldSet}J(\mu^\star,\pi)$, and (ii) $\mu^\star=m[\pi^\star]$.
    \label{def:NE}
    In addition, set $\Pi^\star\subset\PolicySet$ as the set of all policies that are in mean-field equilibrium.
\end{definition}

Under \cref{assump:monotonicity,assump:Lip} below, there exists a mean-field equilibrium, see the proof of \textcite[Theorem 3.3.]{Saldi18} and \textcite[Proposition 1.]{2022perolat}.
Note that the equilibrium may not be unique if the inequality given below in \cref{assump:monotonicity} is non-strict.
In other words, the set $\Pi^\star\subset\PolicySet$ is not a singleton in general.
As an illustrative example, one might consider the trivial case where $r\equiv0$.
Our goal is to construct an algorithm that approximates a policy in $\Pi^\star$.

In this paper, we focus on rewards $r$ that satisfy the following two typical conditions, which are also assumed in \textcite{PerrinPLGEP20,DBLP:conf/aaai/PerrinLPEGP22,2022perolat,pmlr-v206-fabian23a,zhang2023learning}.
The first one is \emph{monotonicity} of the type introduced by \textcite{LasryLions07}, which means, under a state distribution $\mu=(\mu_h)_{h=1}^H\in\MeanfieldSet$, if players choose a strategy---called a policy $\pi=(\pi_h)_{h=1}^H\in\PolicySet$ to be planned---that concentrates on a state or action, they will receive a small reward.
\begin{assumption}[Weak monotonicity of $r$]
\label{assump:monotonicity}
    For all $\pi$, $\widetilde{\pi}\in\PolicySet$, it holds that
    \begin{equation}
        \sum_{h=1}^H\sum_{\substack{(s,a)\in\State\times\A}}\qty(r_h(s,a,\mu^\pi_h)-r_h(s,a,\mu_h^{\widetilde{\pi}}))
        \qty(\rho_h(s,a)-\widetilde{\rho}_h(s,a))\leq0,\label{eq:monotone}
    \end{equation}
    where we set  $\mu^\pi=m[\pi]$, $\rho_h(s,a)\coloneqq\pi_h\given{a}{s}\mu^\pi_h(s)$ and $\widetilde{\rho}_h(s,a)\coloneqq \widetilde{\pi}_h\given{a}{s}\mu^{\widetilde{\pi}}_h(s)$.
\end{assumption}
A reward $r$ satisfying \cref{assump:monotonicity} is said to be \emph{monotone}.
Furthermore, $r$ is said to be \emph{strictly monotone} if the equality in \eqref{eq:monotone} holds only if $\pi=\widetilde{\pi}$.
Although most of the previous papers provide theoretical analysis under strict monotonicity, this excludes the case where the transition is symmetric.
We demonstrate that such structures inherently allow the existence of distinct policies generating identical state distributions, leading to the failure of strict monotonicity.
\begin{Example}[(Failure of strict monotonicity in symmetric transitions)]
In general, \emph{symmetry of $P$} with respect to states in MFGs violates strict monotonicity, while preserving monotonicity.
Consider an MFG with \emph{symmetric transition dynamics}, e.g., consider an MDP defined on \(\State = \{s_1,s_2\}\), \(\A = \{a_1,a_2\}\), \(H \ge 2\), \(\mu_1 = \qty(\frac12,\frac12)\).
For each \(h\in[H]\), the transition kernels are
\(
P_h\given{s'}{s,a=a_1}
= \begin{pmatrix} 0.2 & 0.8 \\ 0.8 & 0.2 \end{pmatrix},
P_h\given{s'}{s,a=a_2}
= \begin{pmatrix} 0.7 & 0.3 \\ 0.3 & 0.7 \end{pmatrix}.
\)
If we take the policy $\pi$ such that \(
\pi_h\given{a_1}{s} = 1,\; \pi_h\given{a_2}{ s}=0,\;\widetilde\pi_h\given{a_1}{s}=0,\;\widetilde\pi_h\given{a_2}{s}=1,
\)
for all \(s\in\State\) and \(h\in[H]\), we can see that
\(
m[\pi]_h
= m[\widetilde{\pi}]_h
= (0.5,0.5)
\) for all $h$.
Let the reward be of the form
\(
r_h(s,a,\mu)
= R_h(s,a)-f\bigl(\mu(s)\bigr)
\) with a non-decreasing function $f\colon[0,1]\to\R$ such as $f(x)=x$, which models a crowd that avoids overcrowding.
Then the monotonicity condition holds for the case $\pi\neq\widetilde{\pi}$.
However, strict monotonicity would demand that equality occur
\emph{only} if \(\pi=\widetilde\pi\).
In this example, whenever \(m[\pi]=m[{\widetilde\pi}]\) (here the uniform distribution), the above sum is zero even if \(\pi\neq\widetilde\pi\). 
Hence, the game is monotone but \emph{not} strictly monotone.
Such phenomena are limitations in games with balanced transitions.
\end{Example}

The second is the Lipschitz continuity of $r$ with respect to $\mu\in(\Delta(\State))^H$, which is standard in the field of MFGs \parencite{cui21a,pmlr-v206-fabian23a,zhang2023learning}.
\begin{assumption}[Lipschitz continuity of $r$]
    \label{assump:Lip}
    There exists a constant $L$ such that for every $h\in[H]$, $s\in\State$, $a\in\A$, and $\mu$, $\mu^\prime\in\Delta(\State)$:
    \(
        \abs{r_h(s,a,\mu)-r_h(s,a,\mu^\prime)} \leq 
        L\norm{\mu-\mu^\prime}.
    \)
\end{assumption}
\vspace{-\baselineskip}
\section{Proximal point-type method for MFG}
\label{sec:slingshot}
This section presents an algorithm motivated by the Proximal Point (PP) method.
Let $\lambda>0$ be a sufficiently small positive number, roughly ``the inverse of learning rate.''
In the algorithm proposed in this paper, we generate a sequence $\qty((\sigma^k,\mu^k))_{k=0}^\infty\subset\PolicySet\times\Delta(\State)^{H}$ as
\setlength{\abovedisplayskip}{2pt}
\setlength{\belowdisplayskip}{2pt}
\begin{equation}
    \sigma^{k+1}=\argmax_{\pi\in\PolicySet}\qty{J(\mu^{k+1},\pi)-\lambda D_{m[\pi]}(\pi,\sigma^k)},\quad
    \mu^{k+1}=m[\sigma^{k+1}],\label{eq:slingshot_update}
\end{equation}
%
         
where $m$ is defined in \eqref{eq:forward_eq} and $D_\mu(\pi,\sigma^k)\coloneqq\sum_h\Expect_{s\sim \mu_h}\qty[\KL(\pi_h\givenbullets,\sigma^k_h\givenbullets)]$ with a probability $\mu\in\MeanfieldSet$.
If the initial policy $\pi^0$ has full support, i.e., $\min_{(h,s,a)\in [H]\times\State\times\A}\pi^0_h\given{a}{s}>0$, the rule \eqref{eq:slingshot_update} is well-defined, see \cref{prop:well-definedness_sling}.

Interestingly, the rule \eqref{eq:slingshot_update} is similar to the traditional Proximal Point (PP) method with KL divergence in mathematical optimization and Optimal Transport, see \textcite{Censor1992,XieWWZ19}.
Therefore, we also refer to this update rule as the PP method.
The well-known (O)MD in \textcite{2022perolat} can be viewed as a linearization of the objective 
$J$ inside \eqref{eq:slingshot_update}.
Consequently, PP---which uses the full, un-linearised $J$--- is expected to be less sensitive to approximation error, resulting in more robust convergence under non-strict monotonicity than MD.
On the other hand, unlike the traditional PP method, our method changes the objective function $J(\mu^k,\bullet)\colon\PolicySet\to\R$ with each iteration $k\in\N$.
Therefore, it is difficult to derive a theoretical convergence result of our traditional method from traditional theory.
See also \cref{rmk:1}.
%

\subsection{Last-iterate convergence result}
The following theorem implies the last-iterate convergence of the policies generated by \eqref{eq:slingshot_update}.
Specifically, it shows that under the assumptions above, the sequence of policies converges to the equilibrium set.
This result is crucial for the effectiveness of the algorithm in reaching an optimal policy.
\begin{CatchyBox}{}
    \begin{theorem}
    \label{thm:last_iterate_conv}
    Let $(\sigma^k)_{k=0}^\infty$ be the sequence defined by \eqref{eq:slingshot_update}.
    In addition to \cref{assump:transition_kernel,assump:Lip,assump:monotonicity}, assume that the initial policy $\pi^0$ has full support, i.e., $\min_{(h,s,a)\in [H]\times\State\times\A}\pi^0_h\given{a}{s}>0$. 
    Then, the sequence $(\sigma^k)_{k=0}^\infty$ converges to the set $\Pi^\star$ of equilibrium, i.e., \(\lim_{k\to\infty}\dist(\sigma^k,\Pi^\star)=0,\) where we set \(    \dist(\sigma,\Pi^\star)\coloneqq\inf_{\pi^\star\in\Pi^\star}\sum_{(h,s)\in[H]\times\State}\norm{\sigma_h\givenbullets-\pi^\star_h\givenbullets}_1\) for $\sigma\in\PolicySet$.
\end{theorem}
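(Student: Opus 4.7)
The plan is to pick an arbitrary equilibrium $\pi^\star\in\Pi^\star$ with $\mu^\star\coloneqq m[\pi^\star]$, introduce the Lyapunov function $\Phi_k\coloneqq D_{\mu^\star}(\pi^\star,\sigma^k)$, show that $\Phi_k$ is monotonically non-increasing (this should be the content of the referenced \cref{lem:monotonically_decresing}), and then extract a subsequence whose limit must lie in $\Pi^\star$. First I would write down the variational characterization of the update \eqref{eq:slingshot_update}: since $\sigma^{k+1}$ is a maximizer for the regularized objective at $\mu^{k+1}$, testing against $\pi=\pi^\star$ yields
$$J(\mu^{k+1},\sigma^{k+1})-\lambda D_{\mu^{k+1}}(\sigma^{k+1},\sigma^k)\geq J(\mu^{k+1},\pi^\star)-\lambda D_{\mu^\star}(\pi^\star,\sigma^k).$$
To relate this to $\Phi_{k+1}=D_{\mu^\star}(\pi^\star,\sigma^{k+1})$, I would use the KKT/softmax form of the regularized best response, $\sigma^{k+1}_h\given{a}{s}\propto\sigma^k_h\given{a}{s}\exp(Q_h(s,a)/\lambda)$ with $Q_h$ the regularized $Q$-function against $(\mu^{k+1},\sigma^k)$, together with a three-point identity for the KL divergence; summing weighted by $\mu^\star$ then converts the right-hand side into a quantity involving both $\Phi_k-\Phi_{k+1}$ and the regularized value gap at $\sigma^{k+1}$.

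Second, I would deploy the Lasry--Lions monotonicity (\cref{assump:monotonicity}) applied to $(\sigma^{k+1},\mu^{k+1})$ versus $(\pi^\star,\mu^\star)$ to obtain
$$J(\mu^{k+1},\pi^\star)-J(\mu^{k+1},\sigma^{k+1})\geq J(\mu^\star,\pi^\star)-J(\mu^\star,\sigma^{k+1})\geq 0,$$
where the last inequality uses that $\pi^\star$ is a best response against $\mu^\star$. Plugging this back into the previous display should yield a descent bound $\Phi_{k+1}\leq\Phi_k-\delta_k$ for some nonnegative $\delta_k$ that quantifies the regularized exploitability at $\sigma^{k+1}$ and vanishes exactly when $\sigma^{k+1}$ is an equilibrium.

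Third, since $\Phi_k\geq 0$ is non-increasing it converges, forcing $\delta_k\to 0$ and $\sum_k\delta_k<\infty$. Compactness of $\PolicySet$ then gives a subsequence $\sigma^{k_j}\to\sigma^\infty$; continuity of $J$ and $m[\cdot]$, together with $\delta_{k_j}\to 0$ and the Lipschitz assumption \cref{assump:Lip}, lets me pass to the limit in the optimality conditions and conclude $\sigma^\infty\in\Pi^\star$. Substituting $\pi^\star=\sigma^\infty$ into the Lyapunov argument then yields $\Phi_k\to 0$ along the whole sequence, and Pinsker's inequality upgrades KL convergence to TV convergence, giving $\dist(\sigma^k,\Pi^\star)\to 0$.

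The main obstacle is establishing the descent bound: weak (non-strict) monotonicity precludes any geometric contraction, and the weighting measure in the proximal term is $m[\pi]$ while the Lyapunov measure is $\mu^\star$, so the three-point identity and the optimality condition must be reconciled by careful reweighting against $\mu^\star$. A secondary subtlety is ensuring $\mu^\star$ puts mass on enough states for $\Phi_k$ to remain finite throughout, and for Pinsker to translate KL convergence into convergence of $\sigma^k_h\given{\cdot}{s}$ for every $(h,s)$ relevant to $\dist(\cdot,\Pi^\star)$; this should follow from the reachability \cref{assump:transition_kernel} combined with the full-support of $\pi^0$ propagating through the softmax form of the updates.
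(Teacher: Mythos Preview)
Your first two steps are essentially the paper's \cref{lem:decreasing_lem_slingshot}: from the softmax optimality condition of the proximal update and the three-point identity, weighted by $\mu^\star$ and combined with monotonicity (\cref{lem:J-monotonicity}), one obtains
\[
D_{\mu^\star}(\pi^\star,\sigma^{k+1})-D_{\mu^\star}(\pi^\star,\sigma^k)\;\le\;J(\mu^\star,\sigma^{k+1})-J(\mu^\star,\pi^\star)\;=:\;-\delta_k.
\]
(Incidentally, the label you cite, \texttt{lem:monotonically\_decresing}, points to the RMD convergence theorem, not to this descent lemma.) Where you diverge from the paper is in how you exploit $\sum_k\delta_k<\infty$. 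The paper does \emph{not} argue by compactness and limit passage; it invokes the \L{}ojasiewicz inequality for the real-analytic map $\pi\mapsto J(\mu^\star,\pi)$ (\cref{lem:Lojasiewicz}) to obtain $\delta_k\ge C\,\dist(\sigma^{k+1},\Pi^\star)^\alpha$ directly, telescopes to $\sum_k\dist(\sigma^k,\Pi^\star)^\alpha<\infty$, and concludes immediately without any subsequence or Pinsker step.

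Your compactness route has a genuine gap. The quantity $\delta_k$ that telescoping drives to zero is the best-response gap \emph{against the fixed measure} $\mu^\star$, not the exploitability at $\mu^{k+1}=m[\sigma^{k+1}]$. Hence $\delta_{k_j}\to 0$ only places a limit point $\sigma^\infty$ in $\argmax_\pi J(\mu^\star,\pi)$, and under merely weak monotonicity this set can be strictly larger than $\Pi^\star$; the monotonicity inequality applied to $(\sigma^\infty,m[\sigma^\infty])$ versus $(\pi^\star,\mu^\star)$ yields only $J(m[\sigma^\infty],\sigma^\infty)\le J(m[\sigma^\infty],\pi^\star)$, which does not certify that $\sigma^\infty$ is a best response to its \emph{own} mean field. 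Your alternative, ``pass to the limit in the optimality conditions'' of the PP step, requires $\sigma^{k_j+1}-\sigma^{k_j}\to 0$ so that the $\log(\sigma^{k+1}/\sigma^k)$ term disappears, and nothing in the descent bound furnishes this. A secondary issue: even granting $\sigma^\infty\in\Pi^\star$ and $D_{m[\sigma^\infty]}(\sigma^\infty,\sigma^k)\to 0$, Pinsker only controls the $m[\sigma^\infty]$-weighted TV, whereas $\dist(\cdot,\Pi^\star)$ is an \emph{unweighted} sum over all $(h,s)$; \cref{assump:transition_kernel} does not force $m[\sigma^\infty]$ to have full support, so the final upgrade is not automatic. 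The \L{}ojasiewicz argument sidesteps both difficulties because the unweighted distance appears directly in the inequality.
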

\end{CatchyBox}
{\color{black}
Note that \cref{thm:last_iterate_conv} no longer relies on the \emph{strict}-monotonicity imposed in earlier works \parencite{HADIKHANLOO2019369,aaaiEliePLGP20,2022perolat}.
Moreover, unlike the continuous‐time results of \textcite{PerrinPLGEP20,2022perolat}, it applies directly to the discrete‐time scheme \eqref{eq:slingshot_update}.
}
\begin{SketchOfProof}[\cref{thm:last_iterate_conv}]
If we accept the next lemma, we can easily prove \cref{thm:last_iterate_conv}:
\begin{CatchyBox3}{}
    \begin{lemma}
    \label{lem:decreasing_lem_slingshot}
    Suppose \cref{assump:monotonicity}.
    Then, for any equilibrium $(\mu^\star,\pi^\star)$ it holds that
\begin{align}
    D_{\mu^\star}(\pi^\star,\sigma^{k+1})-D_{\mu^\star}(\pi^\star,\sigma^k)
    \le{}& J(\mu^{\star},\sigma^{k+1})-J(\mu^{\star},\pi^{\star})-D_{\mu^{k+1}}(\sigma^{k+1},\sigma^k)\nonumber\\
    \le{}& J(\mu^{\star},\sigma^{k+1})-J(\mu^{\star},\pi^{\star}).\label{eq:J-J-D}
\end{align}
\end{lemma}
\end{CatchyBox3}
\cref{lem:decreasing_lem_slingshot} implies that the KL divergence from an equilibrium point to the generated policy becomes smaller as the cumulative reward $J$ increases.
We note that the function $J(\mu^\star,\bullet)\colon\PolicySet\ni\pi\mapsto J(\mu^\star,\pi)\in\R$ is a polynomial, thus real-analytic.
Then we apply \parencite[\S18, Théorème 2]{Lojasiewicz71} and find that there exist positive constants $\alpha$ and $C$ satisfying
    \(
                J(\mu^{\star},\pi)-J(\mu^{\star},\pi^{\star})\leq-C\qty(\dist(\pi,\Pi^\star))^\alpha,
    \)
    for any $\pi\in\PolicySet$.
    Combining the above two inequalities yields that 
    \(
        D_{\mu^\star}(\pi^\star,\sigma^{k+1})-D_{\mu^\star}(\pi^\star,\sigma^k)\leq-C\qty(\dist(\sigma^{k+1},\Pi^\star))^\alpha.
    \)
    Thus, the telescoping sum of this inequality yields
    \(
    \sum_{k=1}^\infty\qty(\dist(\sigma^{k},\Pi^\star))^\alpha\leq\frac{D_{\mu^\star}(\pi^\star,\sigma^0)}{C}<+\infty,
    \)
    which implies $\lim_{k\to\infty}\dist(\sigma^{k},\Pi^\star)=0$.
\end{SketchOfProof}
\begin{remark}[Challenges in the proof of \cref{thm:last_iterate_conv}]\label{rmk:1}
    The technical difficulty in the proof lies in the term $D_{\color{red}\mu^{k+1}}(\sigma^{k+1},\sigma^k)$ in \eqref{eq:J-J-D}.
    If it were not dependent on $\mu$, that is, $D_{\color{red}\mu^{k+1}} = D_{\color{red}\mu^\star}$, then LIC would follow straightforwardly from 
    $D_{\mu^\star}(\pi^\star,\sigma^{k+1}) - D_{\mu^\star}(\pi^\star,\sigma^k) \le -D_{\color{red}\mu^\star}(\sigma^{k+1},\sigma^k)$,
    where we use \cref{def:NE} and the second line of \eqref{eq:J-J-D}.
    However, $D_{\mu^{k+1}}$ changes depending on $k$.
    Therefore, in the above proof, we have made a special effort to avoid using $D_{\mu^{k+1}}(\sigma^{k+1},\sigma^k)$.
    {\color{black}
    One may have seen proofs employing the simple argument described above in games other than MFG, such as monotone games~\parencite{Rosen65}.
    The reason why such an argument is possible in monotone games is that the mean field $\mu$ does not appear.
    This difference makes it difficult to use the straightforward argument described above in MFGs.
    }
\end{remark}
\vspace{-\baselineskip}
\section{Approximating proximal point with mirror descent in regularized MFG}
\label{sec:RMFG}
As in the PP method, it is necessary to find $(\mu^{k+1}, \sigma^{k+1})$ at each iteration.
However, it is difficult to exactly compute $(\mu^{k+1},\sigma^{k+1})$ due to the implicit nature of \eqref{eq:slingshot_update}.
\begin{wrapfigure}[15]{r}{0.224\textwidth}
\centering
\vspace{-14pt}
    \includegraphics[width=\linewidth]{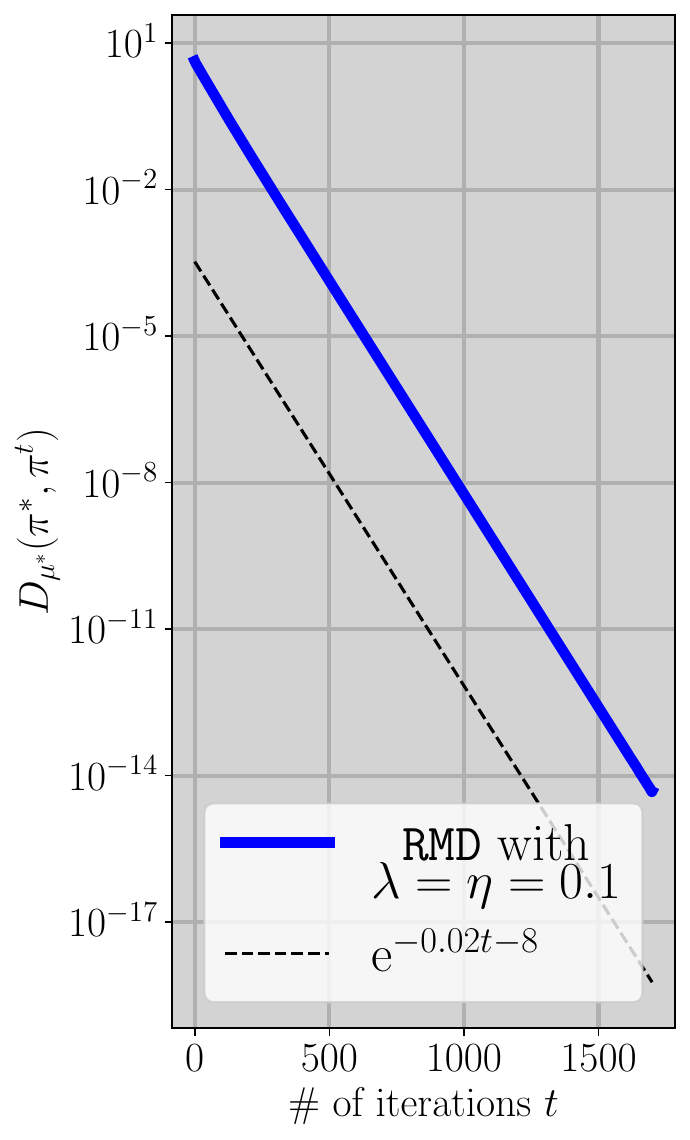}
\caption{Behavior of $\RMD$.}
\label{fig:RMD} 
\end{wrapfigure}
Therefore, this section introduces Regularized Mirror Descent ($\RMD$), which approximates the solution $(\mu^{k+1}, \sigma^{k+1})$ for each policy $\sigma^k$.
The novel result in this section is that the divergence between the sequence generated by $\RMD$ and the equilibrium decays exponentially as shown in \cref{fig:RMD}.
\subsection{Approximation of the update rule of PP with regularized MFG}
Interestingly, solving \eqref{eq:slingshot_update} corresponds to finding an equilibrium for \emph{KL-regularized MFG} introduced in \textcite{cui21a,zhang2023learning}.
We review the settings for the regularized MFG.
For each parameter $\lambda>0$ and policy $\sigma\in\PolicySet$, which plays the role of $\sigma^k$ in \eqref{eq:slingshot_update}, we define the \emph{regularized cumulative reward} $J^{\lambda,\sigma}(\mu,\pi)$ for $(\mu,\pi)\in\MeanfieldSet\times\PolicySet$ to be
\begin{equation}
\begin{aligned}
    J^{\lambda,\sigma}(\mu,\pi)\coloneqq
    J(\mu,\pi)-\lambda D_{m[\pi]}(\pi,\sigma).
\end{aligned}
    \label{eq:cummulative_reward}
\end{equation}
The assumption of full support is also imposed on $\sigma$:
\begin{assumption}
    \label{assump:full_spt}
    The base $\sigma$ has full support, i.e., the minimum value given by
    \[
    \sigma_{\textup{min}}\coloneqq
    \min_{(h,s,a)\in[H]\times\State\times\A}\sigma_h\given{a}{s}
    \] is strictly positive.
\end{assumption}
For the reward $J^{\lambda,\sigma}$, we introduce a \emph{regularized equilibrium}:
\begin{definition}
    A pair $(\mu^\ast,\varpi^\ast)\in\MeanfieldSet\times\PolicySet$ is \emph{regularized equilibrium} of $J^{\lambda,\sigma}$ if it satisfies
        (i) $J^{\lambda,\sigma}(\mu^\ast,\varpi^\ast) = \max_{\pi\in\MeanfieldSet}J^{\lambda,\sigma}(\mu^\ast,\pi)$, and
        (ii) $\mu^\ast=m[\varpi^\ast]$.
    \label{def:regNE}
\end{definition}
Specifically, $(\mu^{k+1},\sigma^{k+1})$ can be characterized as the regularized equilibrium of $J^{\lambda,\sigma^k}$ for $k\in\N$.
Note that the equilibrium is unique under \cref{assump:full_spt}, see \cref{sec:conti_time}.

In the next subsection, we will introduce $\RMD$ using \emph{value functions}, which are defined as follows: for each $h\in [H]$, $s\in\State$, $a\in\A$, $\mu\in\Delta(\State)^{H}$ and $\pi\in\Delta(\A)^\State$, define the \emph{state value function} $V_h^{\lambda,\sigma}\colon\State\times\Delta(\State)^{H}\times\PolicySet\to\R$ and the \emph{state-action value function} $Q_h^{\lambda,\sigma}\colon\State\times\A\times\Delta(\State)^{H}\times\PolicySet\to\R$ as
\begin{align}
    &
    \begin{aligned}
    &V^{\lambda,\sigma}_h(s,\mu,\pi)\coloneqq\Expect_{((s_l,a_l))_{l=h}^H}\qty[\sum\limits_{l=h}^H\qty(r_l(s_l,a_l,\mu_l)-\lambda\KL(\pi_l\givenbullet{s_l},\sigma_l\givenbullet{s_l}))],&V^{\lambda,\sigma}_{H+1}\equiv0,
    \end{aligned}
    \label{eq:state_value}\\
    &
    \begin{aligned}
    &Q_h^{\lambda,\sigma}(s,a,\mu,\pi)\coloneqq r_h(s,a,\mu_h)
         +\Expect_{s_{h+1}\sim P\givenbullet{s,a,\mu_h}}[V_{h+1}^{\lambda,\sigma}(s_{h+1},\mu,\pi)].
    \end{aligned}
    \label{eq:state_action_value}
\end{align}
Here, the discrete-time stochastic process $((s_l,a_l))_{l=h}^H$ is induced recursively by $s_h=s$ and $s_{l+1}\sim P_l\givenbullet{s_l,a_l},a_{l}\sim\pi_l\givenbullet{s_{l}}$ for each $l\in\{h,\dots,H-1\}$ and $a_H\sim\pi_H\givenbullet{s_H}$.
Note that the objective function $J^{\lambda,\sigma}$ in \cref{def:regNE} can be expressed as $J^{\lambda,\sigma}(\mu,\pi)=\Expect_{s\sim\mu_1}[V_1^{\lambda,\sigma}(s,\mu,\pi)]$.

\subsection{An exponential convergence result}
In this subsection, we introduce the iterative method for finding the regularized equilibrium proposed by \textcite{zhang2023learning} as $\RMD$.
The method constructs a sequence $\qty((\pi^t,\mu^t))_{t=0}^\infty\subset\PolicySet\times\Delta(\State)^{H}$ approximating the regularized equilibrium of $J^{\lambda,\sigma}$ using the following rule:
%
\begin{align}
    \pi^{t+1}_h(s) &{}= \argmax_{p\in\Delta(\A)} \bigg\{ 
    \frac{\eta}{1-\lambda\eta} \bigg( 
    \left\langle Q_h^{\lambda,\sigma}(s,\bullet,\pi^t,\mu^t), p \right\rangle- \lambda D_{\textup{KL}}(p,\sigma_h(s)) 
    \bigg) 
    - D_{\textup{KL}}(p,\pi^t_h(s)) 
    \bigg\},&\nonumber\\
    \mu^{t+1} &{}= m[\pi^{t+1}],&\label{eq:update}
\end{align}
where $\eta>0$ is another learning rate, and $Q_h^{\lambda,\sigma}$ is the state-action value function defined in \eqref{eq:state_action_value}.
We give the pseudo-code of $\RMD$ in \cref{alg:rmd}.
For the sequence of policies in $\RMD$, we can establish the convergence result as follows:
\begin{CatchyBox}{}
    \begin{theorem}
    \label{lem:monotonically_decresing}
    Let $\qty((\mu^t,\pi^t))_{t=0}^\infty\subset\MeanfieldSet\times\PolicySet$ be the sequence generated by \eqref{eq:update}, and $(\mu^\ast,\varpi^\ast)\in\MeanfieldSet\times\PolicySet$ be the regularized equilibrium given in \cref{def:regNE}.
    In addition to \cref{assump:monotonicity,assump:Lip,assump:full_spt}, suppose that $\eta\leq\eta^\ast$, where $\eta^\ast>0$ is the upper bound of the learning rate defined in \eqref{eq:upperbound_eta}, which only depends on $\lambda$, $\sigma$, $H$ and $\abs{\A}$.
    
    Then, the sequence $(\pi^t)_{t=0}^\infty$ satisfies that for $t\in\N$
    \[
        \Dmuast(\varpi^\ast,\pi^{t+1})\leq\qty(1-\frac{\lambda\eta}{2})\Dmuast(\varpi^\ast,\pi^t),
    \]
    which leads $\Dmuast(\varpi^\ast,\pi^t)\leq\Dmuast(\varpi^\ast,\pi^0) \e^{-\lambda\eta t/2}$.
    Clearly, the inequality states that an approximate policy \(\pi^t\) satisfying \(\Dmuast(\varpi^\ast, \pi^t) < \varepsilon\) can be obtained in \(\order{\log\qty(1/\varepsilon)}\) iterations.
\end{theorem}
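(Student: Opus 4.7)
The plan is to establish a single-step contraction $\Dmuast(\varpi^\ast, \pi^{t+1}) \leq (1 - \lambda\eta/2)\Dmuast(\varpi^\ast, \pi^t)$, since iterating this yields the stated exponential decay and hence the $\order{\log(1/\varepsilon)}$ iteration complexity.

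My first step is to derive a three-point KL identity. Starting from the closed-form update in \cref{alg:reward_transform}, I would take logs and expand $\KL(\varpi^\ast_h(\cdot|s),\pi^{t+1}_h(\cdot|s))$; the normalisation constant $Z_h(s)$ is eliminated by the observation $\KL(\pi^{t+1}_h,\pi^{t+1}_h) = 0$. Averaging the resulting pointwise equality against $\mu^\ast_h(s)$ and summing over $h$ gives, modulo routine algebra,
\begin{align*}
  \Dmuast(\varpi^\ast,\pi^{t+1})
  &= (1-\lambda\eta)\,\Dmuast(\varpi^\ast,\pi^t) - (1-\lambda\eta)\,\Dmuast(\pi^{t+1},\pi^t) - \eta\,G_t,
\end{align*}
where the \emph{regularised $Q$-gap} is
\[
  G_t := \sum_{h,s}\mu^\ast_h(s)\bigl[\langle\Qlambdasigma(s,\cdot,\pi^t,\mu^t),\,\varpi^\ast_h - \pi^{t+1}_h\rangle - \lambda\bigl(\KL(\varpi^\ast_h,\sigma_h) - \KL(\pi^{t+1}_h,\sigma_h)\bigr)\bigr].
\]
Thus the target inequality reduces to the lower bound $G_t \geq -\tfrac{\lambda}{2\eta}\Dmuast(\varpi^\ast,\pi^t) - \tfrac{1-\lambda\eta}{\eta}\Dmuast(\pi^{t+1},\pi^t)$.

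Next I would split $G_t = G_t^\ast + \mathcal{R}_t$, where $G_t^\ast$ is the same quantity but with $\Qlambdasigma(\cdot,\cdot,\pi^t,\mu^t)$ replaced by $\Qlambdasigma(\cdot,\cdot,\varpi^\ast,\mu^\ast)$. For each $(h,s)$, the map $p\mapsto\langle\Qlambdasigma(s,\cdot,\varpi^\ast,\mu^\ast),p\rangle - \lambda\KL(p,\sigma_h)$ is $\lambda$-strongly concave with respect to its natural KL Bregman geometry, and \cref{def:regNE} makes $\varpi^\ast_h(\cdot|s)$ its unique maximiser; this yields $G_t^\ast \geq \lambda\Dmuast(\pi^{t+1},\varpi^\ast) \geq 0$, which would already suffice if $\mathcal{R}_t$ vanished. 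The residual
\[
  \mathcal{R}_t = \sum_{h,s}\mu^\ast_h(s)\,\langle\Qlambdasigma(s,\cdot,\pi^t,\mu^t) - \Qlambdasigma(s,\cdot,\varpi^\ast,\mu^\ast),\,\varpi^\ast_h - \pi^{t+1}_h\rangle
\]
is a linear functional of the $Q$-shift between the two reference arguments, and unrolling the Bellman recursion and invoking the Lipschitz continuity of $r$ (\cref{assump:Lip}) produces an $O(HL)$-type pointwise bound in $\|\mu^t - \mu^\ast\|$, with the tail contribution in $\|\pi^t - \varpi^\ast\|$ controlled analogously.

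The main obstacle is that a naive Lipschitz/Pinsker bound on $\mathcal{R}_t$ is only \emph{first}-order in the deviation of $(\pi^t,\mu^t)$ from $(\varpi^\ast,\mu^\ast)$, whereas the strong-concavity margin from $G_t^\ast$ is only \emph{second}-order, so the two scales do not balance on their own. This is where the monotonicity assumption (\cref{assump:monotonicity}) intervenes: applied to the pairs $((\mu^{t+1},\pi^{t+1}),(\mu^\ast,\varpi^\ast))$ and $((\mu^t,\pi^t),(\mu^\ast,\varpi^\ast))$, it forces the dominant direction of the reward-induced part of $\mathcal{R}_t$ to have non-positive sign against $\varpi^\ast - \pi^{t+1}$, leaving only cross terms. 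Those cross terms are then split via Young's inequality among the three available sinks $\lambda\Dmuast(\pi^{t+1},\varpi^\ast)$, $(1-\lambda\eta)\Dmuast(\pi^{t+1},\pi^t)$, and $\tfrac{\lambda\eta}{2}\Dmuast(\varpi^\ast,\pi^t)$; matching the Young constants produces exactly the threshold $\eta^\ast$ of \eqref{eq:upperbound_eta}, which is polynomial in $\lambda$, $\sigma_{\min}$, $1/H$, and $1/|\A|$. Substituting the resulting lower bound on $G_t$ into the identity of Step 1 yields the one-step contraction, and iterating gives the claimed exponential decay.
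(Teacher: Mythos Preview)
Your three-point identity and the strong-concavity bound $G_t^\ast\geq\lambda\Dmuast(\pi^{t+1},\varpi^\ast)$ are correct, but the treatment of the residual $\mathcal{R}_t$ has a genuine gap. The monotonicity assumption concerns the instantaneous reward $r_h(s,a,\mu_h)-r_h(s,a,\tilde\mu_h)$ paired with the \emph{occupancy} difference $\pi_h(a|s)\mu_h(s)-\tilde\pi_h(a|s)\tilde\mu_h(s)$; it says nothing about the $Q$-function difference $Q_h^{\lambda,\sigma}(s,\cdot,\pi^t,\mu^t)-Q_h^{\lambda,\sigma}(s,\cdot,\varpi^\ast,\mu^\ast)$ paired with $\mu^\ast_h(s)\bigl(\varpi^\ast_h-\pi^{t+1}_h\bigr)$. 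Neither of the two applications you propose produces a sign on the dominant part of $\mathcal{R}_t$, because the measures and policies do not line up. Without that sign, the Lipschitz--Pinsker--Young route gives $\eta|\mathcal{R}_t|\leq\tfrac{\lambda\eta}{2}\Dmuast(\pi^{t+1},\varpi^\ast)+\tfrac{C_Q^2\eta}{2\lambda}\Dmuast(\varpi^\ast,\pi^t)$, and after absorbing the first term into $\eta G_t^\ast$ you are left needing $C_Q^2/\lambda\leq\lambda$, a constraint on $\lambda$ with no $\eta$ in it. Since $C_Q$ contains $H$, $L$, and the log of the minimal action probability, this fails for small $\lambda$; making $\eta$ small does not help, contrary to what the theorem asserts.

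The paper avoids this entirely by a different decomposition. It writes the three-point identity against $\varpi^\ast_h-\pi^t_h$ (not $\varpi^\ast_h-\pi^{t+1}_h$), so that the $Q$-term is $\sum_h\Expect_{s\sim\mu^\ast_h}\bigl[\langle Q_h^{\lambda,\sigma}(s,\cdot,\pi^t,\mu^t),\varpi^\ast_h-\pi^t_h\rangle\bigr]$. This is \emph{exactly} the form of the performance-difference lemma (\cref{lem:piQ}), which converts it into $J^{\lambda,\sigma}(\mu^t,\varpi^\ast)-J^{\lambda,\sigma}(\mu^t,\pi^t)$ plus KL-to-$\sigma$ terms; monotonicity in the form of \cref{lem:J-monotonicity} and the optimality of $\varpi^\ast$ then give the sign cleanly, with no residual to control. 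The price is that the identity now carries $+\Dmuast(\pi^t,\pi^{t+1})$ as a discretisation error, and the heart of the proof is showing that this term is $O(\eta^2)\Dmuast(\varpi^\ast,\pi^t)$ (\cref{claim:smallness}). The $\eta^2$ is what makes the contraction hold for any $\lambda$ once $\eta$ is small enough. Obtaining it requires an explicit calculation from the softmax update together with a uniform lower bound on $\min_{h,s,a}\pi^t_h(a|s)$ along the whole trajectory (\cref{lem:lower_bound_on_action_probability}), which you do not mention; without it the $\log$-terms in $Q^{\lambda,\sigma}$ are not Lipschitz. Incidentally, that lower bound is exponentially small in $H/\lambda$, so the resulting $\eta^\ast$ in \eqref{eq:upperbound_eta} is not polynomial in the game parameters as you claim.
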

\end{CatchyBox}
\begin{remark}\label{rmk:eta^ast}
        While \cref{lem:monotonically_decresing} provides an exponentially decreasing bound, the theoretical upper bound $\eta^\ast$ on the step size $\eta$ can be small, see  \eqref{eq:heavy_constant}  and \eqref{eq:upperbound_eta}  in detail.
\end{remark}
{\color{black}
This metric $\Dmuast(\varpi^\ast,\pi^t)$ is widely used in \textcite{zhang2023learning,Dong24} because it provides an upper bound for the so-called exploitability $\Exploitability(\pi)\coloneqq{\max_{\pi^\prime}}{J(m[\pi],\pi^\prime)} -J(m[\pi],\pi)$ as $\Exploitability(\pi^t)=\order{\sqrt{\Dmuast(\varpi^\ast,\pi^t)}}$ by the Lipschitz continuity of $V_1^{\lambda,\sigma}$.
We also note that \cref{lem:monotonically_decresing} improves upon the previous results by \textcite{zhang2023learning} and \textcite{Dong24} in the regime with a large number of iterations $t$.
Indeed, the authors obtained $\Dmuast(\varpi^\ast,\frac1T\sum_{t=1}^T\pi^t)\le\order{\nicefrac{\lambda\log^2T}{\sqrt{T}}}$ and $\Dmuast(\varpi^\ast,\pi^{t+1})\le\nicefrac{H^3}{\lambda t}$.
On the other hand, these bounds for finite $t$ may be smaller since the constant $\eta$ inside our exponent could be small. 
}
\subsection{Intuition for exponential convergence: continuous-time version of \texorpdfstring{$\RMD$}{Regularized Mirror Descent}}
The convergence of $(\pi^t)_{t=0}^\infty
$ can be intuitively explained by considering a continuous limit $(\pi^t)_{t\geq0}$ with respect to the time $t$ of $\RMD$.
In this paragraph, we will use the idea of mirror flow \parencite{KricheneBB15,TzenRRB23,deb2023wassersteinmirrorgradientflow} and continuous dynamics in games \parencite{TAYLOR1978145,Mertikopoulos18,pmlr-v139-perolat21a,2022perolat} to observe the exponential convergence of the flow to equilibrium.
According to \textcite[(2.1)]{deb2023wassersteinmirrorgradientflow}, the continuous curve of $\pi$ should satisfy that
\begin{align}  
    \begin{aligned}
    \dv{t}\pi_h^t\given{a}{s}={}&\pi^t_h\given{a}{s}\cdot\qty(Q_h^{\lambda,\sigma}(s,a,\pi^t,\mu^t)-\lambda\log\frac{\pi^t_h\given{a}{s}}{\sigma_h\given{a}{s}}).    
    \end{aligned}
    \label{eq:mirror_flow}
\end{align}
The flow induced by the dynamical system \eqref{eq:mirror_flow} converges to equilibrium \emph{exponentially} as time $t$ goes to infinity.
\begin{CatchyBox}{}
    \begin{theorem}\label{thm:exp_conv}
Let $\pi^t$ be a solution of \eqref{eq:mirror_flow} and $\varpi^\ast$ be a regularized equilibrium defined in \cref{def:regNE}.
Suppose that \cref{assump:monotonicity}.
Then
\[
    \dv{t}D_{\mu^\ast}(\varpi^\ast,\pi^{t})\leq-\lambda D_{\mu^\ast}(\varpi^\ast,\pi^t),
\]
for all $t\geq0$.
Moreover, the inequality implies $\Dmuast(\varpi^\ast,\pi^t)\leq\Dmuast(\varpi^\ast,\pi^0)\exp\qty(-\lambda t)$.
\end{theorem}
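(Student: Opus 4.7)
The plan is to differentiate $\Dmuast(\varpi^\ast,\pi^t)$ along the flow \eqref{eq:mirror_flow} and show that the right-hand side splits into the targeted contraction $-\lambda\Dmuast(\varpi^\ast,\pi^t)$ plus leftover pieces that \cref{assump:monotonicity}, together with the soft-Bellman characterization of the regularized equilibrium $(\mu^\ast,\varpi^\ast)$ from \cref{def:regNE}, forces to be non-positive. Since $\mu^\ast$ and $\varpi^\ast$ are time-independent, only the second slot of the KL moves, giving
\[
\dv{t}\Dmuast(\varpi^\ast,\pi^t)=-\sum_{h,s,a}\mu^\ast_h(s)\,\varpi^\ast_h\given{a}{s}\,\dv{t}\log\pi^t_h\given{a}{s}.
\]
I read off $\dv{t}\log\pi^t_h\given{a}{s}$ from \eqref{eq:mirror_flow} (interpreted as the replicator-type flow on the simplex, with the canonical normalizing baseline keeping $\sum_a\pi^t_h\given{a}{s}=1$); the baseline cancels against $\sum_a(\varpi^\ast_h\given{a}{s}-\pi^t_h\given{a}{s})=0$ and leaves a single inner product of $\varpi^\ast_h-\pi^t_h$ against $\Qlambdasigma(s,\cdot,\pi^t,\mu^t)-\lambda\log(\pi^t_h/\sigma_h)$, weighted by $\mu^\ast_h(s)$.

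\textbf{Algebraic decomposition.} For the $\lambda\log(\pi^t/\sigma)$ piece I write $\log(\pi^t/\sigma)=\log(\pi^t/\varpi^\ast)+\log(\varpi^\ast/\sigma)$ and apply the identities $\sum_a\varpi^\ast_h\log(\pi^t_h/\varpi^\ast_h)=-\KL(\varpi^\ast_h,\pi^t_h)$ and $\sum_a\pi^t_h\log(\varpi^\ast_h/\pi^t_h)=-\KL(\pi^t_h,\varpi^\ast_h)$; this produces exactly $-\lambda\Dmuast(\varpi^\ast,\pi^t)+\lambda[\Dmuast(\varpi^\ast,\sigma)-\Dmuast(\pi^t,\sigma)]$. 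For the $\Qlambdasigma$ piece I split $\Qlambdasigma(s,a,\pi^t,\mu^t)=\Qlambdasigma(s,a,\varpi^\ast,\mu^\ast)+\delta Q_h(s,a)$ and invoke the soft-Bellman consistency of the regularized equilibrium (\cref{def:regNE}), which gives $\Qlambdasigma(s,a,\varpi^\ast,\mu^\ast)=\Vlambdasigma(s,\mu^\ast,\varpi^\ast)+\lambda\log(\varpi^\ast_h\given{a}{s}/\sigma_h\given{a}{s})$; the $\Vlambdasigma$-part drops from the inner product against $\varpi^\ast_h-\pi^t_h$, and the $\log(\varpi^\ast/\sigma)$ part contributes $\lambda\Dmuast(\pi^t,\varpi^\ast)+\lambda[\Dmuast(\varpi^\ast,\sigma)-\Dmuast(\pi^t,\sigma)]$ by the same KL identities. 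After all cancellations, the claimed contraction reduces to
\[
\sum_{h,s}\mu^\ast_h(s)\,\langle\delta Q_h(s,\cdot),\,\pi^t_h\givenbullets-\varpi^\ast_h\givenbullets\rangle\ \leq\ \lambda\Dmuast(\pi^t,\varpi^\ast).
\]

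\textbf{Main obstacle and monotonicity.} The crucial step is bounding this $\delta Q_h$ cross term. Unrolling $\delta Q_h$ via the Bellman recursion \eqref{eq:state_action_value} produces at each level a reward gap $r_h(\cdot,\mu^t_h)-r_h(\cdot,\mu^\ast_h)$ and a value gap $\Vlambdasigma(\cdot,\mu^t,\pi^t)-\Vlambdasigma(\cdot,\mu^\ast,\varpi^\ast)$ propagated from step $h{+}1$. Telescoping across $h$ and re-weighting carefully using the forward equation \eqref{eq:forward_eq} for $m[\pi^t]=\mu^t$ and $m[\varpi^\ast]=\mu^\ast$ rearranges the whole cross term into the Lasry--Lions bilinear form $\sum_h\sum_{s,a}(r_h(s,a,\mu^t_h)-r_h(s,a,\mu^\ast_h))(\pi^t_h\given{a}{s}\mu^t_h(s)-\varpi^\ast_h\given{a}{s}\mu^\ast_h(s))$, which is $\leq 0$ by \cref{assump:monotonicity}. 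The delicacy is in the bookkeeping: sums weighted by $\mu^\ast_h$ and indexed by $\varpi^\ast_h-\pi^t_h$ must be converted, step by step via the two forward equations, into the symmetric Lasry--Lions form pairing $\mu^t_h\pi^t_h$ against $\mu^\ast_h\varpi^\ast_h$; any entropy-regularization residues that appear along the way are non-negative KL-type quantities that get absorbed into the $\lambda\Dmuast(\pi^t,\varpi^\ast)$ slack from the previous paragraph. Once the differential inequality $\dv{t}\Dmuast(\varpi^\ast,\pi^t)\leq-\lambda\Dmuast(\varpi^\ast,\pi^t)$ is established, Gr\"onwall's lemma yields $\Dmuast(\varpi^\ast,\pi^t)\leq\Dmuast(\varpi^\ast,\pi^0)e^{-\lambda t}$, completing the proof.
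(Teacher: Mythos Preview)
Your overall architecture matches the paper's: differentiate $\Dmuast(\varpi^\ast,\pi^t)$ along the flow, reduce to an inner product of $\pi^t_h-\varpi^\ast_h$ against $\Qlambdasigma-\lambda\log(\pi^t_h/\sigma_h)$ under $\mu^\ast_h$, and then use monotonicity together with the optimality of $\varpi^\ast$. The paper, however, does not split $Q$ at the equilibrium point. It applies the performance--difference identity (\cref{lem:piQ}) directly to obtain the exact equality
\[
\dv{t}\Dmuast(\varpi^\ast,\pi^t)=J^{\lambda,\sigma}(\mu^t,\pi^t)-J^{\lambda,\sigma}(\mu^t,\varpi^\ast)-\lambda\Dmuast(\varpi^\ast,\pi^t),
\]
after which \cref{lem:J-monotonicity} swaps $\mu^t\to\mu^\ast$ and the equilibrium condition in \cref{def:regNE} gives $J^{\lambda,\sigma}(\mu^\ast,\pi^t)-J^{\lambda,\sigma}(\mu^\ast,\varpi^\ast)\le 0$. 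Your $\delta Q$ detour re-derives \cref{lem:piQ} in disguise: once you ``unroll via the Bellman recursion and telescope using the forward equations,'' you land on the same $J^{\lambda,\sigma}$-difference. So the two routes coincide at the substantive step, but the paper's is shorter and avoids the sign bookkeeping you stumble on (your stated contribution of the $\log(\varpi^\ast/\sigma)$ part has the wrong sign relative to your own convention, though your final reduced inequality is nonetheless correct).

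The genuine gap is in your last paragraph. The $\delta Q$ cross term does \emph{not} rearrange into the Lasry--Lions bilinear form alone: it equals the Lasry--Lions form \emph{plus} the regret $J^{\lambda,\sigma}(\mu^\ast,\pi^t)-J^{\lambda,\sigma}(\mu^\ast,\varpi^\ast)$, plus the $\lambda\Dmuast(\pi^t,\varpi^\ast)$ that cancels your slack. That regret term is not a ``non-negative KL-type residue absorbed into the slack''; it is $\le 0$ precisely because $\varpi^\ast$ maximizes $J^{\lambda,\sigma}(\mu^\ast,\cdot)$, a second, separate ingredient beyond \cref{assump:monotonicity} that you use earlier (for the soft-Bellman identity) but fail to invoke here where it is actually needed. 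Without naming this step, your $\delta Q$ argument is incomplete.
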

\end{CatchyBox}
Technically, the non-Lipschitz continuity of the value function $Q_h^{\lambda,\sigma}(s,a,\bullet,\mu^t)$ in the right-hand side of \eqref{eq:mirror_flow} is non-trivial for the existence of the solution \(\pi\colon[0,+\infty)\to\PolicySet\) of the differential equation \eqref{eq:mirror_flow}, see, e.g., \textcite{coddington1984theory}. 
The proof of this existence and \cref{thm:exp_conv} are given in \cref{sec:conti_time}.
\subsection{Proof sketch of the convergence result for \texorpdfstring{$\RMD$}{regularized mirror descent}}
We return from continuous-time dynamics \eqref{eq:mirror_flow} to the discrete-time algorithm  \eqref{eq:update}.
The technical difficulty in the proof of \cref{lem:monotonically_decresing} is the non-Lipschitz continuity of the value function $Q^{\lambda,\sigma}_h$ in \eqref{eq:update}, that is, the derivative of $Q^{\lambda,\sigma}_h(s,a,\pi,\mu)$ with respect to the policy $\pi$ can blow up as $\pi$ approaches the boundary of the space $\PolicySet$ of probability simplices.
We can overcome this difficulty as shown in the following sketch of proof:
\begin{SketchOfProof}[\cref{lem:monotonically_decresing}]
    In a similar way to \cref{thm:exp_conv}, we can obtain the following inequality with a {\color{blue}discretization error}:
\begin{equation}
\label{eq:first_step_exp_conv}
\begin{aligned}
    &\Dmuast(\varpi^\ast,\pi^{t+1})-\Dmuast(\varpi^\ast,\pi^t)
    \le{}-\lambda\eta\Dmuast(\varpi^\ast,\pi^t)+
\begin{tikzpicture}[baseline=(mathnode.base)]
    \node[inner sep=0pt] (mathnode) {
        \tcboxmath[
            enhanced,
            frame hidden,
            colback=blue!10,
            boxsep=0pt 
        ]{
            \Dmuast(\pi^t,\pi^{t+1}),
        }
    };
\end{tikzpicture}
\end{aligned}
\end{equation}
{where we use a property of KL divergence, see the proof in \cref{appendix:discrete}.}
The remainder of the proof is almost entirely dedicated to showing that the above error term is sufficiently small and bounded compared to the other terms in \eqref{eq:first_step_exp_conv}.
As a result, we obtain the following claim:
\begin{CatchyBox3}{}
\begin{claim}
\label{claim:smallness}
Suppose that the learning rate $\eta$ is less than the upper bound $\eta^\ast$ in \eqref{eq:upperbound_eta}.
Then
\begin{align*}
\begin{tikzpicture}[baseline=(mathnode.base)]
    \node[inner sep=0pt] (mathnode) {
        \tcboxmath[
            enhanced,
            frame hidden,
            colback=blue!10,
            boxsep=-2pt 
        ]{
            \Dmuast(\pi^t,\pi^{t+1})
        }
    };
\end{tikzpicture}
    &\leq C\eta^2\Dmuast(\varpi^\ast,\pi^t),
\end{align*}
    where $C>0$ is the constant defined in \eqref{eq:heavy_constant}, which satisfies $C\eta^\ast\leq\lambda/2$.
\end{claim}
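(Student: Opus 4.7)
\end{CatchyBox3}

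\begin{SketchOfProof}[\cref{claim:smallness}]
The plan is to exploit the closed-form update rule of $\RMD$ and bound $\KL\qty(\pi^t_h\givenbullets,\pi^{t+1}_h\givenbullets)$ state-by-state, then average over $s\sim\mu^\ast_h$ and sum over $h\in[H]$. Writing $\pi^{t+1}_h\given{a}{s}=\pi^t_h\given{a}{s}\,\e^{\eta G_h(s,a)}/Z_h(s)$ with
\[
    G_h(s,a)\coloneqq\lambda\log\frac{\sigma_h\given{a}{s}}{\pi^t_h\given{a}{s}}+Q^{\lambda,\sigma}_h(s,a,\pi^t,\mu^t),\qquad Z_h(s)=\sum_{a\in\A}\pi^t_h\given{a}{s}\,\e^{\eta G_h(s,a)},
\]
a direct calculation yields $\KL\qty(\pi^t_h\givenbullets,\pi^{t+1}_h\givenbullets)=-\eta\,\Expect_{a\sim\pi^t_h\givenbullets}[G_h(s,a)]+\log Z_h(s)$. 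Applying the cumulant-type bound $\log\Expect[\e^{\eta X}]\leq\eta\Expect[X]+\tfrac{\eta^2}{2}\e^{\eta\|X\|_\infty}\Expect[X^2]$, which follows from $\e^x\leq 1+x+\tfrac{x^2}{2}\e^{|x|}$ together with $\log(1+y)\leq y$, gives
\[
    \KL\qty(\pi^t_h\givenbullets,\pi^{t+1}_h\givenbullets)\leq\tfrac{\eta^2}{2}\,\e^{\eta M_h(s)}\,\Expect_{a\sim\pi^t_h\givenbullets}[G_h(s,a)^2],\qquad M_h(s)\coloneqq\|G_h(s,\cdot)\|_\infty.
\]
This already delivers the $\eta^2$ factor; the remaining task is to bound the second moment by a constant multiple of $\Dmuast(\varpi^\ast,\pi^t)$.

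The key idea for that second moment bound is a gauge symmetry: shifting $G_h(s,\cdot)$ by an $a$-independent constant changes neither the update nor the KL, so I may ``center'' $G_h$ using the first-order optimality condition for $\varpi^\ast$. The KKT equation for the regularized equilibrium of \cref{def:regNE} reads $\lambda\log\qty(\varpi^\ast_h\given{a}{s}/\sigma_h\given{a}{s})=Q^{\lambda,\sigma}_h(s,a,\mu^\ast,\varpi^\ast)-c_h(s)$ for some scalar $c_h(s)$ depending only on $s$. Subtracting this constant replaces $G_h$ inside the expectation by the ``distance-to-equilibrium'' quantity
\[
    \widetilde{G}_h(s,a)=\lambda\log\frac{\varpi^\ast_h\given{a}{s}}{\pi^t_h\given{a}{s}}+\qty(Q^{\lambda,\sigma}_h(s,a,\pi^t,\mu^t)-Q^{\lambda,\sigma}_h(s,a,\mu^\ast,\varpi^\ast)),
\]
which vanishes identically when $(\pi^t,\mu^t)=(\varpi^\ast,\mu^\ast)$.

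It remains to show $\sum_h\Expect_{s\sim\mu^\ast_h}\Expect_{a\sim\pi^t_h\givenbullets}[\widetilde{G}_h(s,a)^2]\leq C_0\,\Dmuast(\varpi^\ast,\pi^t)$ for a constant $C_0$ independent of $t$ and $\eta$. Using $(x+y)^2\leq 2x^2+2y^2$ I split into a log-ratio part and a $Q$-difference part. For the log-ratio part, a forward-invariance argument—induction on $t$ applied to the multiplicative update, seeded by \cref{assump:full_spt}—propagates the full support of $\sigma$ into a uniform lower bound $\pi^t_h\given{a}{s}\geq c_{\min}>0$ valid on $\eta\leq\eta^\ast$; on the resulting bounded-ratio region the reverse-Pinsker--type estimate $\Expect_p[(\log(q/p))^2]\leq C_\sigma\,\KL(q,p)$ is standard and bounds this piece by $C_\sigma\,\Dmuast(\varpi^\ast,\pi^t)$. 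For the $Q$-difference part, a telescoping of \eqref{eq:state_value}--\eqref{eq:state_action_value} along the horizon, combined with \cref{assump:Lip} and the Lipschitz dependence of $m[\pi]$ on $\pi$ via \eqref{eq:forward_eq}, produces $\abs{\Delta Q_h(s,a)}\leq C_1\sum_{\ell=h}^H\qty(\Expect_{s'\sim\mu^\ast_\ell}\norm{\pi^t_\ell\givenbullet{s'}-\varpi^\ast_\ell\givenbullet{s'}}+\norm{\mu^t_\ell-\mu^\ast_\ell})$, where the same lower bound $c_{\min}$ tames the otherwise non-Lipschitz KL term embedded in $V^{\lambda,\sigma}$; Pinsker's inequality then converts this into an $O(\sqrt{\Dmuast(\varpi^\ast,\pi^t)})$ bound, which squared yields the desired linear control. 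Packaging these constants together with the bound $\e^{\eta M}\leq 2$—valid on $\eta\leq\eta^\ast$ because $c_{\min}$ bounds $M$—produces the constant $C$ in \eqref{eq:heavy_constant}, and $\eta^\ast$ in \eqref{eq:upperbound_eta} is chosen as the minimum of the thresholds ensuring $\e^{\eta M}\leq 2$, $\lambda\eta<1$, and $C\eta\leq\lambda/2$. The principal obstacle throughout is the non-Lipschitzness of the KL-regularized Bellman operator near the simplex boundary, and it is precisely this obstacle that forces the detour through a forward-invariance lower bound inherited from \cref{assump:full_spt}.
\end{SketchOfProof}
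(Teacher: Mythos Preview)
Your proposal is correct and follows essentially the same route as the paper: extract the $\eta^2$ factor via a second-order bound on the exponential, center $G_h$ using the Bellman optimality at $\varpi^\ast$ (your ``gauge symmetry'' is precisely the content of \cref{lem:diff_L}), invoke the forward-invariance lower bound on $\pi^t$ and $\varpi^\ast$ (the paper's \cref{lem:lower_bound_on_action_probability}) to tame the non-Lipschitz log, control the $Q$-difference via \cref{lem:diff_Qast} and \cref{lem:Lip_m}, and finish with Pinsker. The only cosmetic differences are that the paper first applies Jensen and works with pairwise oscillations $G(a)-G(a')$ together with $e^x\le 1+x+x^2$ for $x\le 1$, whereas you use a direct cumulant bound and explicit KKT centering---the ingredients and resulting constants are the same.
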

\end{CatchyBox3}
The key to proving \cref{claim:smallness} is leveraging another claim that, over the sequence \((\pi^t)_t\), the value function \(Q_h^{\lambda,\sigma}\) behaves well, almost as if it were a Lipschitz continuous function, see \cref{lem:diff_L} for details.
Therefore, applying \cref{claim:smallness} to \eqref{eq:first_step_exp_conv} completes the proof.
\end{SketchOfProof}
\begin{remark}[Challenges in the proof of \cref{lem:monotonically_decresing}]\label{rmk:2}
    The technical difficulty in the proof lies in the fact that the $Q$-function $Q_h^{\lambda,\sigma}(s,a,\pi^t,\mu^t)$ in the algorithm \eqref{eq:update} depends on the mean field $\mu^t = m[\pi^t]$, which is determined forward by \eqref{eq:forward_eq} from \emph{past} times $1$ to $h-1$.
    On the other hand, the $Q$-function is also determined by the policy from \emph{future} times $h+1$ to $H$ through the dynamic programming principle given by \eqref{eq:state_action_value}. 
    As a result, {\color{black}it becomes difficult to apply the backward induction argument, which is known in the context of MDPs and Markov games, to $Q$-functions.
    This difficulty is specific to MFGs and is not seen }in other regularized games such as entropy-regularized zero-sum Markov games, where the $Q$-function depends only on future policies.
    Therefore, it is less feasible to directly apply the techniques of existing research, such as \textcite{cen2023faster}, to $\RMD$ for MFGs.
    Our proof instead utilizes the properties of the KL divergence to deal with this difficulty.
\end{remark}

\begin{figure*}[t]
  \centering
  \begin{minipage}[t]{0.48\textwidth}
    \vspace{0pt}  
    \begin{algorithm}[H]
      \caption{$\APP$ for MFG}
      \label{alg:reward_transform}
      \SetAlgoNlRelativeSize{-1} 
      \KwIn{{\texttt{MFG}}$(\State,\A,H,P,r,\mu_1)$, initial policy $\pi^0$, \#iterations $N$, $\lambda>0$}
      \textbf{Initialization:} Set $k \gets 0$, $\sigma^k \gets \pi^0$\;
      \While{$k < N$}{
        Compute $(\mu^{k+1},\sigma^{k+1})$ by solving
        \[
        \begin{tikzpicture}[baseline=(mathnode.base)]
        \node[inner sep=0pt] (mathnode) {
          \tcboxmath[
            enhanced, frame hidden, colback=mygreen
          ]{
            \left\{
            \begin{aligned}
              \sigma^{k+1} &= \RMD(\text{\texttt{MFG}},\sigma^k,\lambda,\eta,\sigma^k,\tau),\\
              \mu^{k+1}    &= m[\sigma^{k+1}]
            \end{aligned}
            \right.
          }
        };
        \end{tikzpicture}
        \]
        Update $k \gets k+1$\;
      }
      \KwOut{$\sigma^k(\approx\pi^\star)$}
    \end{algorithm}
  \end{minipage}%
  \hfill
  \begin{minipage}[t]{0.48\textwidth}
    \vspace{0pt}
    \begin{algorithm}[H]
      \caption{$\RMD(\text{\texttt{MFG}},\pi^0,\lambda,\eta,\sigma^0,\tau)$}
      \label{alg:rmd}
      \SetAlgoNlRelativeSize{-1}
      \textbf{Initialization:} Set $t \gets 0$, $\pi^t \gets \pi^0$, $\sigma \gets \sigma^0$\;
      \While{$t < \tau$}{
        Compute $\mu^{t}=m[\pi^t]$\;
        Compute $Q_h^{\lambda,\sigma}(s,a,\pi^t,\mu^t)$ by \eqref{eq:state_action_value}\;
        Compute $\pi^{t+1}$ as
        \begin{align*}
          \pi^{t+1}_h\!\given{a}{s}
            &\propto
             (\sigma_h\!\given{a}{s})^{\lambda\eta}
             (\pi^t_h\!\given{a}{s})^{1-\lambda\eta}\\
             &\quad\cdot
             \exp(\eta Q_h^{\lambda,\sigma}(s,a,\pi^t,\mu^t))
        \end{align*}
        Update $t \gets t+1$\;
      }
      \KwRet{$\pi^t(\approx\varpi^\ast)$}
    \end{algorithm}
  \end{minipage}
\end{figure*}
\subsection{\texorpdfstring{$\APP$}{APP}: approximating PP updates with \texorpdfstring{$\RMD$}{RMD}}
{\color{black}We recall that we need to develop an algorithm that efficiently approximates the update rule of the PP method since the rule \eqref{eq:slingshot_update} is intractable.}
To this end, we employ the \emph{regularized} Mirror Descent ($\RMD$) to solve the (\emph{unregularized}) MFG as a substitute for the rule.
Specifically, after repeating the $\RMD$ iteration \eqref{eq:update} a sufficient number of times, we update the base distribution $\sigma$ using the most recently obtained policy $\sigma^{k+1}$.
We call this method $\APP$, which is summarized in \cref{alg:reward_transform}.
In $\APP$, updating the base seems like a small modification of $\RMD$, but it is crucial for convergence.
Without this update, we can only obtain regularized equilibria, which are generally different from our ultimate goal of unregularized equilibria.
In fact, {\color{black}\cref{def:NE}, \ref{def:regNE} and \cref{assump:monotonicity} yield that
\(
J(\mu^\star,\pi^\star)-J(\mu^\ast,\varpi^\ast) \le \lambda\qty(D_{\mu^\star}(\pi^\star,\sigma)-\Dmuast(\varpi^\ast,\sigma)),
\)
which roughly implies that the gap between regularized and unregularized equilibria is $\order{\lambda}$.}
Experimental results in \textcite{cui21a} also suggest that to find the (unregularized) equilibrium with a regularized algorithm, it is necessary to tune the hyperparameter $\lambda$ appropriately.
Theoretically, the results we have established in \cref{thm:last_iterate_conv,lem:monotonically_decresing} provide some convergence guarantees for $\APP$. Empirically, the experimental results in the next section suggest that $\APP$ also achieves LIC.
We conjecture that the rate of convergence for $\APP$, as predicted by these experiments, may also be derived.
\section{Numerical experiment}\label{sec:practical}
We numerically demonstrate that $\APP$, which is the approximated version of \eqref{eq:slingshot_update}, can achieve convergence to the mean-field equilibrium.
We evaluate the convergence of $\APP$ using the Beach Bar Process introduced by \textcite{PerrinPLGEP20}, a standard benchmark for MFGs.
In particular, the transition kernel $P$ in this benchmark gives a random walk on a one-dimensional discretized torus $\State=\qty{0,\dots,\abs{\State}-1}$, and the reward is set to be $r_h(s,a,\mu)=-\abs{a}/\abs{\State}-\nicefrac{\abs{s-\abs{\State}/2}}{\abs{\State}}-\log\mu_h(s)$ with $a\in\A\coloneqq\qty{-1,\pm0,+1}$.
{\color{black}Note that this benchmark satisfies the monotonicity assumption in \cref{assump:monotonicity}.}
See \cref{appendix:BeachBar} for further details.
\begin{figure*}[t]
    \centering
    \begin{minipage}{\textwidth}
        \centering
        \begin{subfigure}[b]{0.32\textwidth}
            \centering
            \vspace{1.5em} 
            \includegraphics[width=\linewidth]{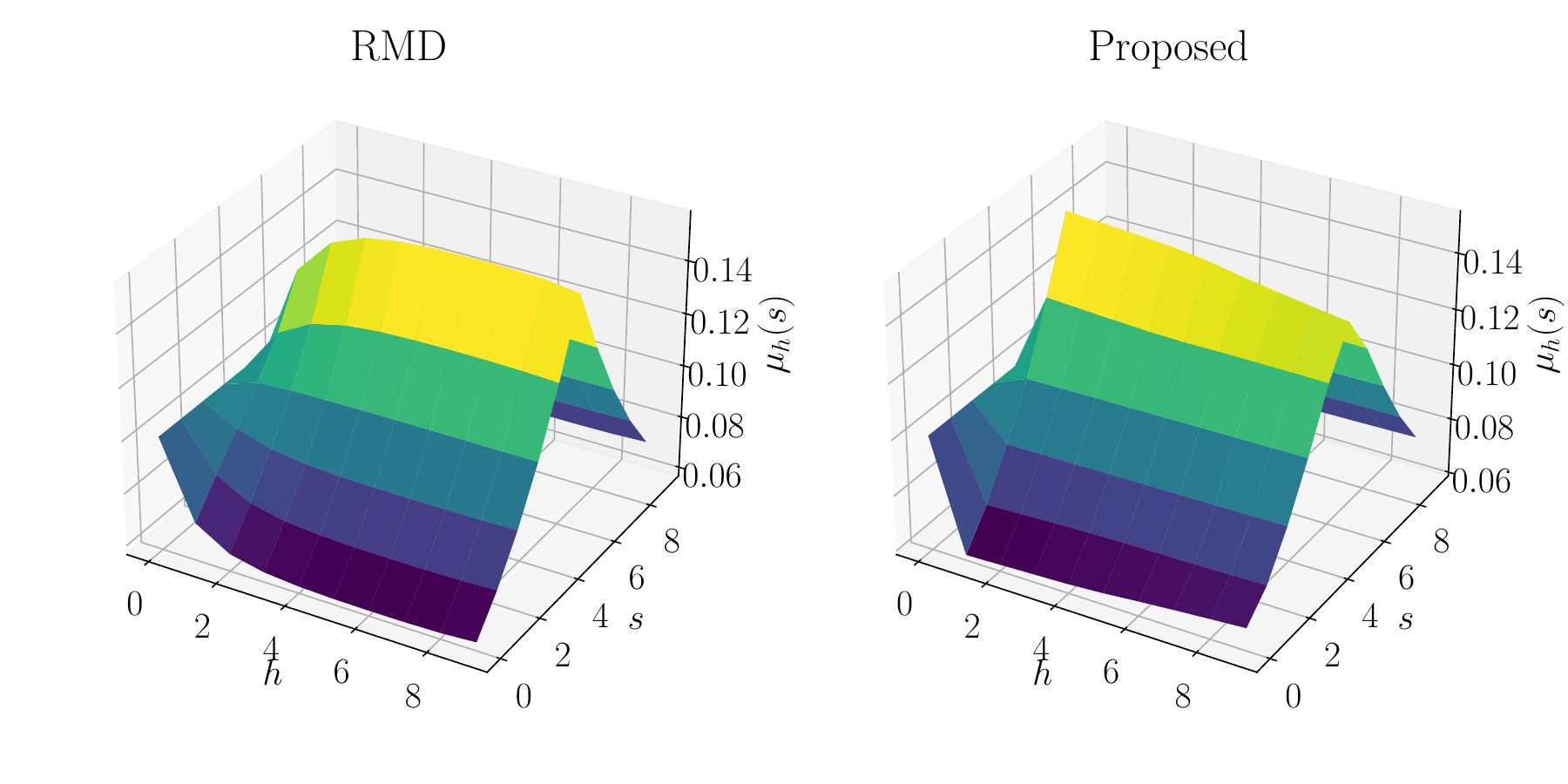}
            \vspace{1.5em} 
            \caption{The difference between  $\RMD$ and $\APP$ in the evolution of $(\mu_h)_{h=1}^H$.}
            \label{fig:mu}
        \end{subfigure}
        \hfill
        \begin{subfigure}[b]{0.32\textwidth}
            \centering
            \includegraphics[width=\linewidth]{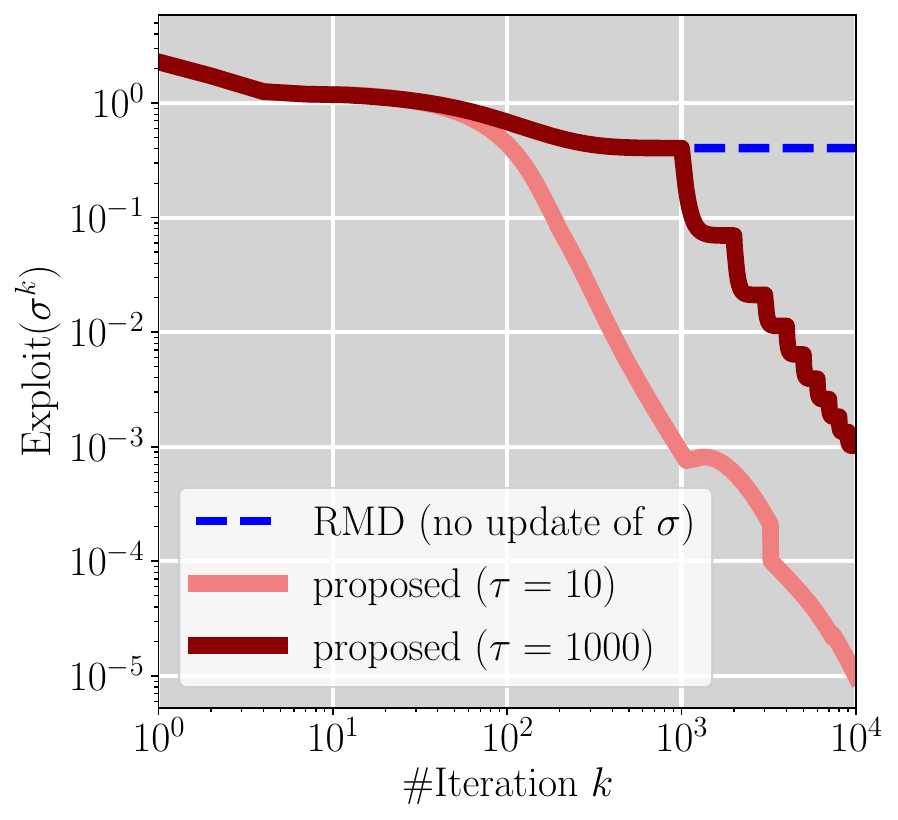}
            \caption{Convergence of $\Exploitability$}
            \label{fig:exploitability}
        \end{subfigure}
        \hfill
        \begin{subfigure}[b]{0.32\textwidth}
            \centering
            \includegraphics[width=\linewidth]{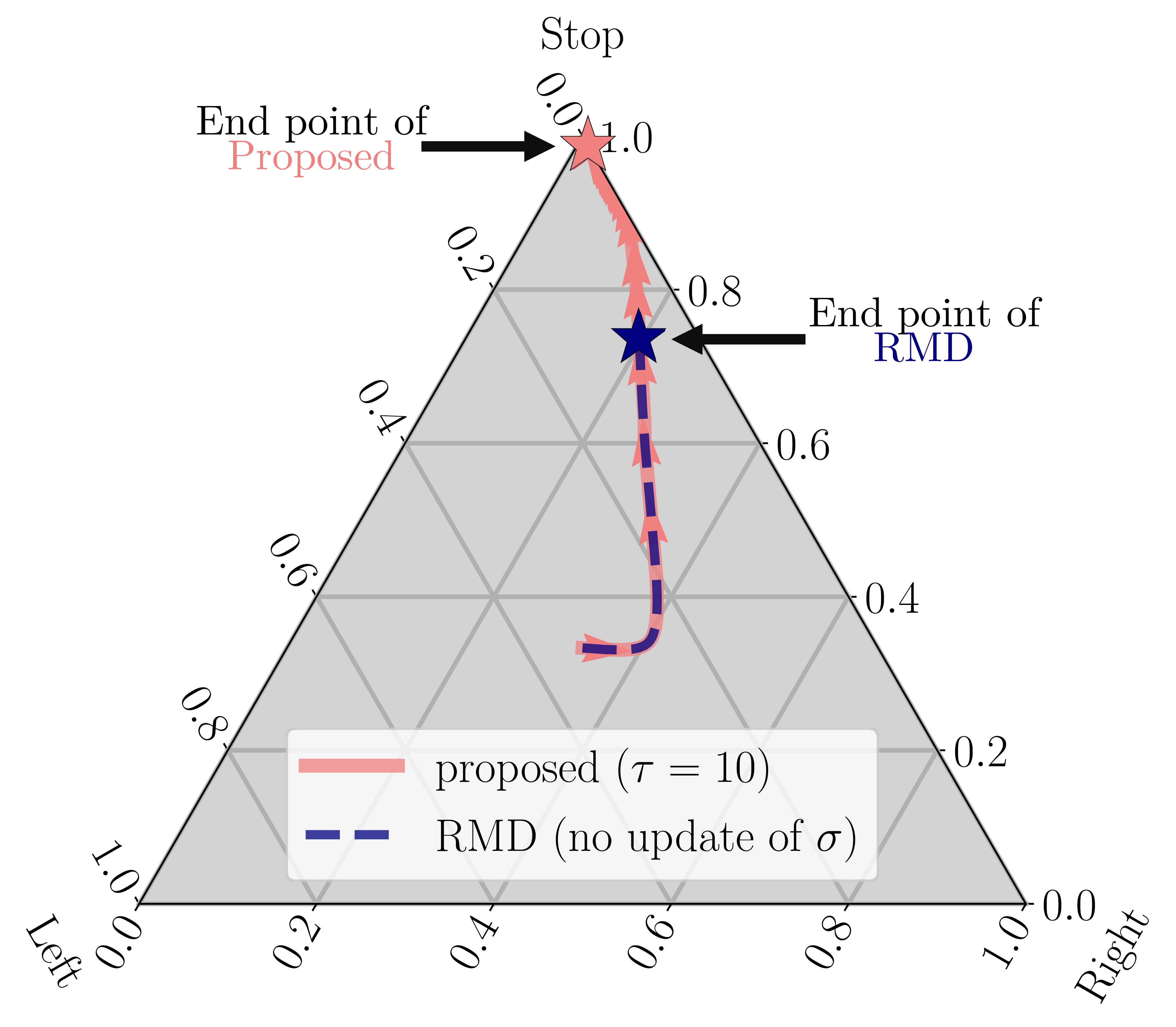}
            \caption{Trajectory of $(\pi_1^t(1))_{{t=1}}^N$.}
            \label{fig:pi}
        \end{subfigure}
    \end{minipage}
    \caption{Experimental results for \cref{alg:reward_transform} for Beach Bar Process}
    \label{fig:overall}
\end{figure*}
\cref{fig:overall} is a summary of the results of the experiment.
The most notable aspect is the convergence of exploitability, as shown in \cref{fig:exploitability}.
$\APP$ decreases the exploitability with each iteration when we update \(\sigma\).
\cref{fig:mu,fig:pi} illustrate the qualitative validity of the approximation achieved by $\APP$.
In this benchmark, the equilibrium is expected to lie at the vertices of the probability simplex.
Therefore, $\RMD$, which can shift the equilibrium to the interior of the probability simplex, seems unable to find the mean-field equilibrium accurately.
On the other hand, the sequence $(\pi^t)_t$ of policies generated by $\APP$ shows a behavior that converges to the vertices.
\section{Limitations}
\label{sec:limit}
{\color{black}
Our results provide the first asymptotic (\cref{thm:last_iterate_conv}) and exponential (\cref{thm:exp_conv}) convergence guarantees for PP and RMD in non-strictly monotone (unregularized) MFGs under the model-based setting, assuming that the transition kernels and reward functions are available.
This leaves open several important questions.
First, we do not consider the more realistic scenario in which the transition of mean-field and reward must be learned from data, nor do we provide any sample-complexity or statistical guarantees, such as those in \textcite{DBLP:conf/aistats/HuangYH24}, which would be required for rigorous model-free or data-driven applications.  
Second, our theoretical advantages are strictly in terms of iteration complexity under monotonicity: we establish faster convergence rates per iteration, but we do not claim any improvements in overall computational cost (for example, the cost of solving each PP subproblem or evaluating $Q$ in RMD), nor do we analyze how these methods scale with large state or action spaces in practice.  
Finally, although the proximal-point and mirror-descent structure of PP and RMD makes them, in principle, compatible with nonlinear function approximators, such as NNs, we have not studied approximation errors as in \textcite{zhang2023learning}, stability issues, or empirical performance in high-dimensional or highly nonlinear settings.

Establishing LIC for $\APP$ remains open. 
We conjecture that a resolution will require proving a \emph{uniform positive lower bound} on $\eta^\ast$ that guarantees LIC for RMD, which will improve the current estimate; see \cref{rmk:eta^ast}.

By \emph{synchronous feedback} we mean that, at (outer/inner) iterate $k$, all updates use the $Q$-function evaluated on the \emph{current} pair $(\pi^{k},\mu^{k})$.
In practice, feedback may be delayed or asynchronous.
A natural adaptation is to evaluate $Q$ at a stale iterate, e.g., $Q(\pi^{\kappa^{(t)}},\mu^{\kappa^{(t)}})$ at a delayed index $\kappa^{(t)}$.
By analogy with asynchronous gradient play in zero-sum games~\parencite{AoCC23}, we expect last-iterate stability to persist under bounded staleness with a suitably reduced stepsize.
A complete analysis in the MFG setting is left for future work.
}

\section{Related works}\label{sec:related}
{
As a result of the focus on the modeling potential of various population dynamics, there has been a significant increase in the literature on computations of equilibria in large-scale MFG, or so-called Learning in MFGs.
We refer readers to read \parencite{surveyMFG} as a comprehensive survey of Learning in MFGs.
\textcite{GuoHXZ19} and \textcite{AnahtarciKS20} developed a fixed-point iteration that alternately updates the mean-field $\mu$ and policy $\pi$, based on the algorithm of MDPs.
They showed that this fixed-point iteration achieves LIC under a condition of contraction.
However, it is known that the condition of contraction does not hold for many games in \textcite{cui21a}.
In MFGs where the contraction assumption does not hold, it is observed that the fixed-point iteration oscillates in the case of linear-quadratic MFGs \parencite{Lauriere21numerical}.
Fictitious play, which averages mean fields or policies over time, was developed to prevent this oscillation.
\textcite{HADIKHANLOO2019369,aaaiEliePLGP20,Perrin2022} showed that the average in fictitious play converges to an equilibrium under the monotonicity assumption in \cref{assump:monotonicity}.
On the other hand, such time averaging has the disadvantage of slowing the experimental rate of convergence observed in \textcite{surveyMFG} and making it difficult to scale up using deep learning.

\textcite{2022perolat} applied Mirror Descent to MFG and developed a scalable method. 
This method has the practical benefits of being compatible with deep learning and is applicable to variants of variants~\parencite{pmlr-v162-lauriere22a,pmlr-v206-fabian23a}. 
However, the theoretical guarantees are somewhat restrictive, as they often require strong assumptions like contraction for last-iterate convergence.
In fact, they showed last-iterate convergence (LIC) of continuous-time algorithms under \emph{strict} monotonicity assumptions.
However, results for discrete-time settings or non-strict monotonicity are lacking.
In addition to fictitious play and MD, methods using the actor-critic method~\parencite{zeng2024policy}, value iteration~\parencite{AnahtarciKS20}, multi-time scale~\parencite{DBLP:journals/mcss/AngiuliFL22,angiuli2023convergence,angiuli2024analysismultiscalereinforcementqlearning} and semi-gradient method~\parencite{Zhang24StochasticSemiGradient} have been developed, but to the best of our knowledge, the theoretical convergence results of these methods require a condition of contraction.
See the upper part of \cref{table:summary} for details.

{
Rather than focusing on the algorithm explained above, \textcite{cui21a} focused on the problem setting of MFG and aimed to achieve a fast convergence of the algorithms by considering regularization of MFG.
}
This type of regularization is typical in the case of MDPs and two-player zero-sum Markov games, where Mirror Descent achieves exponential convergence~\parencite{DBLP:journals/siamjo/ZhanCHCLC23,cen2023faster}.
One expects similar convergence results for regularized MFGs, but the fast convergence results without strong assumptions have been limited so far.
\textcite{zhang2023learning,Dong24} demonstrated polynomial convergence rates for MD under monotonicity.
{\color{black}
In addition, the authors in \textcite{XieYWM21,DBLP:conf/nips/MaoQ0FKIB22,cui21a,Anahtarci2023} develop an algorithm that converges polynomially for regularized MFG, and they impose restrictive assumptions such as contraction and strict monotonicity.
\cref{sec:detail_review} provides an extensive review comparing existing results in Learning in MFGs.
}
}
\section{Conclusion}
This paper proposes the novel method to achieve LIC under the monotonicity (\cref{assump:monotonicity}).
The main idea behind the derivation of the method is to approximate the PP type method \eqref{eq:slingshot_update} using $\RMD$.
\cref{thm:last_iterate_conv} implies that the PP method achieves LIC, and \cref{lem:monotonically_decresing} establish the exponential convergence of $\RMD$.
A future task of this study is to prove the convergence rates of the combined method, $\APP$.
\begin{ack}
The first author was supported by JSPS KAKENHI Grant Numbers JP22KJ1002 and JP25H01453, as well as by JST ACT-X Grant Number JPMJAX25C2. Kaito Ariu was supported by JSPS KAKENHI Grant Numbers
23K19986 and 25K21291.
\end{ack}

\printbibliography
\appendix
\onecolumn

\section{Detailed Explanation of Related Works}\label{sec:detail_review}
\begin{table}[htb]
    \centering
    \caption{Summary of related work on convergence of iterative methods for MFGs}
    \label{table:summary}
    
    \renewcommand{\arraystretch}{1.5}
    
    \definecolor{lightgray}{gray}{0.9}  
    
    \begin{tabular}{
        c|                    
        >{\centering}p{13em}!{\vrule width 1.5pt}
        c|                    
        c|                    
        c                     
    }
        &  
        & \textbf{Assumption} 
        & \makecell{\textbf{Discrete}\\\textbf{time}}
        & \textbf{LIC} \\
        \specialrule{1.5pt}{0pt}{0pt}
        
        \multirow{12}{*}[-0.5ex]{\centering MFG} & 
            \textcite{GuoHXZ19} & 
                Contract. & 
                \faCheck & 
                - \\ 
        \cline{2-5}
            
        & \makecell{\textcite{aaaiEliePLGP20}, \\\textcite{HADIKHANLOO2019369}} & 
            \makecell{Strict Mono.\\} & 
            \makecell{\faCheck\\}  & 
            \makecell{-\\}  \\ 
        \cline{2-5}
            
        & \textcite{PerrinPLGEP20} & 
            Mono. & 
            - & 
            - \\ 
        \cline{2-5}

        & \textcite{AnahtarciKS20} & 
            Contract. & 
            \faCheck & 
            \faCheck \\ 
        \cline{2-5}
        
        & \textcite{2022perolat} & 
            \makecell{Strict Mono.} & 
            - & 
            \faCheck \\ 
        \cline{2-5}

        & \textcite{DBLP:conf/atal/GeistPLEPBMP22} & 
            \makecell{Concavity} & 
            \faCheck & 
            \faCheck \\ 
        \cline{2-5}

        & \textcite{DBLP:journals/mcss/AngiuliFL22}, \textcite{angiuli2023convergence}, \textcite{angiuli2024analysismultiscalereinforcementqlearning} & 
            \makecell{\\Contract.\\} & 
            \makecell{\\\faCheck\\}  & 
            \makecell{\\-\\}  \\ 
        \cline{2-5}
        
        & \textcite{pmlr-v202-yardim23a} & 
            Contract. & 
            \faCheck & 
            \faCheck \\ 
        \cline{2-5}
        
        & \textcite{zeng2024policy} & 
            Herding & 
            \faCheck & 
            - \\ 
        \cline{2-5}

        & \textcite{Zhang24StochasticSemiGradient} & 
            \makecell{Contract.} & 
            \faCheck & 
            \faCheck \\ 
        \cline{2-5}
        
         & \cellcolor{lightgray}Ours (\cref{thm:last_iterate_conv}) & 
            \cellcolor{lightgray}Mono. & 
            \cellcolor{lightgray}\faCheck & 
            \cellcolor{lightgray}\faCheck \\
            
        \specialrule{1.5pt}{0pt}{0pt}
        
       \multirow{7}{*}[-0.5ex]{\centering\begin{tabular}[c]{@{}c@{}}Regularized\\MFG\end{tabular}} & 
            \textcite{XieYWM21} & 
                Contract. & 
                \faCheck & 
                - \\ 
        \cline{2-5}

        & \textcite{cui21a} & 
                Contract. & 
                \faCheck & 
                \faCheck \\ 
        \cline{2-5}

        & \textcite{DBLP:conf/nips/MaoQ0FKIB22} & 
            Contract. & 
            \faCheck & 
            - \\ 
        \cline{2-5}
        
        & \textcite{Anahtarci2023} & 
                Contract. & 
                \faCheck & 
                \faCheck \\ 
        \cline{2-5}
        
        & \textcite{zhang2023learning} & 
            \makecell{Strict Mono.} & 
            \faCheck & 
            - \\ 
        \cline{2-5}

        & \textcite{Dong24} & 
            \makecell{Mono.} & 
            \faCheck & 
            \faCheck~w/ poly. rate\\ 
        \cline{2-5}
        
         & \cellcolor{lightgray}Ours (\cref{thm:exp_conv}) & 
            \cellcolor{lightgray}Mono. & 
            \cellcolor{lightgray}\faCheck & 
            \cellcolor{lightgray}\faCheck~w/ exp. rate \\
    \end{tabular}
\end{table}

\subsection{Comparison with literature on MFGs} 
Based on \cref{table:summary}, we will discuss the technical contributions made by this paper in Learning in MFGs below.
\paragraph{Last-iterate convergence (LIC) results for MFGs:}
\textcite{2022perolat} showed that Mirror Descent achieves LIC only under \emph{strictly} monotone conditions, i.e., if the equality in the \cref{lem:J-monotonicity} is satisfied only if $\pi=\widetilde{\pi}$.
In contrast, our work establishes LIC even in \emph{non-strictly} monotone scenarios.
While the distinction regarding strictness might seem subtle, it is profoundly significant.
Indeed, non-strictly monotone MFGs encompass the fundamental examples of finite-horizon Markov Decision Processes.
Moreover, in strictly monotone cases, mean-field equilibria become unique.
Consequently, as \textcite{zeng2024policy} also noted, strictly monotone rewards fail to represent MFGs with diverse equilibria.
\paragraph{Regularized MFGs:}
\cref{lem:monotonically_decresing}, which supports the efficient execution of $\RMD$, is novel in two respects: $\RMD$ achieves LIC, and the divergence to the equilibrium decays exponentially.
Indeed, one of the few works that analyze the convergence rate of $\RMD$ states that the time-averaged policy $\frac{1}{T}\sum_{t=0}^T\pi^t$ up to time $T$ converges to the equilibrium in $\order{1/\varepsilon^2}$ iterations \parencite{zhang2023learning}.
Additionally, although it is a different approach from MD, it is known that applying fixed-point iteration to regularized MFG achieves an exponential convergence rate under the assumption that the regularization parameter \(\lambda\) is sufficiently large \parencite{cui21a}.
In contrast, our work includes the cases if $\lambda$ is small with $\eta<\eta^\ast$, where we note that $\eta^\ast$ depends on $\lambda$ though \eqref{eq:upperbound_eta}.
\paragraph{Optimization-based methods for MFGs:}
In addition to Mirror Descent and Fictitious Play, a new type of learning method using the characterization of MFGs as optimization problems has been proposed~\parencite{22M1524084,hu2024mfomlonlinemeanfieldreinforcement}. 
In this work, the authors establish local convergence of the algorithms without the assumption of monotonicity.
Specifically, it is proven that an optimization method can achieve LIC if the initial guess $\pi^0$ of the algorithm is sufficiently close to the Nash equilibrium, which cannot be verified a priori.
In contrast, our convergence results state ``global'' convergence under the assumption of monotonicity of the reward.
We note that the monotonicity can be checked before running the algorithms to ensure convergence of PP and RMD.

\paragraph{Mean-field-aware methods for MFGs:}
The authors in \textcite{zeng2024policy,Zhang24StochasticSemiGradient} have recently developed algorithms that sequentially update not only the policy $\pi$ but also the mean field $\mu$ and value function.
These algorithms have advantages over conventional methods in terms of computational complexity.
On the other hand, in theoretical analysis, restrictive assumptions such as contraction are still being used, and there is room for improvement under the monotonicity assumption.
\subsection{Comparison of MFG and related games}
In research on the method of learning in games, regularization of games is often studied in order to improve extrapolation.
For example, \textcite{pmlr-v97-geist19a} gave a unified convergence analysis method for regularized MDPs.
\parencite{LeonardosPS21} also discussed unique regularized equilibria of weighted zero-sum polymatrix games.
On the other hand, it is a difficult task to apply the same theoretical analysis methods to MFG as to these games.
In \cref{rmk:1,rmk:2}, we confirmed that the mean field $\mu$ in MFG can hinder convergence analysis.
n the following two paragraphs, we will describe more specifically the difficulty of applying the methods used in other games to MFG.

\paragraph{Sequential imperfect information game in \textcite{pmlr-v139-perolat21a} vs.~MFG:}
\textcite{pmlr-v139-perolat21a} focused on the reaching probability $\rho^\pi$ over histories in sequential imperfect information games, or extensive-form games.
In contrast, we focused on the distribution of states $\mu=m[\pi]$ in MFGs. The dependency on $\pi$ is fundamentally different: $\rho$ depends on $\pi$ in a linear-like manner, while our $\mu$ has a highly nonlinear dependency on $\pi$ thorough the function $m$ defined in \eqref{eq:forward_eq}.
Addressing this nonlinearity required novel techniques exploiting the inductive structure of \eqref{eq:forward_eq} with respect to time $h$.

\paragraph{MDP vs. MFG:}
The known argument in \textcite[Lemma 6]{DBLP:journals/siamjo/ZhanCHCLC23} cannot be directly applied to MFGs. The main reason is that the inner product $\langle Q^{k}(s),\pi^{k+1}(s)-p\rangle$ in the right-hand side of the three-point lemma concerns the policy at iteration index $k+1$, not $k$. In our analysis (as shown on page 18), this term is transformed into $\langle Q^{k}(s),\pi^{k}(s)-p \rangle$, which allows us to apply a crucial lemma (\cref{lem:piQ}) that holds for MFGs. This transformation is non-trivial and essential for our analysis.
In the three-point lemma, the term $D_{h_s}(\pi^{(k+1)}, \pi^{(k)})$ appears as a discretization error. In contrast, our analysis derives a reverse version $D_{\mu^\ast}(\pi^k, \pi^{k+1})$. This distinction is significant, especially for non-symmetric divergences such as the KL divergence. The reverse order in our analysis is crucial for the theoretical guarantees we provide.

\section{Proof of \texorpdfstring{\cref{thm:last_iterate_conv}}{thm:last_iterate_conv}}
\label{appendix:slingshot}

\begin{proof}[Proof of \cref{lem:decreasing_lem_slingshot}]
    Let $(\mu^\star,\pi^\star)$ be a mean-field equilibrium defined in \cref{def:NE}.
    By the update rule \eqref{eq:slingshot_update} and \cref{lem:Bellman_optimality}, we have
    \[
        \la Q_h^{\lambda,\sigma^k}(s,\bullet,\sigma^{k+1},\mu^{k+1})-\lambda\log\frac{\sigma^{k+1}_h\givenbullets}{\sigma^k_h\givenbullets},(\pi^\star_h-\sigma^{k+1}_h)\givenbullets\ra\leq0,
    \]
    for each $h\in[H]$, $s\in\State$ and $k\in\N$, i.e.,
    \begin{equation}
        \label{eq:slingshot_Bellman}
        \begin{aligned}
            &\KL(\pi^\star_h\givenbullets,\sigma^{k+1}_h\givenbullets)-\KL(\pi^\star_h\givenbullets,\sigma^{k}_h\givenbullets)-\KL(\sigma^{k+1}_h\givenbullets,\sigma^{k}_h\givenbullets)\\
            \leq{}&\frac{1}{\lambda}\la Q_h^{\lambda,\sigma^k}(s,\bullet,\sigma^{k+1},\mu^{k+1}),(\sigma^{k+1}_h-\pi^\star_h)\givenbullets\ra.
        \end{aligned}
    \end{equation}
    Taking the expectation with respect to $s\sim\mu^\star_h$ and summing \eqref{eq:slingshot_Bellman} over $h\in[H]$ yields
    \begin{align*}
        &D_{\mu^\star}(\pi^{\star},\sigma^{k+1})-D_{\mu^\star}(\pi^{\star},\sigma^{k})+D_{\mu^\star}(\sigma^{k+1},\sigma^{k})\\
        \leq{}&\frac{1}{\lambda}\sum_{h=1}^H\Expect_{s\sim\mu^\star_h}\qty[\la Q_h^{\lambda,\sigma^k}(s,\bullet,\sigma^{k+1},\mu^{k+1}),(\sigma^{k+1}_h-\pi^\star_h)\givenbullets\ra].
    \end{align*}
    By virtue of \cref{lem:piQ,lem:J-monotonicity}, we further have
    \begin{align*}
        &\sum_{h=1}^H\Expect_{s\sim\mu^\star_h}\qty[\la Q_h^{\lambda,\sigma^k}(s,\bullet,\sigma^{k+1},\mu^{k+1}),(\sigma^{k+1}_h)-\pi^\star_h)\givenbullets\ra]\\
        \leq{}& J^{\lambda,\sigma^k}(\mu^{k+1},\sigma^{k+1})-J^{\lambda,\sigma^k}(\mu^{k+1},\pi^{\star})-\lambda D_{\mu^\star}(\pi^\star,\sigma^k)+\lambda D_{\mu^\star}(\sigma^{k+1},\sigma^k)\\
        \leq{}& J^{\lambda,\sigma^k}(\mu^{\star},\sigma^{k+1})-J^{\lambda,\sigma^k}(\mu^{\star},\pi^{\star})-\lambda D_{\mu^\star}(\pi^\star,\sigma^k)+\lambda D_{\mu^\star}(\sigma^{k+1},\sigma^k)\\
        \leq{}& J(\mu^{\star},\sigma^{k+1})-J(\mu^{\star},\pi^{\star})-\lambda D_{\mu^{k+1}}(\sigma^{k+1},\sigma^k)+\lambda D_{\mu^\star}(\sigma^{k+1},\sigma^k), 
    \end{align*}
    where we use the identity $J^{\lambda,\sigma^k}(\mu^\star,\pi)=J(\mu^\star,\pi)-\lambda D_{m[\pi]}(\pi,\sigma^k)$ for $\pi\in\PolicySet$, and \cref{def:NE}.
\end{proof}

\section{Proof of \texorpdfstring{\cref{thm:exp_conv}}{thm:exp_conv}}\label{sec:conti_time}
\begin{proof}[Proof of \cref{thm:exp_conv}]
Let $h^\star\colon\R^{\abs{\A}}\to\R$ be the convex conjugate of $h$, i.e., $h^\star(y)=\sum_{a\in\A}\exp(y(a))$ for $y\in\R^{\abs{\A}}$.
From direct computations, we have
    \begin{align*}
        &\dv{t}\Dmuast(\varpi^\ast,\pi^t)\\
        ={}&\sum_{h=1}^H\Expect_{s\sim\mu^\ast_h}\qty[\dv{t}\KL(\varpi^\ast_h\givenbullets,\pi^t\givenbullets)]\\
        ={}&\sum_{h=1}^H\Expect_{s\sim\mu^\ast_h}\qty[\la1-\frac{\varpi^\ast_h\givenbullets}{\pi^t_h\givenbullets},\dv{t}\pi^t_h\givenbullets\ra]\\
        ={}&\sum_{h=1}^H\Expect_{s\sim\mu^\ast_h}\qty[\la1-\frac{\varpi^\ast_h\givenbullets}{\pi^t_h\givenbullets},\pi^t_h\given{a}{s}\qty(Q_h^{\lambda,\sigma}(s,a,\pi^t,\mu^t)-\lambda\log\frac{\pi^t_h\given{a}{s}}{\sigma_h\given{a}{s}})\ra]\\
        ={}&\sum_{h=1}^H\Expect_{s\sim\mu^\ast_h}\qty[\la(\pi^t_h-\varpi^\ast_h)\givenbullets,Q^{\lambda,\sigma}_h(s,\bullet,\pi^t,\mu^t)-\lambda\log\frac{\pi^t_h\given{a}{s}}{\sigma_h\given{a}{s}}
        \ra]\\
        ={}&\sum_{h=1}^H\Expect_{s\sim\mu^\ast_h}\qty[\la(\pi^t_h-\varpi^\ast_h)\givenbullets,Q^{\lambda,\sigma}_h(s,\bullet,\pi^t,\mu^t)\ra]
        -\lambda\sum_{h=1}^H\Expect_{s\sim\mu^\ast_h}\qty[\la(\pi^t_h-\varpi^\ast_h)\givenbullets,\log\frac{\pi^t_h\givenbullets}{\sigma_h\givenbullets}\ra].
    \end{align*}
We apply \cref{lem:piQ} for the first term and get
\begin{equation}\label{eq:first_term}
    \begin{aligned}
        &\sum_{h=1}^H\Expect_{s\sim\mu^\ast_h}\qty[\la(\pi^t_h-\varpi^\ast_h)\givenbullets,Q^{\lambda,\sigma}_h(s,\bullet,\pi^t,\mu^t)\ra]\\
    ={}&J^{\lambda,\sigma}(\mu^{t},\pi^{t})-J^{\lambda,\sigma}(\mu^{t},\varpi^\ast)-\lambda\Dmuast(\varpi^\ast,\sigma)+{\lambda\Dmuast(\pi^{t},\sigma)}.
    \end{aligned}
\end{equation}
Similarly, we apply \cref{lem:second_term} for the second term and get
\begin{equation}\label{eq:second_term}
    \sum_{h=1}^H\Expect_{s\sim\mu^\ast_h}\qty[\la(\pi^t_h-\varpi^\ast_h)\givenbullets,\log\frac{\pi^t_h\givenbullets}{\sigma_h\givenbullets}\ra]=\Dmuast(\pi^t,\sigma)-\Dmuast(\varpi^\ast,\sigma)+\Dmuast(\varpi^\ast,\pi^t).
\end{equation}
Combining \eqref{eq:first_term} and \eqref{eq:second_term} yields
\begin{align*}
    &\dv{t}\Dmuast(\varpi^\ast,\pi^t)=J^{\lambda,\sigma}(\mu^{t},\pi^{t})-J^{\lambda,\sigma}(\mu^{t},\varpi^\ast)-\lambda\Dmuast(\varpi^\ast,\pi^t).\\
\end{align*}
By virtue of the definition of mean-field equilibrium and \cref{lem:J-monotonicity}, we find
\[
    J^{\lambda,\sigma}(\mu^{t},\pi^{t})-J^{\lambda,\sigma}(\mu^{t},\varpi^\ast)\leq J^{\lambda,\sigma}(\mu^{\ast},\pi^{t})-J^{\lambda,\sigma}(\mu^{\ast},\varpi^\ast)\leq0.
\]
Therefore, we obtain
\begin{align*}
    &\dv{t}\Dmuast(\varpi^\ast,\pi^t)\leq -\lambda\Dmuast(\varpi^\ast,\pi^t).\\
\end{align*}

\end{proof}

\begin{proposition}
    Under \cref{assump:transition_kernel,assump:Lip,assump:monotonicity}, there exists a unique maximizer of $J^{\lambda,\sigma^k}(\mu^k,\bullet)\colon\PolicySet\to\R$ for each $k\in\N$.
    \label{prop:well-definedness_sling}
\end{proposition}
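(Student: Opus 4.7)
The plan is to recognize $\max_{\pi\in\PolicySet}J^{\lambda,\sigma^k}(\mu^k,\pi)$ as a standard finite-horizon entropy-regularized MDP and solve it by backward dynamic programming. Because $\mu^k$ is held fixed throughout the maximization, the reward $\tilde r_h(s,a)\coloneqq r_h(s,a,\mu^k_h)$ is $\pi$-independent, and the objective rewrites as
\[
J^{\lambda,\sigma^k}(\mu^k,\pi)=\sum_{h=1}^H\sum_{s,a}m[\pi]_h(s)\,\pi_h\given{a}{s}\Bigl[\tilde r_h(s,a)-\lambda\log\tfrac{\pi_h\given{a}{s}}{\sigma^k_h\given{a}{s}}\Bigr],
\]
which is the value of a soft-Bellman MDP with reference policy $\sigma^k$. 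A preliminary induction on $k$ shows that $\sigma^k$ has full support for every $k$: the base case is the hypothesis on $\pi^0$, and the induction step follows from the Gibbs form of the optimizer derived in the next step.

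First, starting from $V^\ast_{H+1}\equiv 0$, I would run the backward recursion $Q^\ast_h(s,a)\coloneqq\tilde r_h(s,a)+\sum_{s'}P_h\given{s'}{s,a}V^\ast_{h+1}(s')$ and $V^\ast_h(s)\coloneqq\lambda\log\sum_a\sigma^k_h\given{a}{s}\exp(Q^\ast_h(s,a)/\lambda)$. Because $\sigma^k$ is full-support, the per-state functional $p\mapsto\sum_a p(a)\bigl[Q^\ast_h(s,a)-\lambda\log(p(a)/\sigma^k_h\given{a}{s})\bigr]$ is strictly concave on $\Delta(\A)$, so its unique maximizer is the Gibbs distribution $\widetilde\pi^\ast_h\given{a}{s}\propto\sigma^k_h\given{a}{s}\exp(Q^\ast_h(s,a)/\lambda)$, which itself has full support. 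Second, the standard soft performance-difference identity, obtained by telescoping $V^\ast_h(s)=\sum_a\widetilde\pi^\ast_h\given{a}{s}\bigl[Q^\ast_h(s,a)-\lambda\log(\widetilde\pi^\ast_h\given{a}{s}/\sigma^k_h\given{a}{s})\bigr]$ along the chain $m[\pi]$ for an arbitrary $\pi\in\PolicySet$, yields
\[
J^{\lambda,\sigma^k}(\mu^k,\widetilde\pi^\ast)-J^{\lambda,\sigma^k}(\mu^k,\pi)=\sum_{h=1}^H\Expect_{s\sim m[\pi]_h}\Bigl[V^\ast_h(s)-\sum_a\pi_h\given{a}{s}\bigl(Q^\ast_h(s,a)-\lambda\log\tfrac{\pi_h\given{a}{s}}{\sigma^k_h\given{a}{s}}\bigr)\Bigr]\ge 0,
\]
and the per-state strict concavity forces $\pi_h\givenbullet{s}=\widetilde\pi^\ast_h\givenbullet{s}$ at every $(h,s)$ actually reached by $\pi$.

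The main obstacle, standard for regularized MDPs, is specifying $\pi$ uniquely at pairs $(h,s)$ with $m[\widetilde\pi^\ast]_h(s)=0$, where the objective is insensitive to $\pi_h\givenbullet{s}$. I would resolve this by declaring the slingshot update in \cref{alg:PP} to return the canonical Gibbs policy $\widetilde\pi^\ast$ defined by the backward recursion at \emph{every} $(h,s)\in[H]\times\State$; \cref{assump:transition_kernel} combined with the full support of $\widetilde\pi^\ast$ ensures the visited support grows maximally from $\mu_1$, while the Gibbs convention canonically fixes the policy at the remaining (structurally unreachable) pairs. This yields a unique maximizer $\widetilde\pi^\ast\in\PolicySet$ and propagates the full-support hypothesis to $\sigma^{k+1}=\widetilde\pi^\ast$, closing the induction required by \cref{alg:PP}.
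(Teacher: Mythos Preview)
Your reading freezes $\mu^k$ and reduces the problem to a standard entropy-regularized MDP. But the proposition is invoked precisely to show that the update \eqref{eq:slingshot_update} is well-defined, and there the population argument is $\mu^{k+1}=m[\sigma^{k+1}]$, which depends on the maximizer $\sigma^{k+1}$ itself. What must be shown is uniqueness of the \emph{regularized equilibrium} of \cref{def:regNE} for $J^{\lambda,\sigma^k}$, i.e.\ of the coupled fixed point $(\mu^{k+1},\sigma^{k+1})$; the ``$\mu^k$'' in the statement is a notational slip. Your backward soft-Bellman recursion only produces the best response to a \emph{fixed} population, never touches the monotonicity hypothesis (\cref{assump:monotonicity}), and therefore cannot exclude two distinct equilibria $(\mu^\ast_1,\varpi^\ast_1)\neq(\mu^\ast_2,\varpi^\ast_2)$ with different mean fields.

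The paper's argument is of a completely different kind: it runs the continuous-time regularized mirror flow \eqref{eq:mirror_flow} from an arbitrary full-support initial policy, uses \cref{thm:exp_conv} (which relies on monotonicity through \cref{lem:J-monotonicity}) to get $D_{\mu^\ast}(\varpi^\ast,\pi^t)\to 0$ for \emph{every} regularized equilibrium $\varpi^\ast$, establishes global existence of the flow from compactness of the KL sublevel set, and concludes that $\lim_{t\to\infty}\pi^t$ coincides with any regularized equilibrium, forcing uniqueness. A secondary issue: even for the frozen-$\mu$ problem you do solve, you concede non-uniqueness at unreachable $(h,s)$ and patch it by declaring a canonical Gibbs representative; that is a definition, not a proof of the uniqueness the proposition asserts.
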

\cref{prop:well-definedness_sling} also leads the uniqueness of the regularized equilibrium introduced in \cref{def:regNE}.
To elaborate further: Suppose there are two different regularized equilibria $(\mu^\ast_1, \varpi^\ast_1)$ and $(\mu^\ast_2, \varpi^\ast_2)$.
If we assume $\varpi^\ast_1 \neq \varpi^\ast_2$, the following contradiction arises:
From \cref{lem:J-monotonicity}, we have \[ J^{\lambda,\sigma}(\mu^\ast_1, \varpi^\ast_1) + J^{\lambda,\sigma}(\mu^\ast_2, \varpi^\ast_2) \leq J^{\lambda,\sigma}(\mu^\ast_1, \varpi^\ast_2) + J^{\lambda,\sigma}(\mu^\ast_2, \varpi^\ast_1) .\]
Additionally, from \cref{prop:well-definedness_sling}, we know that $ J^{\lambda,\sigma}(\mu^\ast_1, \varpi^\ast_1) \gneq J^{\lambda,\sigma}(\mu^\ast_1, \varpi^\ast_2) $ and $ J^{\lambda,\sigma}(\mu^\ast_2, \varpi^\ast_2) \gneq J^{\lambda,\sigma}(\mu^\ast_2, \varpi^\ast_1) $. Adding these two inequalities gives us $$ J^{\lambda,\sigma}(\mu^\ast_1, \varpi^\ast_1) + J^{\lambda,\sigma}(\mu^\ast_2, \varpi^\ast_2) \gneq J^{\lambda,\sigma}(\mu^\ast_1, \varpi^\ast_2) + J^{\lambda,\sigma}(\mu^\ast_2, \varpi^\ast_1) .$$
Therefore, $\varpi^\ast_1 = \varpi^\ast_2$. Moreover, by the definition of regularized equilibria, $\mu^\ast_1 = m[\varpi^\ast_1] = m[\varpi^\ast_2] = \mu^\ast_2$. This contradicts the assumption that the two equilibria are different. Thus, the equilibrium is unique.

The uniqueness of \cref{prop:well-definedness_sling} itself is a new result.
The proof uses a continuous-time dynamics shown in \cref{thm:exp_conv}, see \cref{sec:conti_time}.
In the following proof, we employ the same proof strategy as in \textcite[Theorem 2.10]{chill10}.
Before the proof, set $v_{s,h}^{\lambda,\sigma}(\pi)\coloneqq\pi_h\given{a}{s}\qty(Q_h^{\lambda,\sigma}(s,a,\pi,m[\pi])-\lambda\log\frac{\pi_h\given{a}{s}}{\sigma_h\given{a}{s}})$ for $\pi\in\PolicySet$.
\begin{proof}[Proof of \cref{prop:well-definedness_sling}]
    The existence is shown by a slightly modified version of \parencite[Theorem 2]{zhang2023learning}.
    It remains to prove the uniqueness.
    Fix the regularized equilibrium $\varpi^\ast\in\PolicySet$.
    
    First of all, we prove the global existence of \eqref{eq:mirror_flow}.
    By the local Lipschitz continuity of the right-hand side of the dynamics \eqref{eq:mirror_flow} and Picard--Lindel\"{o}f theorem, there exists a unique maximal solution $\pi$ of \eqref{eq:mirror_flow} with the initial condition $\eval{\pi}_{t=0}=\pi^0$.
    Namely, there exist $T\in(0,+\infty]$ and $\pi\colon[0,T)\to\R^{\abs{\A}}$ such that $\pi$ is differentiable on $(0,T)$ and it holds that \eqref{eq:mirror_flow} for all $t\in(0,T)$.
    Thus, \cref{thm:exp_conv} ensures that
    \[
        \Dmuast(\varpi^\ast,\pi^t)+\lambda\intop_0^t\Dmuast(\varpi^\ast,\pi^\tau)\dd{\tau}\leq\Dmuast(\varpi^\ast,\pi^0)\eqqcolon c<+\infty,
    \]
    for every $t\in[0,T)$.
    As a result, the trajectory $\Set{\pi^t\in\PolicySet|t\in[0,T)}$ is included in $K_c\coloneqq\Set{\pi\in\PolicySet|\Dmuast(\varpi^\ast,\pi)\leq c}$.
    Note that $K_c$ is compact from Pinsker inequality.

    Since the right-hand side of \eqref{eq:mirror_flow} is continuous on $K_c$, we obtain $\sup_{t\in[0,+\infty)}\norm{v_{s,h}^{\lambda,\sigma}(\pi^t)}<+\infty$.
    Thus, the equation \eqref{eq:mirror_flow} implies $\norm{\dv{\pi^t}{t}}$ is uniformly bounded on $[0,T)$.
    Hence, $\pi$ extends to a continuous function on $[0,T]$.

    To obtain a contradiction, we assume $T<+\infty$.
    Then, there exists the solution $\pi^\prime$ of \eqref{eq:mirror_flow} on a larger interval than $\pi$ with a new initial condition $\eval{\pi^\prime}_{t^\prime=T}=\pi^T$, which contradicts the maximality of the solution $\pi$.

    Therefore, the limit $\lim_{t\to\infty}\pi^t$ exists and is equal to $\varpi^\ast$.
    Here, $\varpi^\ast$ is arbitrary, so the regularized equilibrium is unique.
\end{proof}
\section{Proof of \texorpdfstring{\cref{lem:monotonically_decresing}}{thm:monotonically_decresing}}
\label{appendix:discrete}
We can easily show the following lemma by the optimality of $\pi^{t+1}$ in \eqref{eq:update}.
\begin{lemma}\label{lem:maximizer}
    It holds that
    \[
        \la\eta \qty(Q_h^{\lambda,\sigma}(s,\bullet,\pi^t,\mu^t)-\lambda\log\frac{\pi^{t+1}_h\givenbullets}{\sigma_h\givenbullets})-(1-\lambda\eta)\log\frac{\pi^{t+1}_h\givenbullets}{\pi^{t}_h\givenbullets},\delta\ra=0,
    \]
    for all $\delta\in\R^{\abs{\A}}$ such that $\sum_a\delta(a)=0$.
\end{lemma}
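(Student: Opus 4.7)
The plan is to recognize the claim as the first-order optimality condition for the strictly concave maximization problem that defines $\pi^{t+1}_h\givenbullets$ in the update rule \eqref{eq:update}, and to verify that the Lagrange multiplier associated with the simplex constraint is precisely what vanishes when we test against $\delta$ with $\sum_a\delta(a)=0$.

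First I would write the objective
\[
    F(p)\coloneqq\frac{\eta}{1-\lambda\eta}\Big(\la Q_h^{\lambda,\sigma}(s,\bullet,\pi^t,\mu^t),p\ra-\lambda\KL(p,\sigma_h\givenbullets)\Big)-\KL(p,\pi^t_h\givenbullets),
\]
and note that $F$ is strictly concave on $\Delta(\A)$ and goes to $-\infty$ as $p$ approaches the boundary, since the $\KL$ term forces $p(a)>0$ whenever $\pi^t_h\given{a}{s}>0$ and \cref{assump:full_spt} and an induction on $t$ ensures $\pi^t_h\given{a}{s}>0$ for every $a\in\A$. Consequently, the maximizer $\pi^{t+1}_h\givenbullets$ lies in the relative interior of $\Delta(\A)$, and the only active constraint is $\sum_a p(a)=1$.

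Next I would compute the gradient
\[
    \nabla_p F(p)=\frac{\eta}{1-\lambda\eta}\Big(Q_h^{\lambda,\sigma}(s,\bullet,\pi^t,\mu^t)-\lambda\log\frac{p}{\sigma_h\givenbullets}-\lambda\bm{1}\Big)-\log\frac{p}{\pi^t_h\givenbullets}-\bm{1},
\]
using $\nabla_p\KL(p,q)=\log(p/q)+\bm{1}$ componentwise. By the Lagrange condition, there exists $c\in\R$ such that $\nabla_p F(\pi^{t+1}_h\givenbullets)=c\,\bm{1}$, so for every $\delta\in\R^{\abs\A}$ with $\sum_a\delta(a)=0$ we have $\la \nabla_p F(\pi^{t+1}_h\givenbullets),\delta\ra=0$, and the two constant $\bm{1}$ terms in the gradient contribute nothing to the pairing with $\delta$.

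Finally, multiplying the identity $\la\nabla_p F(\pi^{t+1}_h\givenbullets),\delta\ra=0$ by $(1-\lambda\eta)$ and collecting the logarithmic terms yields exactly the claimed equality. There is no real obstacle here — the only point worth double-checking is the interiority of $\pi^{t+1}$, since otherwise the logarithm in the first-order condition would be ill-defined; this is why I would begin the proof by carefully invoking \cref{assump:full_spt} together with the coercivity of the KL regularizer.
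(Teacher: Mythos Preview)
Your approach is correct and is exactly what the paper has in mind: the lemma is the first-order (KKT) condition for the strictly concave problem in \eqref{eq:update}, and the paper in fact omits the proof entirely (just as the analogous \cref{lem:Bellman_optimality} is dispatched in one line by citing the Bellman optimality equation).

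One small inaccuracy worth fixing: $F$ does \emph{not} tend to $-\infty$ as $p$ approaches the boundary of $\Delta(\A)$, since $\KL(p,q)$ with $q$ fully supported remains finite there via the convention $0\log 0=0$. Interiority of $\pi^{t+1}_h\givenbullets$ is still true, but it follows either from the explicit closed form displayed in \cref{alg:reward_transform} (each factor is strictly positive under \cref{assump:full_spt} and the induction hypothesis on $\pi^t$), or from the observation that $\partial_{p(a)}F(p)\to+\infty$ as $p(a)\to 0^+$, so no boundary point can be optimal. With that correction your sketch is complete.
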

We next show that $(\pi^t)_t$ is apart from the boundary of $\A$ as follows.
\begin{lemma}
\label{lem:lower_bound_on_action_probability}
Let $(\pi^t)_t$ be the sequence defined by \eqref{eq:update} and $\varpi^\ast$ be the policy satisfies \cref{def:regNE}.
Assume that there exist vectors $w_h^\sigma$ and $w_h^0\givenbullets\in \mathbb{R}^{|\mathcal{A}|}$ satisfying
\begin{align*}
    \lambda H \log\sigma_{\textup{min}}&\leq w_h^\sigma\given{a}{s}\leq -\lambda H\log\sigma_{\textup{min}} ,&&\sigma_h\given{a}{s} \propto \exp\left(\frac{w_h^\sigma\given{a}{s}}{\lambda}\right),\\
    2\lambda H\log\sigma_{\textup{min}}&\leq w_h^0\given{a}{s} \leq H ,&&\pi_h^0\given{a}{s} \propto \exp\left(\frac{w_h^0\given{a}{s}}{\lambda}\right).
\end{align*}
for all $a\in \A$.$\pi^0\in\PolicySet$, $h\in [H]$ and $s\in \mathcal{S}$.
Then, for any $h\in [H], s\in \mathcal{S}$, and $t\geq 0$, it holds that
\begin{align*}
    \max\left\{\left\|\log \pi_h^t(s) \right\|_{\infty}, \left\|\log \pi_h^{\ast}(s) \right\|_{\infty}\right\} \leq \frac{H(1 - \lambda \log\sigma_{\textup{min}})}{\lambda} + \log |\mathcal{A}|.
\end{align*}
\end{lemma}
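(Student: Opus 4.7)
The plan is a combined induction maintaining a weight representative $w_h^t\givenbullets$ of the iterate $\pi_h^t$ with a uniform bound on its range over actions; for $\varpi^\ast$ I would argue separately from its Bellman-type fixed-point characterization.

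First, I rewrite the update in line 7 of \cref{alg:reward_transform} in logit form. Defining weights recursively by
\[
w_h^{t+1}\given{a}{s} := (1-\lambda\eta)\,w_h^t\given{a}{s} + \lambda\eta\,w_h^\sigma\given{a}{s} + \lambda\eta\,Q_h^{\lambda,\sigma}(s,a,\pi^t,\mu^t),
\]
one checks by induction that $\pi_h^{t}\given{a}{s} \propto \exp(w_h^t\given{a}{s}/\lambda)$ for every $t$. Second, using $r_h\in[0,1]$ together with the elementary estimate $\KL(p,\sigma_h\givenbullets)\leq -\log\sigma_{\mathrm{min}}$, valid for any $p\in\Delta(\A)$ under \cref{assump:full_spt}, the definitions \eqref{eq:state_value}--\eqref{eq:state_action_value} yield the uniform bound
\[
\lambda H\log\sigma_{\mathrm{min}} \,\leq\, Q_h^{\lambda,\sigma}(s,a,\pi,\mu) \,\leq\, H,
\]
uniformly in $(h,s,a,\pi,\mu)$. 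This is the key input for the induction.

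Next, I would run the induction on $t$. Because $\lambda\eta\in[0,1]$, the recursion for $w_h^t$ is a weighted combination of $w_h^t$ with the bounded drift $w_h^\sigma + Q_h^{\lambda,\sigma}$. I would propagate the invariant that $\max_a w_h^t\given{a}{s} - \min_a w_h^t\given{a}{s} \leq H(1-\lambda\log\sigma_{\mathrm{min}})$: the initial range is controlled by the hypothesized bounds on $w_h^0$, and the driving term has range bounded similarly via the bounds on $w_h^\sigma$ and the $Q$-estimate, so the invariant is preserved under the weighted step. The soft-max identity
\[
\log\pi_h^t\given{a}{s} = w_h^t\given{a}{s}/\lambda - \log\sum_{a'\in\A}\exp(w_h^t\given{a'}{s}/\lambda),
\]
combined with the elementary $\log\mathrm{sumexp}$ bracket $\max_{a'}w_h^t\given{a'}{s}/\lambda\le\log\sum_{a'}\exp(w_h^t\given{a'}{s}/\lambda)\le\max_{a'}w_h^t\given{a'}{s}/\lambda+\log|\A|$, then yields $|\log\pi_h^t\given{a}{s}|\leq (\max_a w_h^t\givenbullets-\min_a w_h^t\givenbullets)/\lambda + \log|\A|$, which is exactly the claimed estimate. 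For $\varpi^\ast$, I would appeal to its characterization as a fixed point of the regularized Bellman operator, which expresses $\lambda\log\varpi^\ast_h\givenbullets$ as a soft-max of $\lambda\log\sigma_h\givenbullets$ shifted by the optimal $Q$-value at equilibrium; the same $Q$-bound then gives the analogous estimate on $\|\log\varpi^\ast_h(s)\|_\infty$.

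The main obstacle is the bookkeeping in the induction. Weights are defined only up to an additive constant (since $\pi_h^t$ is invariant under shifting $w_h^t$ by a scalar), so the inductive hypothesis must be phrased as a range bound rather than a uniform $\ell^\infty$ bound on $w_h^t$, and the constants must be tracked carefully so the invariant is exactly preserved under the convex step. A second subtlety is that $Q_h^{\lambda,\sigma}$ depends on $(\pi^t,\mu^t)$, so its uniform bound must hold along the entire orbit; fortunately this follows directly from $r_h\in[0,1]$ and \cref{assump:full_spt}, independently of the iterate.
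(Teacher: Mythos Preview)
Your proposal is correct and follows essentially the same route as the paper: represent $\pi_h^t$ via logits satisfying the recursion you wrote, bound $Q_h^{\lambda,\sigma}$ uniformly from $r_h\in[0,1]$ and \cref{assump:full_spt} (the paper packages this as \cref{lem:bound_on_value_function}), propagate the invariant by induction on $t$, and handle $\varpi^\ast$ separately via its Bellman fixed-point form (\cref{lem:Bellman_optimality}). The only cosmetic difference is that the paper carries the absolute box $w_h^t\in[2\lambda H\log\sigma_{\min},\,H]$ rather than your range invariant, which sidesteps the additive-constant ambiguity you flagged; the two framings are equivalent.
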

\begin{proof}
We first show that $\pi_h^t$ can be written as
\begin{align}
    \pi_h^t\given{a}{s} &\propto \exp\left(\frac{w_h^t\given{a}{s}}{\lambda}\right),
    \label{eq:exists_w}
\end{align}
for a vector $w_h^t\givenbullets\in \mathbb{R}^{|\mathcal{A}|}$ satisfying
$
    2\lambda H\log\sigma_{\textup{min}} \leq w_h^t\given{a}{s} \leq H.
$
We prove it by induction on $t$.
Suppose that there exist $t\in\N$ and $w^t_h$ satisfying \eqref{eq:exists_w}.
By the update rule \eqref{eq:update}, we have
\begin{align*}
    \pi_h^{t+1}\given{a}{s} &\propto\qty(\sigma_h\given{a}{s})^{\lambda\eta}\qty(\pi^t_h\given{a}{s})^{1-\lambda\eta}\exp\qty(\eta Q_h^{\lambda,\sigma}(s,a,\pi^t,\mu^t))\\
    &\propto \exp\left(\frac{\lambda\eta w_h^\sigma\given{a}{s}+(1 - \eta \lambda) w_h^t\given{a}{s} + \lambda\eta Q_h^{\lambda,\sigma}(s,a,\pi^t,\mu^t)}{\lambda}\right).
\end{align*}
Set $w_h^{t+1}\given{a}{s}\coloneqq \lambda\eta w_h^\sigma\given{a}{s}+(1 - \eta \lambda) w_h^t(a | s) + \lambda\eta Q_h^{\lambda,\sigma}(s,a,\pi^t,\mu^t)$, we get
$
    \pi_h^{t+1}(a | s) \propto \e^{\frac{w_h^{t+1}\given{a}{s}}{\lambda}}.
$
From \cref{lem:bound_on_value_function} and the hypothesis of the induction, we get $2\lambda H\log\sigma_{\textup{min}}\leq w_h^{t+1}\given{a}{s}\leq H$.

Then we have for any $a_1, a_2\in \mathcal{A}$:
\begin{align*}
    \frac{\pi^t_h\given{a_1}{s}}{\pi^t_h\given{a_2}{s}} &= \exp\left(\frac{w_h^t\given{a_1}{s} - w_h^t\given{a_2}{s}}{\lambda}\right) \leq \exp\left(\frac{H(1 - \lambda \log\sigma_{\textup{min}})}{\lambda}\right).
\end{align*}
It follows that:
\begin{align*}
    \min_{a\in \mathcal{A}}\pi^t(a | s) \geq\exp\qty(\frac{-H(1-\lambda\log\sigma_{\textup{min}})}{\lambda})\max_{a^\prime\in\A}\pi^t_h\given{a}{s}\geq \abs{\A}^{-1}\exp\qty(\frac{-H(1-\lambda\log\sigma_{\textup{min}})}{\lambda}).
\end{align*}
Therefore, we have:
\begin{align*}
    \left\|\log \pi^t_h(s)\right\|_{\infty} \leq \frac{H(1 - \lambda \log\sigma_{\textup{min}})}{\lambda} + \log |\mathcal{A}|.
\end{align*}

From \cref{lem:Bellman_optimality,lem:bound_on_value_function}, we have for $\pi_h^{\ast}$ and $a_1, a_2\in \mathcal{A}$:
\begin{align*}
    \frac{\pi_h^{\ast}\given{a_1}{s}}{\pi_h^{\ast}\given{a_2}{s}} &= \exp\qty(\frac{Q_h^{\lambda,\sigma}(s,a_1,\pi^t,\mu^t)+w^\sigma_h\given{a_1}{s}-Q_h^{\lambda,\sigma}(s,a_2,\pi^t,\mu^t)-w^\sigma_h\given{a_2}{s}}{\lambda})\\
    &\leq\exp\left(\frac{H(1 - \lambda \log\sigma_{\textup{min}})}{\lambda}\right),
\end{align*}
and, we get $\left\|\log \pi^{\ast}_h(s)\right\|_{\infty} \leq \frac{H(1 - \lambda \log\sigma_{\textup{min}})}{\lambda} + \log |\mathcal{A}|$.
\end{proof}
\begin{lemma}
    \label{lem:diff_L}
    Let $G_h^{\lambda,\sigma}(s,a,\pi^t,\mu^t)\coloneqq Q_h^{\lambda,\sigma}(s,a,\pi^t,\mu^t)-\lambda\log\dfrac{\pi^t_h\given{a}{s}}{\sigma_h\given{a}{s}}$.
    \begin{align*}
        &\abs{G_h^{\lambda,\sigma}(s,a,\pi^t,\mu^t)-G_h^{\lambda,\sigma}(s,a^\prime,\pi^t,\mu^t)}\\
        \leq{}&2L\sum_{l=h}^{H}\norm{\mu_l^t-\mu^\ast_l}_1
            +C^{\lambda,\sigma,H,\abs{\A}}\qty(E_h(a,\pi^t,\varpi^\ast)
            +E_h(a^\prime,\pi^t,\varpi^\ast)),
    \end{align*}
    for $a$, $a^\prime\in\A$.
    Here, 
    \[
    C^{\lambda,\sigma,H,\abs{\A}}\coloneqq 2\lambda\abs{\A}\e^{\frac{H(1 - \lambda \log\sigma_{\textup{min}})}{\lambda}}+2(1+H)-\lambda(1+2H)\log\sigma_{\textup{min}}+2\lambda\log\abs{\A},
    \]
    and
    \[
        E_h(a,\pi^t,\varpi^\ast)\coloneqq\Expect\Given{\sum_{l=h}^{H}\norm{\pi_l^\ast\givenbullet{s_l}-\pi^t_l\givenbullet{s_l}}_1}{
    \begin{array}{c}
        s_{h}=s,a_h=a,\\
        s_{l+1}\sim P_l\givenbullet{s_l,a_l},\\
        a_{l}\sim\varpi^\ast_l\givenbullet{s_{l}} \\
        \text{\rm{for each} }l\in\{h,\dots,H\}
    \end{array}
    }.
    \]
\end{lemma}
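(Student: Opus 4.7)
The plan is to compare both quantities $G_h^{\lambda,\sigma}(s,a,\pi^t,\mu^t)$ and $G_h^{\lambda,\sigma}(s,a^\prime,\pi^t,\mu^t)$ with their counterparts at the regularized equilibrium $(\varpi^\ast,\mu^\ast)$ and to exploit the Bellman optimality condition there. Since $\varpi^\ast$ maximizes the soft objective with KL regularization to $\sigma$, we have $\varpi^\ast_h\given{a}{s} \propto \sigma_h\given{a}{s}\exp\qty(Q_h^{\lambda,\sigma}(s,a,\varpi^\ast,\mu^\ast)/\lambda)$ (cf.\ \cref{lem:Bellman_optimality}), which is equivalent to saying that $G_h^{\lambda,\sigma}(s,\bullet,\varpi^\ast,\mu^\ast)$ is constant in its action argument. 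Writing $\Delta_h(a) \coloneqq G_h^{\lambda,\sigma}(s,a,\pi^t,\mu^t) - G_h^{\lambda,\sigma}(s,a,\varpi^\ast,\mu^\ast)$, the identity
\[
    G_h^{\lambda,\sigma}(s,a,\pi^t,\mu^t) - G_h^{\lambda,\sigma}(s,a^\prime,\pi^t,\mu^t) = \Delta_h(a) - \Delta_h(a^\prime)
\]
reduces the proof to bounding $\abs{\Delta_h(a)} + \abs{\Delta_h(a^\prime)}$ by the triangle inequality; this accounts for the factor of $2$ appearing on every term on the right-hand side of the claim.

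I would next expand $\Delta_h(a)$ using \eqref{eq:state_value} and \eqref{eq:state_action_value}:
\begin{align*}
    \Delta_h(a) =& \qty(r_h(s,a,\mu^t_h) - r_h(s,a,\mu^\ast_h)) - \lambda\log\frac{\pi^t_h\given{a}{s}}{\varpi^\ast_h\given{a}{s}} \\
    &+ \Expect_{s_{h+1}\sim P_h\givenbullet{s,a}}\qty[V_{h+1}^{\lambda,\sigma}(s_{h+1},\mu^t,\pi^t) - V_{h+1}^{\lambda,\sigma}(s_{h+1},\mu^\ast,\varpi^\ast)].
\end{align*}
The reward gap is immediately controlled by $L\norm{\mu^t_h-\mu^\ast_h}$ via \cref{assump:Lip}. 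For the $V$-difference I would perform a performance-difference-style unrolling along trajectories that, starting from $s_{h+1}$, sample $a_l\sim\varpi^\ast_l\givenbullet{s_l}$ and $s_{l+1}\sim P_l\givenbullet{s_l,a_l}$ for $l>h$; this is precisely the process underlying $E_h(a,\pi^t,\varpi^\ast)$. At each step the recursion yields (i) a reward gap bounded by $L\norm{\mu^t_l - \mu^\ast_l}$, which aggregates into the $2L\sum_{l=h}^H\norm{\mu^t_l-\mu^\ast_l}$ term, and (ii) a policy-comparison residue of the form $\la Q_l^{\lambda,\sigma}(s_l,\bullet,\pi^t,\mu^t) - \lambda\log\qty(\pi^t_l\givenbullet{s_l}/\sigma_l\givenbullet{s_l}),\,\pi^t_l\givenbullet{s_l} - \varpi^\ast_l\givenbullet{s_l}\ra$, which I would bound by $\norm{\pi^t_l\givenbullet{s_l} - \varpi^\ast_l\givenbullet{s_l}}$ times an $L^\infty$ bound on the integrand.

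The main obstacle is supplying this \emph{uniform} bound, since the log term and the KL regularization are genuinely unbounded near the boundary of the simplex. Here \cref{lem:lower_bound_on_action_probability} is essential: it gives
\[
    \min\qty{\pi^t_l\given{a^{\prime\prime}}{s^\prime},\,\varpi^\ast_l\given{a^{\prime\prime}}{s^\prime}} \geq \frac{1}{\abs{\A}}\exp\qty(-\frac{H(1-\lambda\log\sigma_{\mathrm{min}})}{\lambda}),
\]
so that $\log$ is Lipschitz on the relevant range with constant at most $\abs{\A}\e^{H(1-\lambda\log\sigma_{\mathrm{min}})/\lambda}$. This is the source of the leading $2\lambda\abs{\A}\e^{H(1-\lambda\log\sigma_{\mathrm{min}})/\lambda}$ inside $C^{\lambda,\sigma,H,\abs{\A}}$ (the $2$ coming from applying the bound separately to $\pi^t$ and $\varpi^\ast$). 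The remaining summands $2(1+H) - \lambda(1+2H)\log\sigma_{\mathrm{min}} + 2\lambda\log\abs{\A}$ aggregate the uniform bounds on $\abs{Q_l^{\lambda,\sigma}}$ provided by \cref{lem:bound_on_value_function} and on $\norm{\log\pi^t_l\givenbullet{s_l}}_\infty$ and $\norm{\log\varpi^\ast_l\givenbullet{s_l}}_\infty$ provided by \cref{lem:lower_bound_on_action_probability}, each weighted by the horizon length $H$.

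Finally I would chain the per-step bounds through the recursion; the expectation over the post-$h$ trajectory under $\varpi^\ast$ is then exactly $E_h(a,\pi^t,\varpi^\ast)$ once the $l=h$ log term from $\Delta_h(a)$ is absorbed. Applying the symmetric argument to $a^\prime$ produces the $E_h(a^\prime,\pi^t,\varpi^\ast)$ contribution, and summing the two estimates yields the stated inequality.
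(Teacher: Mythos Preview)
Your proposal is correct and follows essentially the same approach as the paper. The paper decomposes slightly differently---it adds and subtracts $Q_h^{\lambda,\sigma}(s,\cdot,\varpi^\ast,\mu^\ast)$ while keeping the $\log(\pi^t/\sigma)$ term fixed, splitting into a ``log-ratio'' piece (handled via \cref{lem:Bellman_optimality} and the lower bound of \cref{lem:lower_bound_on_action_probability}) and a ``$Q$-difference'' piece (handled via \cref{lem:diff_Qast}, which is exactly the $V$-unrolling along $\varpi^\ast$-trajectories you describe)---whereas you group by action into $\Delta_h(a)$ and $\Delta_h(a')$; but the key ingredients, the source of each summand in $C^{\lambda,\sigma,H,\abs{\A}}$, and the way the $l=h$ log term is absorbed into $E_h$ all match.
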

\begin{proof}[Proof of \cref{lem:diff_L}]
    We first compute the absolute value as follows:
    \begin{equation}
    \label{eq:diffL}
        \begin{aligned}
            &\abs{G_h^{\lambda,\sigma}(s,a,\pi^t,\mu^t)-G_h^{\lambda,\sigma}(s,a^\prime,\pi^t,\mu^t)}\\
            ={}&\abs{\qty(Q_h^{\lambda,\sigma}(s,a,\pi^t,\mu^t)-\lambda\log\dfrac{\pi^t_h\given{a}{s}}{\sigma_h\given{a}{s}})-\qty(Q_h^{\lambda,\sigma}(s,a^\prime,\pi^t,\mu^t)-\lambda\log\dfrac{\pi^t_h\given{a^\prime}{s}}{\sigma_h\given{a^\prime}{s}})}\\
            \leq{}&\abs{\qty(Q_h^{\lambda,\sigma}(s,a,\varpi^\ast,\mu^\ast)-\lambda\log\dfrac{\pi^t_h\given{a}{s}}{\sigma_h\given{a}{s}})-\qty(Q_h^{\lambda,\sigma}(s,a^\prime,\varpi^\ast,\mu^\ast)-\lambda\log\dfrac{\pi^t_h\given{a^\prime}{s}}{\sigma_h\given{a^\prime}{s}})}\\
            &+\abs{\qty(Q_h^{\lambda,\sigma}(s,a,\pi^t,\mu^t)-Q_h^{\lambda,\sigma}(s,a,\varpi^\ast,\mu^\ast))-\qty(Q_h^{\lambda,\sigma}(s,a^\prime,\pi^t,\mu^t)-Q_h^{\lambda,\sigma}(s,a^\prime,\varpi^\ast,\mu^\ast))}.
        \end{aligned}
    \end{equation}
    By \cref{lem:Bellman_optimality,lem:lower_bound_on_action_probability}, the first term of right-hand side in \eqref{eq:diffL_first} can be computed as
    \begin{equation}
        \label{eq:diffL_first}
        \begin{aligned}
            &\abs{\qty(Q_h^{\lambda,\sigma}(s,a,\varpi^\ast,\mu^\ast)-\lambda\log\dfrac{\pi^t_h\given{a}{s}}{\sigma_h\given{a}{s}})-\qty(Q_h^{\lambda,\sigma}(s,a^\prime,\varpi^\ast,\mu^\ast)-\lambda\log\dfrac{\pi^t_h\given{a^\prime}{s}}{\sigma_h\given{a^\prime}{s}})}\\
            ={}&\abs{\qty(\lambda\log\dfrac{\varpi^\ast_h\given{a}{s}}{\sigma_h\given{a}{s}}-\lambda\log\dfrac{\pi^t_h\given{a}{s}}{\sigma_h\given{a}{s}})-\qty(\lambda\log\dfrac{\varpi^\ast_h\given{a^\prime}{s}}{\sigma_h\given{a^\prime}{s}}-\lambda\log\dfrac{\pi^t_h\given{a^\prime}{s}}{\sigma_h\given{a^\prime}{s}})}\\
            \leq{}&\lambda\qty(\abs{\log\dfrac{\varpi^\ast_h\given{a}{s}}{\pi_h^t\given{a}{s}}}+\abs{\log\dfrac{\varpi^\ast_h\given{a^\prime}{s}}{\pi_h^t\given{a^\prime}{s}}})\\
            \leq{}&\lambda
            \qty(\frac{1}{\varpi^\ast_{\textup{min}}}+\frac{1}{\min_{a\in\A}\pi^t_h\given{a}{s}})
            \qty(\abs{\varpi^\ast_h\given{a}{s}-\pi^t_h\given{a}{s}}+
            \abs{\varpi^\ast_h\given{a^\prime}{s}-\pi^t_h\given{a^\prime}{s}})\\
            \leq{}&2\lambda\abs{\A}\exp\left(\frac{H(1 - \lambda \log\sigma_{\textup{min}})}{\lambda}\right)
            \qty(\abs{\varpi^\ast_h\given{a}{s}-\pi^t_h\given{a}{s}}+
            \abs{\varpi^\ast_h\given{a^\prime}{s}-\pi^t_h\given{a^\prime}{s}}).
        \end{aligned}
    \end{equation}
    By \cref{lem:diff_Qast,lem:Lip_m}, the second term is bounded as
    \begin{align*}
        &\abs{\qty(Q_h^{\lambda,\sigma}(s,a,\pi^t,\mu^t)-Q_h^{\lambda,\sigma}(s,a,\varpi^\ast,\mu^\ast))-\qty(Q_h^{\lambda,\sigma}(s,a^\prime,\pi^t,\mu^t)-Q_h^{\lambda,\sigma}(s,a^\prime,\varpi^\ast,\mu^\ast))}\\
        \leq{}&2L\sum_{l=h}^{H}\norm{\mu_l^t-\mu^\ast_l}_1\\
        &+C^{\lambda,\sigma}(\pi^t,\varpi^\ast)\Expect\Given{\sum_{l=h+1}^{H}\norm{\pi_l^\ast\givenbullet{s_l}-\pi^t_l\givenbullet{s_l}}_1}{
    \begin{array}{c}
        s_{h+1}\sim P_{h}\given{\bullet}{s,a},\\
        s_{l+1}\sim P_l\givenbullet{s_l,a_l},\\
        a_{l}\sim\varpi^\ast_l\givenbullet{s_{l}} \\
        \text{\rm{for each} }l\in\{h+1,\dots,H\}
    \end{array}
    }\\
    &+C^{\lambda,\sigma}(\pi^t,\varpi^\ast)\Expect\Given{\sum_{l=h+1}^{H}\norm{\pi_l^\ast\givenbullet{s_l}-\pi^t_l\givenbullet{s_l}}_1}{
    \begin{array}{c}
        s_{h+1}\sim P_{h}\given{\bullet}{s,a^\prime},\\
        s_{l+1}\sim P_l\givenbullet{s_l,a_l},\\
        a_{l}\sim\varpi^\ast_l\givenbullet{s_{l}} \\
        \text{\rm{for each} }l\in\{h+1,\dots,H\}
    \end{array}
    }.
    \end{align*}
    Furthermore, $C^{\lambda,\sigma}(\pi^t,\varpi^\ast)$ can be bounded as
    \begin{align*}
            C^{\lambda,\sigma}(\pi^t,\varpi^\ast)\leq{}&2-\lambda\log\sigma_{\textup{min}}+2\lambda\qty(\frac{H(1 - \lambda \log\sigma_{\textup{min}})}{\lambda} + \log |\mathcal{A}|)\\
            ={}&2(1+H)-\lambda(1+2H)\log\sigma_{\textup{min}}+2\lambda\log\abs{\A}.
    \end{align*}
\end{proof}

\begin{proof}[Proof of \cref{lem:monotonically_decresing}]
Set
\begin{align}
        C\coloneqq{}& 4H^2\qty(L^2H^2+\frac{\qty(C^{\lambda,\sigma,H,\abs{\A}})^2}{\abs{\A}\exp\qty(\frac{H(1-\lambda\log\sigma_{\textup{min}})}{\lambda})}) \label{eq:heavy_constant}\\
        ={}&
        4H^2\qty(L^2H^2+\frac{\qty(2\lambda\abs{\A}\e^{\frac{H(1 - \lambda \log\sigma_{\textup{min}})}{\lambda}}+2(1+H)-\lambda(1+2H)\log\sigma_{\textup{min}}+2\lambda\log\abs{\A})^2}{\abs{\A}\e^{\frac{H(1-\lambda\log\sigma_{\textup{min}})}{\lambda}}})\nonumber\\
    \eta^\ast ={}& \min\qty{\frac{1}{2H\qty(L+C^{\lambda,\sigma,H,\abs{\A}})},\frac{\lambda}{2C}}\label{eq:upperbound_eta},
\end{align}
where $C^{\lambda,\sigma,H,\abs{\A}}$ is the constant defined in \cref{lem:diff_L}.
    We prove the inequality by induction on $t$.
    \paragraph{(I) Base step \texorpdfstring{$t=0$}{t=0}:}
    It is obvious.
    \paragraph{(II) Inductive step:}
    Suppose that there exists $t\in\N$ such that $\pi^t\in\Omega$.
    \cref{lem:maximizer} yields that
    \begin{equation}
    \label{eq:key_step}
        \begin{aligned}
            &\Dmuast(\varpi^\ast,\pi^{t+1})-\Dmuast(\varpi^\ast,\pi^{t})-\Dmuast(\pi^t,\pi^{t+1})\\
            ={}&\sum_{h=1}^H\Expect_{s\sim\mu^\ast_h}\qty[\la\log\frac{\pi^t_h\givenbullets}{\pi^{t+1}_h\givenbullets},(\varpi^\ast_h-\pi^{t}_h)\givenbullets\ra]\\
            ={}&-\sum_{h=1}^H\Expect_{s\sim\mu^\ast_h}\qty[\la\frac{\eta}{1-\lambda\eta}\qty(Q_h^{\lambda,\sigma}(s,\bullet,\pi^t,\mu^t)-\lambda\log\frac{\pi^{t+1}_h\givenbullets}{\sigma_h\givenbullets}),(\varpi^\ast_h-\pi^{t}_h)\givenbullets\ra]\\
            ={}&-\frac{\eta}{1-\lambda\eta}\underbrace{\sum_{h=1}^H\Expect_{s\sim\mu^\ast_h}\qty[\la \Qlambdasigma(s,\bullet,\pi^{t},\mu^{t}),\qty(\varpi_h^\ast-\pi^{t}_h)\givenbullets\ra]}_{\eqqcolon\one}\\
            &{}+\frac{\lambda\eta}{1-\lambda\eta}\sum_{h=1}^H\Expect_{s\sim\mu^\ast_h}\qty[\la\log\frac{\pi^{t+1}_h\givenbullets}{\sigma_h\givenbullets} ,\qty(\varpi_h^\ast-\pi^{t+1}_h)\givenbullets\ra]\\
            \leq{}& -\frac{\eta}{1-\lambda\eta}\qty(\lambda\Dmuast(\varpi^\ast,\sigma)-\lambda\Dmuast(\pi^{t+1},\sigma))\\
            &{}+\frac{\lambda\eta}{1-\lambda\eta}\qty(\Dmuast(\varpi^\ast,\sigma)-\Dmuast(\varpi^\ast,\pi^{t+1})-\Dmuast(\pi^{t+1},\sigma))\\
            \leq{}&-\frac{\lambda\eta}{1-\lambda\eta}\Dmuast(\varpi^\ast,\pi^{t+1}),
        \end{aligned}
    \end{equation}
    where $\one$ is bounded from below as follows:
    By \cref{lem:piQ}, we get
    \[
        \one 
            ={} J^{\lambda,\sigma}(\mu^{t+1},\varpi^\ast)-J^{\lambda,\sigma}(\mu^{t+1},\pi^{t+1})
            +\lambda\Dmuast(\varpi^\ast,\sigma)-{\lambda\Dmuast(\pi^{t+1},\sigma)}.
    \]
    By virtue of the definition of mean-field equilibrium and \cref{lem:J-monotonicity}, we find
    \[
        J^{\lambda,\sigma}(\mu^{t+1},\varpi^\ast)-J^{\lambda,\sigma}(\mu^{t+1},\pi^{t+1})\geq J^{\lambda,\sigma}(\mu^{\ast},\varpi^\ast)-J^{\lambda,\sigma}(\mu^{\ast},\pi^{t+1})\geq0.
    \]
    Then, we obtain
    \[
        \one\geq\lambda\Dmuast(\varpi^\ast,\sigma)-\lambda\Dmuast(\pi^{t+1},\sigma).
    \]
For the last term $\Dmuast(\pi^t,\pi^{t+1})$ of the leftmost hand of \eqref{eq:key_step}, we can employ a similar argument to \parencite[Lemma 5.4]{pmlr-v206-abe23a}, that is, we can estimate $\Dmuast(\pi^t,\pi^{t+1})$ as follows:
Set $G(a)\coloneqq G_h^{\lambda,\sigma}(s,a,\pi^t,\mu^t)= Q_h^{\lambda,\sigma}(s,a,\pi^t,\mu^t)-\lambda\log\dfrac{\pi^t_h\given{a}{s}}{\sigma_h\given{a}{s}}$.
Note that $\max_{a,a^\prime\in\A}\abs{G(a^\prime)-G(a)}\leq{\eta^\ast}^{-1}$ by \cref{lem:diff_L}.
By the update rule \eqref{eq:update} and concavity of the logarithmic function $\log$, we have
\begin{equation}
    \begin{aligned}
        &\Dmuast(\pi^t,\pi^{t+1})\\
        ={}&\sum_{h=1}^H\Expect_{s\sim\mu^\ast_h}\qty[\sum_{a\in\A}\pi^t_h\given{a}{s}\log\frac{\pi^t_h\given{a}{s}}{\pi^{t+1}_h\given{a}{s}}]\\
        ={}&\sum_{h=1}^H\Expect_{s\sim\mu^\ast_h}\qty[\sum_{a\in\A}\pi^t_h\given{a}{s}\log\frac{\sum\limits_{a^\prime\in\A}\qty(\sigma_h\given{a^\prime}{s})^{\lambda\eta}\qty(\pi^t_h\given{a^\prime}{s})^{1-\lambda\eta}\exp\qty(\eta Q_h^{\lambda,\sigma}(s,a^\prime,\pi^t,\mu^t))}{\qty(\sigma_h\given{a}{s})^{\lambda\eta}\qty(\pi^t_h\given{a}{s})^{-\lambda\eta}\exp\qty(\eta Q_h^{\lambda,\sigma}(s,a,\pi^t,\mu^t))}]\\
        ={}&\sum_{h=1}^H\Expect_{s\sim\mu^\ast_h}\qty[\sum_{a\in\A}\pi^t_h\given{a}{s}\log\frac{\sum\limits_{a^\prime\in\A}\pi^t_h\given{a^\prime}{s}\exp\qty(\eta Q_h^{\lambda,\sigma}(s,a^\prime,\pi^t,\mu^t)-\lambda\eta\log\dfrac{\pi^t_h\given{a^\prime}{s}}{\sigma_h\given{a^\prime}{s}})}{\exp\qty(\eta Q_h^{\lambda,\sigma}(s,a,\pi^t,\mu^t)-\lambda\eta\log\dfrac{\pi^t_h\given{a}{s}}{\sigma_h\given{a}{s}})}]\\
        \leq{}&\sum_{h=1}^H\Expect_{s\sim\mu^\ast_h}\qty[\log\sum_{a\in\A}\pi^t_h\given{a}{s}\frac{\sum\limits_{a^\prime\in\A}\pi^t_h\given{a^\prime}{s}\exp\qty(\eta Q_h^{\lambda,\sigma}(s,a^\prime,\pi^t,\mu^t)-\lambda\eta\log\dfrac{\pi^t_h\given{a^\prime}{s}}{\sigma_h\given{a^\prime}{s}})}{\exp\qty(\eta Q_h^{\lambda,\sigma}(s,a,\pi^t,\mu^t)-\lambda\eta\log\dfrac{\pi^t_h\given{a}{s}}{\sigma_h\given{a}{s}})}].
    \end{aligned}
\end{equation}
If we take $\eta$ to be $\eta\leq\eta^\ast$, it follows that
\[
    \eta\qty(G(a^\prime)-G(a))\leq1,
\]
for $a$, $a^\prime\in\A$.
Thus, we can use the inequality $\e^x\leq 1+x+x^2$ for $x\leq1$ and obtain
\[
\begin{aligned}
    &\Dmuast(\pi^t,\pi^{t+1})\\
    \leq{}&\sum_{h=1}^H\Expect_{s\sim\mu^\ast_h}\qty[\log\sum_{a,a^\prime\in\A}\pi^t_h\given{a}{s}\pi^t_h\given{a^\prime}{s}\e^{\eta(G(a^\prime)-G(a))}]\\
    \leq{}&\sum_{h=1}^H\Expect_{s\sim\mu^\ast_h}\qty[\log\sum_{a,a^\prime\in\A}\pi^t_h\given{a}{s}\pi^t_h\given{a^\prime}{s}\qty(1+\eta\qty(G(a^\prime)-G(a))+\eta^2\qty(G(a^\prime)-G(a))^2)]\\
    ={}&\sum_{h=1}^H\Expect_{s\sim\mu^\ast_h}\qty[\log\sum_{a,a^\prime\in\A}\pi^t_h\given{a}{s}\pi^t_h\given{a^\prime}{s}\qty(1+\qty(G(a^\prime)-G(a))^2)]\\
    ={}&\sum_{h=1}^H\Expect_{s\sim\mu^\ast_h}\qty[\log\qty(1+\eta^2\sum_{a,a^\prime\in\A}\pi^t_h\given{a}{s}\pi^t_h\given{a^\prime}{s}\qty(G(a^\prime)-G(a))^2)]\\
    \leq{}&\eta^2\sum_{h=1}^H\Expect_{s\sim\mu^\ast_h}\qty[\sum_{a,a^\prime\in\A}\pi^t_h\given{a}{s}\pi^t_h\given{a^\prime}{s}\qty(G(a^\prime)-G(a))^2].
\end{aligned}
\]
By \cref{lem:diff_L}, we can see that
\begin{align*}
    &\sum_{a,a^\prime\in\A}\pi^t_h\given{a}{s}\pi^t_h\given{a^\prime}{s}\qty(G(a^\prime)-G(a))^2\\
    \leq{}&\sum_{a,a^\prime\in\A}\pi^t_h\given{a}{s}\pi^t_h\given{a^\prime}{s}\qty(2L\sum_{l=h}^{H}\norm{\mu_l^t-\mu^\ast_l}_1
            +C^{\lambda,\sigma,H,\abs{\A}}\qty(E_h(a,\pi^t,\varpi^\ast)
            +E_h(a^\prime,\pi^t,\varpi^\ast)))^2\\
    \leq{}&\sum_{a,a^\prime\in\A}\pi^t_h\given{a}{s}\pi^t_h\given{a^\prime}{s}\qty(8L^2\qty(\sum_{l=h}^{H}\norm{\mu_l^t-\mu^\ast_l}_1)^2
            +4\qty(C^{\lambda,\sigma,H,\abs{\A}})^2\qty(E_h^2(a,\pi^t,\varpi^\ast)
            +E_h^2(a^\prime,\pi^t,\varpi^\ast)))\\
    \leq{}&8L^2H\sum_{l=h}^{H}\norm{\mu_l^t-\mu^\ast_l}^2_1+8\qty(C^{\lambda,\sigma,H,\abs{\A}})^2\sum_{a\in\A}\pi^t_h\given{a}{s}E_h^2(a,\pi^t,\varpi^\ast)\\
    ={}&8L^2H\sum_{l=h}^{H}\norm{\mu_l^t-\mu^\ast_l}^2_1+8\qty(C^{\lambda,\sigma,H,\abs{\A}})^2\sum_{a\in\A}\frac{\pi^t_h\given{a}{s}}{\varpi^\ast_h\given{a}{s}}\varpi^\ast_h\given{a}{s}E_h^2(a,\pi^t,\varpi^\ast)\\
    \leq{}&8L^2H\sum_{l=h}^{H}\norm{\mu_l^t-\mu^\ast_l}^2_1+\frac{8\qty(C^{\lambda,\sigma,H,\abs{\A}})^2}{\abs{\A}\exp\qty(\frac{H(1-\lambda\log\sigma_{\textup{min}})}{\lambda})}\sum_{a\in\A}\varpi^\ast_h\given{a}{s}E_h^2(a,\pi^t,\varpi^\ast)\\
    \leq{}&8L^2H\sum_{l=h}^{H}\norm{\mu_l^t-\mu^\ast_l}^2_1+\frac{8H\qty(C^{\lambda,\sigma,H,\abs{\A}})^2}{\abs{\A}\exp\qty(\frac{H(1-\lambda\log\sigma_{\textup{min}})}{\lambda})}\sum_{l=h}^{H}\Expect_{s_l\sim\mu^\ast_l}\qty[\norm{\pi_l^\ast\givenbullet{s_l}-\pi^t_l\givenbullet{s_l}}^2_1]\\
    \leq{}&8L^2H\sum_{l=h}^{H}\norm{\mu_l^t-\mu^\ast_l}^2_1+\frac{4H\qty(C^{\lambda,\sigma,H,\abs{\A}})^2}{\abs{\A}\exp\qty(\frac{H(1-\lambda\log\sigma_{\textup{min}})}{\lambda})}\Dmuast(\varpi^\ast,\pi^t).
\end{align*}
Moreover, \cref{lem:Lip_m} bounds $\sum_{l=h}^{H}\norm{\mu_l^t-\mu^\ast_l}^2_1$ as 
\begin{align*}
    &\sum_{l=h}^{H}\norm{\mu_l^t-\mu^\ast_l}^2_1
    \leq{}H\sum_{l=h}^{H}\sum_{k=0}^{l-1}\Expect_{s_k\sim \mu^\ast_k}\qty[\norm{\pi_k^\ast\givenbullet{s_k}-\pi^t_k\givenbullet{s_k}}^2]
    \leq{}\frac{1}{2}H^2\Dmuast(\varpi^\ast,\pi^t).
\end{align*}
Therefore, we finally obtain
\begin{equation}
    \label{eq:final}
            \Dmuast(\varpi^\ast,\pi^{t+1})\leq\qty(1-\lambda\eta+C\eta^2)\Dmuast(\varpi^\ast,\pi^t)\leq\qty(1-\frac{1}{2}\lambda\eta)\Dmuast(\varpi^\ast,\pi^t),
\end{equation}
where we use $C\eta\leq C\eta^\ast\leq1/2$.
\end{proof}

\section{Useful Lemmas}
\label{appendix:basic}

For Mean-field games, one can write down the \emph{Bellman optimality equation} as follows: for a function $Q^\prime\colon\State\to\Delta(\A)$, a policy $\pi^\prime\colon\State\to\Delta(\A)$, $\sigma^\prime\colon\State\to\Delta(\A)$ and $s\in\State$ set
\begin{equation}
    f_s^{\sigma^\prime}\qty(Q^\prime,\pi^\prime)=\la Q^\prime(s),\pi^\prime(s)\ra-\lambda\KL(\pi^\prime(s),\sigma^\prime(s)).
    \label{eq:def_f}
\end{equation}
\begin{lemma}\label{lem:Bellman_optimality}
    Let $(\mu^\ast,\varpi^\ast)$ be equilibrium in the sense of \cref{def:regNE}.
    Then, it holds that
    \[
        \varpi^\ast_h\givenbullets=\argmax_{p\in\Delta(\A)}f_s^{\sigma_h}\qty(Q^{\lambda,\sigma}_h(s,\bullet,\varpi^\ast,\mu^\ast),p)\propto\sigma_h\given{\bullet}{s}\exp(\frac{Q_h^{\lambda,\sigma}(s,\bullet,\varpi^\ast,\mu^\ast)}{\lambda}),
    \]
    for each $s\in\State$ and $h\in[H]$.
    Moreover,
    \[
        \la Q_h^{\lambda,\sigma}(s,\bullet,\varpi^\ast,\mu^\ast)-\lambda\log\frac{\pi_h^\ast\givenbullets}{\sigma_h\givenbullets},\delta\ra=0,
    \]
    for all $\delta\in\R^{\abs{\A}}$ such that $\sum_a\delta(a)=0$.
\end{lemma}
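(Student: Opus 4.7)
The plan is to verify the Bellman optimality characterization by backward induction on the time horizon $h$, combined with a pointwise KKT analysis on the simplex $\Delta(\A)$. First I would rewrite the regularized objective $J^{\lambda,\sigma}(\mu^\ast,\pi)=\Expect_{s\sim\mu_1}[V_1^{\lambda,\sigma}(s,\mu^\ast,\pi)]$ and observe the recursion
\[
V_h^{\lambda,\sigma}(s,\mu^\ast,\pi)=\la Q_h^{\lambda,\sigma}(s,\bullet,\mu^\ast,\pi),\pi_h\givenbullets\ra-\lambda\KL(\pi_h\givenbullets,\sigma_h\givenbullets)=f_s^{\sigma_h}\bigl(Q_h^{\lambda,\sigma}(s,\bullet,\mu^\ast,\pi),\pi_h\givenbullets\bigr),
\]
and note that, crucially, $Q_h^{\lambda,\sigma}(s,a,\mu^\ast,\pi)$ depends on $\pi$ only through the future coordinates $\pi_{h+1:H}$ (because $\mu^\ast$ is held fixed at the equilibrium distribution).

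Next, I would run a backward induction: having fixed the optimal tail $\varpi^\ast_{h+1:H}$ by the inductive hypothesis, maximizing $J^{\lambda,\sigma}(\mu^\ast,\pi)$ over $\pi_h$ decouples across states $s\in\State$ and reduces to the pointwise problem $\max_{p\in\Delta(\A)}f_s^{\sigma_h}\bigl(Q_h^{\lambda,\sigma}(s,\bullet,\varpi^\ast,\mu^\ast),p\bigr)$. Since $f_s^{\sigma_h}(Q,\cdot)$ is strictly concave on $\Delta(\A)$ (the KL penalty is strictly convex) and blows up on the boundary (as $p(a)\to0$ with $\sigma_h\given{a}{s}>0$), the maximizer is unique, lies in the relative interior, and by definition of a regularized equilibrium it must equal $\varpi^\ast_h\givenbullets$. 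Solving the Lagrangian for the equality constraint $\sum_a p(a)=1$ yields the softmax formula $\varpi^\ast_h\givenbullets\propto\sigma_h\givenbullets\exp(Q_h^{\lambda,\sigma}(s,\bullet,\varpi^\ast,\mu^\ast)/\lambda)$.

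For the second statement I would derive the first-order optimality condition directly. Computing the gradient of $p\mapsto f_s^{\sigma_h}(Q,p)$ yields $Q_h^{\lambda,\sigma}(s,\bullet,\varpi^\ast,\mu^\ast)-\lambda\log(\varpi^\ast_h\givenbullets/\sigma_h\givenbullets)-\lambda\vone$. Since the maximizer is interior, the gradient is orthogonal to the tangent space of the simplex, i.e., orthogonal to any $\delta\in\R^{\abs{\A}}$ with $\sum_a\delta(a)=0$; the constant $\lambda\vone$ contribution vanishes against such $\delta$, leaving
\[
\la Q_h^{\lambda,\sigma}(s,\bullet,\varpi^\ast,\mu^\ast)-\lambda\log(\varpi^\ast_h\givenbullets/\sigma_h\givenbullets),\delta\ra=0.
\]

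The main obstacle I anticipate is making the backward induction watertight, in particular justifying that the $\pi$-dependence of $Q_h^{\lambda,\sigma}(s,\bullet,\mu^\ast,\pi)$ can be ignored when optimizing in $\pi_h$ (it can, since it only depends on $\pi_{h+1:H}$, which are fixed to $\varpi^\ast_{h+1:H}$ at that inductive step). Beyond this, the derivation is a standard entropic-regularization computation: strict concavity and boundary blow-up give existence/uniqueness of the pointwise maximizer, and the KKT condition on the simplex immediately supplies both the softmax form and the ``gradient orthogonal to tangent space'' statement requested in the lemma.
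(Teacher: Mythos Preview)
Your proposal is correct and follows the standard route to Bellman optimality for KL-regularized finite-horizon MDPs; the paper itself does not spell out a proof at all but simply writes ``See the Bellman optimality equation (e.g., \cite[Theorem~1.9]{Agarwal22text}).'' Your backward induction with the observation that $Q_h^{\lambda,\sigma}$ depends on $\pi$ only through $\pi_{h+1:H}$, followed by a pointwise strictly concave maximization on $\Delta(\A)$ and the resulting KKT/softmax computation, is exactly what that citation unpacks, so your approach and the paper's coincide.

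One small point worth tightening: your claim that the stage-$h$ problem ``decouples across states $s\in\State$ and reduces to the pointwise problem'' is only immediate for states with positive reach probability under the policy used up to time $h$; at states with zero mass the value of $\pi_h(\cdot\mid s)$ does not influence $J^{\lambda,\sigma}(\mu^\ast,\cdot)$. In the present setting this is harmless because the KL penalty forces $\varpi^\ast$ to have full support and, together with \cref{assump:transition_kernel}, this makes every state reachable at every time, but you should state this explicitly so that the ``for each $s\in\State$'' conclusion is justified.
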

\begin{proof}
    See the Bellman optimality equation (e.g., \parencite[Theorem 1.9]{Agarwal22text}).
\end{proof}


\begin{lemma}
    Under \cref{assump:monotonicity}, it holds that, for all $\pi$, $\widetilde{\pi}\in\PolicySet$,
    \[
        J^{\lambda,\sigma}(m[\pi],\pi)+J^{\lambda,\sigma}(m[\widetilde{\pi}],\widetilde{\pi})-J^{\lambda,\sigma}(m[\pi],\widetilde{\pi})-J^{\lambda,\sigma}(m[\widetilde{\pi}],\pi)\leq0,
    \]
    where $m$ is defined in \eqref{eq:forward_eq}.
    \label{lem:J-monotonicity}
\end{lemma}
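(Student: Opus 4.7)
The plan is to reduce the claim to a direct application of \cref{assump:monotonicity} by first observing that the regularization terms cancel out of the bilinear combination. Specifically, note that $D_{m[\pi]}(\pi,\sigma)$ depends only on $\pi$ and $\sigma$ (the first argument of $J^{\lambda,\sigma}$ never enters the KL penalty), so in the four-term expression
\[
    J^{\lambda,\sigma}(m[\pi],\pi)+J^{\lambda,\sigma}(m[\widetilde{\pi}],\widetilde{\pi})-J^{\lambda,\sigma}(m[\pi],\widetilde{\pi})-J^{\lambda,\sigma}(m[\widetilde{\pi}],\pi),
\]
the $-\lambda D_{m[\pi]}(\pi,\sigma)$ contribution from the first and fourth terms cancels, and likewise for $-\lambda D_{m[\widetilde{\pi}]}(\widetilde{\pi},\sigma)$. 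This reduces the statement to proving the analogous inequality for the unregularized cumulative reward $J$.

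Next, I would expand $J(\mu,\pi)$ using its definition and isolate the dependence on the first argument, which enters only through $r_h(s,a,\mu_h)$. Grouping terms pairwise gives
\[
    J(m[\pi],\pi)-J(m[\widetilde{\pi}],\pi) = \sum_{h=1}^H\sum_{(s,a)}\pi_h\given{a}{s}m[\pi]_h(s)\bigl(r_h(s,a,m[\pi]_h)-r_h(s,a,m[\widetilde{\pi}]_h)\bigr),
\]
and similarly for $J(m[\pi],\widetilde{\pi})-J(m[\widetilde{\pi}],\widetilde{\pi})$ with $\widetilde{\pi}$ and $m[\widetilde{\pi}]$ replacing $\pi$ and $m[\pi]$ in the weighting. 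Subtracting these two identities collects the four terms into
\[
    \sum_{h=1}^H\sum_{(s,a)}\bigl(r_h(s,a,m[\pi]_h)-r_h(s,a,m[\widetilde{\pi}]_h)\bigr)\bigl(\pi_h\given{a}{s}m[\pi]_h(s)-\widetilde{\pi}_h\given{a}{s}m[\widetilde{\pi}]_h(s)\bigr).
\]

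Finally, setting $\mu=m[\pi]$ and $\widetilde{\mu}=m[\widetilde{\pi}]$, this last expression is exactly the left-hand side of the inequality in \cref{assump:monotonicity}, which is $\leq 0$ by assumption. Combining with the cancellation of the KL terms yields the claim. I do not expect any genuine obstacle: the whole proof is bookkeeping, and the only substantive step is recognizing the cancellation of the regularization penalty, which follows because $D_{m[\pi]}(\pi,\sigma)$ is a function of $\pi$ alone.
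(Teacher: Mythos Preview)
Your proposal is correct and follows essentially the same approach as the paper's proof: both observe that the regularization penalty $-\lambda D_{m[\pi]}(\pi,\sigma)$ depends only on the second argument of $J^{\lambda,\sigma}$ and hence cancels in the four-term combination, reduce to the unregularized $J$, expand via \eqref{eq:cummulative_reward_unreg}, and apply \cref{assump:monotonicity} with $\mu=m[\pi]$, $\widetilde{\mu}=m[\widetilde{\pi}]$. The paper groups the terms as $(J^{\lambda,\sigma}(\mu,\pi)-J^{\lambda,\sigma}(\widetilde{\mu},\pi))+(J^{\lambda,\sigma}(\widetilde{\mu},\widetilde{\pi})-J^{\lambda,\sigma}(\mu,\widetilde{\pi}))$ rather than spelling out the KL cancellation first, but this is only a cosmetic reordering of the same computation.
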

\begin{proof}[Proof of \cref{lem:J-monotonicity}]
    The proof is similar to \parencite[\S H]{zhang2023learning}.
    Set $\mu=m[\pi]$ and $\widetilde{\mu}=m[\widetilde{\pi}]$.
    One can obtain that
    \begin{align*}
        &J^{\lambda,\sigma}(m[\pi],\pi)+J^{\lambda,\sigma}(m[\widetilde{\pi}],\widetilde{\pi})-J^{\lambda,\sigma}(m[\pi],\widetilde{\pi})-J^{\lambda,\sigma}(m[\widetilde{\pi}],\pi)\\
        ={}&(J^{\lambda,\sigma}(\mu,\pi)-J^{\lambda,\sigma}(\widetilde{\mu},\pi))+(J^{\lambda,\sigma}(\widetilde{\mu},\widetilde{\pi})-J^{\lambda,\sigma}(\mu,\widetilde{\pi}))\\
        ={}&\sum_{h=1}^H\sum_{s_h\in\State}m[\pi]_h(s_h)\sum_{a_h\in\A}\pi_h\given{a_h}{s_h}\qty(r_h(s_h,a_h,\mu_h)-r_h(s_h,a_h,\widetilde{\mu}_h))\\
        &+\sum_{h=1}^H\sum_{s_h\in\State}m[\widetilde{\pi}]_h(s_h)\sum_{a_h\in\A}\widetilde{\pi}_h\given{a_h}{s_h}\qty(r_h(s_h,a_h,\widetilde{\mu}_h)-r_h(s_h,a_h,{\mu}_h))\\
        ={}&\sum_{h,s,a}\qty(\pi_h\given{a}{s}\mu_h(s)-\widetilde{\pi}_h\given{a}{s}\widetilde{\mu}_h(s))\qty(r_h(s_h,a_h,\mu_h)-r_h(s_h,a_h,\widetilde{\mu}_h)),
    \end{align*}
    and the right-hand side of the above inequality is less than $0$ by \cref{assump:monotonicity}.
\end{proof}

\begin{lemma}
\label{lem:bound_on_value_function}
Let $V^{\lambda,\sigma}_h$ be the state value function defined in \eqref{eq:state_value} and $Q^{\lambda,\sigma}_h$ be the state action value function defined in \eqref{eq:state_action_value}.
For any $s\in \mathcal{A}$, $a\in \mathcal{A}$, and $h\in [H]$, it holds that
\begin{align*}
    &\lambda(H-h+1)\log\sigma_{\textup{min}}\leq V^{\lambda,\sigma}_h(s,\mu,\pi)\leq H-h+1,\\
    &\lambda(H-h+1)\log\sigma_{\textup{min}}\leq Q^{\lambda,\sigma}_h(s,a,\mu,\pi)\leq H-h+2.
\end{align*}
\end{lemma}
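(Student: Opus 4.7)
The plan is to first bound the single-step KL penalty, then telescope (or induct) to obtain the bounds on \(V^{\lambda,\sigma}_h\), and finally apply the defining recursion once to derive the bounds on \(Q^{\lambda,\sigma}_h\).

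\textbf{Step 1: one-step KL bound.} For any \(s\in\State\) and \(l\in[H]\), I would show
\[
0\;\leq\;\KL\qty(\pi_l\givenbullets,\sigma_l\givenbullets)\;\leq\;-\log\sigma_{\mathrm{min}}.
\]
The lower bound is Gibbs' inequality. For the upper bound, write
\[
\KL\qty(\pi_l\givenbullets,\sigma_l\givenbullets)=-H\qty(\pi_l\givenbullets)-\sum_{a\in\A}\pi_l\given{a}{s}\log\sigma_l\given{a}{s},
\]
observe the (Shannon) entropy term is nonnegative so \(-H(\pi_l(\cdot|s))\leq 0\), and use \(\log\sigma_l\given{a}{s}\geq\log\sigma_{\mathrm{min}}\) together with \(\sum_a\pi_l\given{a}{s}=1\) to bound the remaining sum by \(-\log\sigma_{\mathrm{min}}\).

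\textbf{Step 2: bounds on \(V^{\lambda,\sigma}_h\).} Combining Step~1 with the standing assumption \(r_l\in[0,1]\), each summand in the definition \eqref{eq:state_value} lies in
\[
\lambda\log\sigma_{\mathrm{min}}\;\leq\;r_l(s_l,a_l,\mu_l)-\lambda\KL\qty(\pi_l\givenbullet{s_l},\sigma_l\givenbullet{s_l})\;\leq\;1,
\]
(recall \(\log\sigma_{\mathrm{min}}\leq 0\)). Summing over the \(H-h+1\) stages from \(l=h\) through \(l=H\) and taking the expectation conditioned on \(s_h=s\) is monotone, which gives directly
\[
\lambda(H-h+1)\log\sigma_{\mathrm{min}}\;\leq\;V^{\lambda,\sigma}_h(s,\mu,\pi)\;\leq\;H-h+1.
\]
Alternatively, the same bounds follow by a straightforward backward induction on \(h\), with base case \(V^{\lambda,\sigma}_{H+1}\equiv 0\), and the inductive step using the Bellman-type recursion implicit in \eqref{eq:state_value}--\eqref{eq:state_action_value}.

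\textbf{Step 3: bounds on \(Q^{\lambda,\sigma}_h\).} Plugging the bounds on \(V^{\lambda,\sigma}_{h+1}\) (at level \(h+1\), giving range \([\lambda(H-h)\log\sigma_{\mathrm{min}},\,H-h]\)) into the recursion \eqref{eq:state_action_value} and using \(r_h\in[0,1]\), I get
\[
\lambda(H-h)\log\sigma_{\mathrm{min}}\;\leq\;Q^{\lambda,\sigma}_h(s,a,\mu,\pi)\;\leq\;H-h+1.
\]
Using \(\log\sigma_{\mathrm{min}}\leq 0\), the left-hand side is bounded below by \(\lambda(H-h+1)\log\sigma_{\mathrm{min}}\), and the upper bound is trivially bounded by \(H-h+2\), yielding the stated inequalities.

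There is essentially no hard step here; the only subtlety is remembering that \(\log\sigma_{\mathrm{min}}\) is nonpositive, so that weakening the coefficient from \(H-h\) to \(H-h+1\) on the lower bound (and from \(H-h+1\) to \(H-h+2\) on the upper bound of \(Q\)) goes in the correct direction. The bookkeeping of indices in the induction is the only place a sign or off-by-one error could creep in.
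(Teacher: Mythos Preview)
Your proof is correct and follows essentially the same approach as the paper: bound the per-step term \(r_l-\lambda\KL(\pi_l,\sigma_l)\in[\lambda\log\sigma_{\mathrm{min}},1]\) using \(0\leq\KL(\pi_l,\sigma_l)\leq-\log\sigma_{\mathrm{min}}\), then aggregate over the \(H-h+1\) stages and read off the \(Q\)-bounds from \eqref{eq:state_action_value}. The paper carries out the aggregation via backward induction on \(h\) (which you also note as an alternative), whereas you sum directly in the definition \eqref{eq:state_value}; this is a cosmetic difference only.
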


\begin{proof}
We prove the inequalities by backward induction on $h$.
By definition, we have
\begin{align*}
    &V^{\lambda,\sigma}_h(s,\mu,\pi)\\
    ={}& \Expect\Given{\sum_{l=h}^H\qty(r_l(s_l,a_l,\mu_l)-\lambda\KL(\pi_l\givenbullet{s_l},\sigma_l\givenbullet{s_l}))}{
    s_h=s
    }\\
    ={}& \la r_h(s,\bullet,\mu_h),\pi_h(s)\ra - \lambda\KL(\pi_h\givenbullet{s_h},\sigma_h\givenbullet{s_h})\\
    &{}+\sum_{s_{h+1}\in\State}V_{h+1}^{\lambda,\sigma}(s_{h+1},\mu,\pi)\sum_{a_h\in\A}P_h\given{s_{h+1}}{s,a_h}\pi_h\given{a_h}{s} \\
    \leq{}& 1 + \max_{s_{h+1}\in\State}V^{\lambda,\sigma}_{h+1}(s_{h+1},\mu,\pi),
\end{align*}
and
\begin{align*}
    &V^{\lambda,\sigma}_h(s,\mu,\pi)\\
    ={}& \la r_h(s,\bullet,\mu_h),\pi_h(s)\ra - \lambda\KL(\pi_h\givenbullet{s_h},\sigma_h\givenbullet{s_h})\\
    &{}+\sum_{s_{h+1}\in\State}V_{h+1}^{\lambda,\sigma}(s_{h+1},\mu,\pi)\sum_{a_h\in\A}P_h\given{s_{h+1}}{s,a_h}\pi_h\given{a_h}{s} \\
    \geq{}& \lambda\log\sigma_{\textup{min}} + \max_{s_{h+1}\in\State}V^{\lambda,\sigma}_{h+1}(s_{h+1},\mu,\pi).
\end{align*}
Then, we have
\begin{align*}
    V^{\lambda,\sigma}_h(s,\mu,\pi)\in[\lambda(H-h+1)\log\sigma_{\textup{min}},H-h+1],
\end{align*}
by the induction.
The definition of $Q^{\lambda,\sigma}_h$ in \eqref{eq:state_action_value} immediately yields the bound.
\end{proof}

\begin{lemma}\label{lem:piQ}
For all $\pi$, $\widetilde{\pi}\in\PolicySet$, it holds that
    \[
    \begin{aligned}
        &\sum_{h=1}^H\Expect_{s\sim m[\widetilde{\pi}]_h}\qty[\la(\pi_h-\widetilde{\pi}_h)\givenbullets,Q^{\lambda,\sigma}_h(s,\bullet,\pi,\mu)\ra]\\
        ={}& J^{\lambda,\sigma}(\mu,\pi)-J^{\lambda,\sigma}(\mu,\widetilde{\pi})-\lambda D_{m[\widetilde{\pi}]}(\widetilde{\pi},\sigma)+{\lambda D_{m[\widetilde{\pi}]}(\pi,\sigma)},
    \end{aligned}
    \]
    where we set $\mu=m[\pi]$.
\end{lemma}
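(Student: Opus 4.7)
The plan is to recognize this identity as a \emph{performance difference lemma} adapted to the KL-regularized MFG, proved by unrolling a one-step Bellman decomposition and telescoping along the trajectory generated by $\widetilde{\pi}$. A key structural feature to exploit is that in $Q^{\lambda,\sigma}_h(s,a,\mu,\pi)$, the arguments $\mu$ (which enters only through the reward $r_h(s,a,\mu_h)$) and $\pi$ (which enters only through the continuation value) play distinct roles. Consequently, when we compare $Q^{\lambda,\sigma}_h(s,\cdot,\mu,\pi)$ with $Q^{\lambda,\sigma}_h(s,\cdot,\mu,\widetilde{\pi})$ — same $\mu$ but different policy — the immediate reward term cancels and only the difference of the next-step value functions survives.

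First, I would record the one-step Bellman identity obtained directly from \eqref{eq:state_value} and \eqref{eq:state_action_value}: for every $h$, $s$, and any $\pi$,
\[
V^{\lambda,\sigma}_h(s,\mu,\pi)
=\la \pi_h\givenbullets, Q^{\lambda,\sigma}_h(s,\bullet,\mu,\pi)\ra
-\lambda\,\KL\!\bigl(\pi_h\givenbullets,\sigma_h\givenbullets\bigr).
\]
Writing $\widehat{V}_h(s)\coloneqq V^{\lambda,\sigma}_h(s,\mu,\pi)-V^{\lambda,\sigma}_h(s,\mu,\widetilde{\pi})$ and adding and subtracting $\la \widetilde{\pi}_h\givenbullets, Q^{\lambda,\sigma}_h(s,\bullet,\mu,\pi)\ra$ gives
\[
\widehat{V}_h(s)
=\la (\pi_h-\widetilde{\pi}_h)\givenbullets, Q^{\lambda,\sigma}_h(s,\bullet,\mu,\pi)\ra
-\lambda\KL(\pi_h\givenbullets,\sigma_h\givenbullets)
+\lambda\KL(\widetilde{\pi}_h\givenbullets,\sigma_h\givenbullets)
+\la \widetilde{\pi}_h\givenbullets, Q^{\lambda,\sigma}_h(s,\bullet,\mu,\pi)-Q^{\lambda,\sigma}_h(s,\bullet,\mu,\widetilde{\pi})\ra.
\]
Since both $Q$-functions use the same $\mu$, the reward terms in the last bracket cancel, and by \eqref{eq:state_action_value} that bracket reduces to $\sum_{a}\widetilde{\pi}_h\given{a}{s}\sum_{s'}P_h\given{s'}{s,a}\widehat{V}_{h+1}(s')$.

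Next, I take the expectation under $s\sim m[\widetilde{\pi}]_h$. The forward equation \eqref{eq:forward_eq} yields
\[
\Expect_{s\sim m[\widetilde{\pi}]_h}\!\Big[\sum_a \widetilde{\pi}_h\given{a}{s}\sum_{s'}P_h\given{s'}{s,a}\widehat{V}_{h+1}(s')\Big]
=\Expect_{s'\sim m[\widetilde{\pi}]_{h+1}}\!\bigl[\widehat{V}_{h+1}(s')\bigr],
\]
giving a clean one-step recursion for $\Expect_{s\sim m[\widetilde{\pi}]_h}[\widehat{V}_h(s)]$. Telescoping from $h=1$ up to $h=H$, using the terminal condition $V^{\lambda,\sigma}_{H+1}\equiv 0$ and the initial condition $m[\widetilde{\pi}]_1=\mu_1$, I obtain
\[
J^{\lambda,\sigma}(\mu,\pi)-J^{\lambda,\sigma}(\mu,\widetilde{\pi})
=\sum_{h=1}^H\Expect_{s\sim m[\widetilde{\pi}]_h}\!\bigl[\la (\pi_h-\widetilde{\pi}_h)\givenbullets, Q^{\lambda,\sigma}_h(s,\bullet,\mu,\pi)\ra\bigr]
-\lambda D_{m[\widetilde{\pi}]}(\pi,\sigma)+\lambda D_{m[\widetilde{\pi}]}(\widetilde{\pi},\sigma).
\]
Rearranging is exactly the claimed identity.

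This is essentially a mechanical computation, so there is no real obstacle. The only point that requires care is correctly tracking signs: the $-\lambda\KL(\pi_h,\sigma_h)$ term originates from $V^{\lambda,\sigma}_h(s,\mu,\pi)$ while $+\lambda\KL(\widetilde{\pi}_h,\sigma_h)$ comes from $-V^{\lambda,\sigma}_h(s,\mu,\widetilde{\pi})$, and these aggregate (after taking expectation under $m[\widetilde{\pi}]$) into the regularizers $D_{m[\widetilde{\pi}]}(\pi,\sigma)$ and $D_{m[\widetilde{\pi}]}(\widetilde{\pi},\sigma)$ with the signs stated in the lemma.
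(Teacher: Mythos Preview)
Your proof is correct and follows essentially the same approach as the paper: both arguments rely on the one-step Bellman identity linking $V^{\lambda,\sigma}_h$ and $Q^{\lambda,\sigma}_h$, propagate the next-step value through the forward equation \eqref{eq:forward_eq} for $m[\widetilde{\pi}]$, and telescope from $h=1$ to $H$ using $m[\widetilde{\pi}]_1=\mu_1$ and $V^{\lambda,\sigma}_{H+1}\equiv 0$. The only organizational difference is that the paper evaluates $\sum_h\Expect_{s\sim m[\widetilde{\pi}]_h}\la\pi_h,Q^{\lambda,\sigma}_h\ra$ and $\sum_h\Expect_{s\sim m[\widetilde{\pi}]_h}\la\widetilde{\pi}_h,Q^{\lambda,\sigma}_h\ra$ separately and then subtracts, whereas you form $\widehat{V}_h=V^{\lambda,\sigma}_h(\cdot,\mu,\pi)-V^{\lambda,\sigma}_h(\cdot,\mu,\widetilde{\pi})$ up front and telescope that difference directly; the computations are identical in content.
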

\begin{proof}
    From the definition of $V^{\lambda,\sigma}$ and $Q^{\lambda,\sigma}$ in 
 \eqref{eq:state_value} and \eqref{eq:state_action_value}, we have
 \begin{equation}
     \label{eq:piQpi}
         \begin{aligned}
        &\sum_{h=1}^H\Expect_{s\sim m[\widetilde{\pi}]_h}\qty[\la\pi_h\givenbullets,Q^{\lambda,\sigma}_h(s,\bullet,\pi,\mu)\ra]\\
        ={}& \sum_{h=1}^H\Expect_{s\sim m[\widetilde{\pi}]_h}\qty[\la\pi_h\givenbullets,{r_h(s,\bullet,\mu_h)
         +\Expect\Given{V_{h+1}^{\lambda,\sigma}(s_{h+1},\mu,\pi)}{
    s_{h+1}\sim P\givenbullet{s,\bullet,\mu_h}
    }}\ra]\\
    ={}& \sum_{h=1}^H\Expect_{s_h\sim m[\widetilde{\pi}]_h}\qty[\Expect_{a_h\sim\pi_h\givenbullets}\qty[{r_h(s_h,a_h,\mu_h)-\lambda\KL(\pi\givenbullet{s_h},\sigma\givenbullet{s_h})
         }]]+\lambda D_{m[\widetilde{\pi}]}(\pi,\sigma)\\
    &+\sum_{h=1}^H\Expect_{s\sim m[\widetilde{\pi}]_h}\qty[\Expect\Given{V_{h+1}^{\lambda,\sigma}(s_{h+1},\mu,\pi)}{
    s_{h+1}\sim P\givenbullet{s,a_h,\mu_h},a_h\sim\pi_h\givenbullets}]\\
    ={}& \sum_{h=1}^H\Expect_{s_h\sim m[\widetilde{\pi}]_h}\qty[V_{h}^{\lambda,\sigma}(s_h,\mu,\pi)-\Expect\Given{V_{h+1}^{\lambda,\sigma}(s_{h+1},\mu,\pi)}{
    \begin{array}{c}
         s_{h+1}\sim P\givenbullet{s,a_h,\mu_h},\\
         a_h\sim\pi_h\givenbullets 
    \end{array}
    }]\\
    &{}+\lambda D_{m[\widetilde{\pi}]}(\pi,\sigma)+\sum_{h=1}^H\Expect_{s\sim m[\widetilde{\pi}]_h}\qty[\Expect\Given{V_{h+1}^{\lambda,\sigma}(s_{h+1},\mu,\pi)}{
    \begin{array}{c}
         s_{h+1}\sim P\givenbullet{s,a_h,\mu_h},\\
         a_h\sim\pi_h\givenbullets 
    \end{array}
    }]\\
    ={}&\sum_{h=1}^H\Expect_{s\sim m[\widetilde{\pi}]_h}\qty[V_{h}^{\lambda,\sigma}(s,\mu,\pi)]+\lambda D_{m[\widetilde{\pi}]}(\pi,\sigma).
    \end{aligned}
 \end{equation}
Similarly, \eqref{eq:cummulative_reward} and \eqref{eq:forward_eq} gives us
\begin{equation}
    \label{eq:piastQpi}
    \begin{aligned}
        &\sum_{h=1}^H\Expect_{s\sim m[\widetilde{\pi}]_h}\qty[\la\widetilde{\pi}_h\givenbullets,Q^{\lambda,\sigma}_h(s,\bullet,\pi,\mu)\ra]\\
        ={}& \sum_{h=1}^H\Expect_{s_h\sim m[\widetilde{\pi}]_h}\qty[\Expect_{a_h\sim\widetilde{\pi}_h\givenbullets}\qty[{r_h(s_h,a_h,\mu_h)-\lambda\KL(\widetilde{\pi}\givenbullet{s_h},\sigma\givenbullet{s_h})
         }]]+\lambda D_{m[\widetilde{\pi}]}(\widetilde{\pi},\sigma)\\
    &+\sum_{h=1}^H\Expect_{s\sim m[\widetilde{\pi}]_h}\qty[\Expect\Given{V_{h+1}^{\lambda,\sigma}(s_{h+1},\mu,\pi)}{
    s_{h+1}\sim P\givenbullet{s,a_h,\mu_h},a_h\sim\widetilde{\pi}_h\givenbullets}]\\
    ={}&J^{\lambda,\sigma}(\mu,\widetilde{\pi})+\lambda D_{m[\widetilde{\pi}]}(\widetilde{\pi},\sigma)+\sum_{h=1}^H\Expect_{s\sim m[\widetilde{\pi}]_{h+1}}\qty[V_{h+1}^{\lambda,\sigma}(s,\mu,\pi)].
    \end{aligned}
\end{equation}
Combining \eqref{eq:piQpi} and \eqref{eq:piastQpi} yields
\begin{align*}
    &\sum_{h=1}^H\Expect_{s\sim m[\widetilde{\mu}]_h}\qty[\la(\pi_h-\widetilde{\pi}_h)\givenbullets,Q^{\lambda,\sigma}_h(s,\bullet,\pi,\mu)\ra]\\
    ={}&\qty(\sum_{h=1}^H\Expect_{s\sim m[\widetilde{\pi}]_h}\qty[V_{h}^{\lambda,\sigma}(s,\mu,\pi)]+\lambda D_{m[\widetilde{\pi}]}(\pi,\sigma))\\
    &-\qty(J^{\lambda,\sigma}(\mu,\widetilde{\pi})+\lambda D_{m[\widetilde{\pi}]}(\widetilde{\pi},\sigma)+\sum_{h=1}^H\Expect_{s\sim m[\widetilde{\pi}]_{h+1}}\qty[V_{h+1}^{\lambda,\sigma}(s,\mu,\pi)])\\
    ={}&\qty(\Expect_{s\sim m[\widetilde{\pi}]_1}\qty[V_{1}^{\lambda,\sigma}(s,\mu,\pi)]+\lambda D_{m[\widetilde{\pi}]}(\pi,\sigma))-\qty(J^{\lambda,\sigma}(\mu,\widetilde{\pi})+\lambda D_{m[\widetilde{\pi}]}(\widetilde{\pi},\sigma))\\
    ={}&\Expect_{s\sim\mu_1}\qty[V_{1}^{\lambda,\sigma}(s,\mu,\pi)]-J^{\lambda,\sigma}(\mu,\widetilde{\pi})+\lambda D_{m[\widetilde{\pi}]}(\pi,\sigma)-\lambda D_{m[\widetilde{\pi}]}(\widetilde{\pi},\sigma),
\end{align*}
which concludes the proof.
\end{proof}

\begin{lemma}\label{lem:second_term}
For all $\pi$, $\widetilde{\pi}\in\PolicySet$, it holds that
    \[
            \sum_{h=1}^H\Expect_{s\sim m[\widetilde{\pi}]_h}\qty[\la(\pi_h-\widetilde{\pi}_h)\givenbullets,\log\frac{\pi_h\givenbullets}{\sigma_h\givenbullets}\ra]=D_{m[\widetilde{\pi}]}(\pi,\sigma)-D_{m[\widetilde{\pi}]}(\widetilde{\pi},\sigma)+D_{\widetilde{\pi}}(\widetilde{\pi},\pi).
    \]
\end{lemma}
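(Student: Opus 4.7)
The plan is to prove this identity by pointwise algebraic manipulation for each fixed $(h,s)$, then take the expectation over $s\sim m[\widetilde{\pi}]_h$ and sum over $h$. The key observation is that the claim is really just a three-point identity for KL divergence, lifted to the occupancy measure $m[\widetilde{\pi}]$.

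First I would fix $h\in[H]$ and $s\in\State$ and expand the inner product as
\[
\la(\pi_h-\widetilde{\pi}_h)\givenbullets,\log\tfrac{\pi_h\givenbullets}{\sigma_h\givenbullets}\ra
= \sum_{a\in\A}\pi_h\given{a}{s}\log\tfrac{\pi_h\given{a}{s}}{\sigma_h\given{a}{s}}
 -\sum_{a\in\A}\widetilde{\pi}_h\given{a}{s}\log\tfrac{\pi_h\given{a}{s}}{\sigma_h\given{a}{s}}.
\]
The first sum is exactly $\KL(\pi_h\givenbullets,\sigma_h\givenbullets)$. For the second sum I would use the additive splitting $\log\tfrac{\pi}{\sigma}=\log\tfrac{\pi}{\widetilde{\pi}}+\log\tfrac{\widetilde{\pi}}{\sigma}$ to rewrite it as
\[
\sum_{a}\widetilde{\pi}_h\given{a}{s}\log\tfrac{\pi_h\given{a}{s}}{\widetilde{\pi}_h\given{a}{s}}
+\sum_{a}\widetilde{\pi}_h\given{a}{s}\log\tfrac{\widetilde{\pi}_h\given{a}{s}}{\sigma_h\given{a}{s}}
= -\KL(\widetilde{\pi}_h\givenbullets,\pi_h\givenbullets)+\KL(\widetilde{\pi}_h\givenbullets,\sigma_h\givenbullets).
\]
Combining these two displays yields the pointwise identity
\[
\la(\pi_h-\widetilde{\pi}_h)\givenbullets,\log\tfrac{\pi_h\givenbullets}{\sigma_h\givenbullets}\ra
=\KL(\pi_h\givenbullets,\sigma_h\givenbullets)-\KL(\widetilde{\pi}_h\givenbullets,\sigma_h\givenbullets)+\KL(\widetilde{\pi}_h\givenbullets,\pi_h\givenbullets).
\]

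To finish, I would take $\Expect_{s\sim m[\widetilde{\pi}]_h}$ of both sides and sum over $h\in[H]$. By the definition $D_\mu(p,q)=\sum_h\Expect_{s\sim\mu_h}[\KL(p_h\givenbullets,q_h\givenbullets)]$, this turns the three KL terms on the right into $D_{m[\widetilde{\pi}]}(\pi,\sigma)-D_{m[\widetilde{\pi}]}(\widetilde{\pi},\sigma)+D_{m[\widetilde{\pi}]}(\widetilde{\pi},\pi)$, which is the stated right-hand side (interpreting the subscript $\widetilde{\pi}$ in the statement as $m[\widetilde{\pi}]$). There is no real obstacle: all sums over $a$ are finite, and absolute convergence is not an issue since $\A$ and $\State$ are finite. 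The only mild subtlety is that the logarithms must be well defined, which is ensured whenever $\pi$ and $\sigma$ have full support on the actions that $\widetilde{\pi}$ weighs positively; under \cref{assump:full_spt} together with the full-support assumption on the iterates used wherever this lemma is applied, the identity holds in the classical sense without any limiting argument.
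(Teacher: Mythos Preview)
Your proof is correct and follows essentially the same approach as the paper: both split the inner product into the $\pi$-term (giving $D_{m[\widetilde{\pi}]}(\pi,\sigma)$) and the $\widetilde{\pi}$-term, then rewrite $\log\tfrac{\pi}{\sigma}$ additively to extract the remaining two KL pieces. You also correctly note that the subscript $\widetilde{\pi}$ in the last term of the stated identity should be read as $m[\widetilde{\pi}]$, which matches the paper's own computation.
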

\begin{proof}
    A direct computation yields
    \begin{align*}
        &\sum_{h=1}^H\Expect_{s\sim m[\widetilde{\pi}]_h}\qty[\la(\pi_h-\widetilde{\pi}_h)\givenbullets,\log\frac{\pi_h\givenbullets}{\sigma_h\givenbullets}\ra]\\
        ={}&D_{m[\widetilde{\pi}]}(\pi,\sigma)-\sum_{h=1}^H\Expect_{s\sim m[\widetilde{\pi}]_h}\qty[\la\widetilde{\pi}_h\givenbullets,\log\frac{\widetilde{\pi}_h\givenbullets}{\sigma_h\givenbullets}-\log\frac{\widetilde{\pi}\givenbullets}{\pi\givenbullets}\ra]\\
        ={}&D_{m[\widetilde{\pi}]}(\pi,\sigma)-D_{m[\widetilde{\pi}]}(\widetilde{\pi},\sigma)+D_{m[\widetilde{\pi}]}(\widetilde{\pi},\pi).
    \end{align*}
\end{proof}

\begin{lemma}
    \label{lem:Lip_m}
    The operator $m$ defined in \eqref{eq:forward_eq} is $1$-Lipschitz, namely, it holds that
    \begin{equation}
        \label{eq:Lip_m}
        \norm{m[\pi]_{h+1}-m[\pi^\prime]_{h+1}}\leq\sum_{l=0}^h\Expect_{s_l\sim m[\pi]_l}\qty[\norm{\pi_l\givenbullet{s_l}-\pi^\prime_l\givenbullet{s_l}}],
    \end{equation}
    for $\pi$, $\pi^\prime\in\PolicySet$ and all $h\in\qty{0,\dots,H}$.
    Here, we set $\pi_0\givenbullets=\pi_0^\prime\givenbullets=\Unif_\A$ for all $s\in\State$.
\end{lemma}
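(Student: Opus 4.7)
The plan is to proceed by induction on $h \in \{0, \dots, H\}$, using the recursive structure of the forward equation \eqref{eq:forward_eq} together with the standard ``add-and-subtract'' trick for products of probabilities.

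For the base case $h = 0$, the convention $\pi_0\givenbullets = \pi_0^\prime\givenbullets = \Unif_\A$ gives $\norm{\pi_0\givenbullet{s}-\pi_0^\prime\givenbullet{s}} = 0$ for every $s$, and both $m[\pi]_1$ and $m[\pi^\prime]_1$ equal $\mu_1$, so \eqref{eq:Lip_m} holds trivially (both sides are $0$). One might even more naturally take as base case $h+1=1$ and use the convention to make the sum empty; either way, the verification is immediate.

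For the inductive step, assume \eqref{eq:Lip_m} holds up to index $h$. Using \eqref{eq:forward_eq} and writing the difference of bilinear quantities as
\[
    \pi_h\given{a}{s^\prime} m[\pi]_h(s^\prime) - \pi_h^\prime\given{a}{s^\prime} m[\pi^\prime]_h(s^\prime)
    = \bigl(\pi_h - \pi_h^\prime\bigr)\given{a}{s^\prime} m[\pi]_h(s^\prime) + \pi_h^\prime\given{a}{s^\prime}\bigl(m[\pi]_h(s^\prime) - m[\pi^\prime]_h(s^\prime)\bigr),
\]
I would apply the triangle inequality to $\norm{m[\pi]_{h+1} - m[\pi^\prime]_{h+1}}$, exchange the order of summation, and use $\sum_{s} P_h\given{s}{s^\prime,a} = 1$ together with $\sum_a \pi_h^\prime\given{a}{s^\prime} = 1$. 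This yields the one-step contraction
\[
    \norm{m[\pi]_{h+1} - m[\pi^\prime]_{h+1}}
    \leq \Expect_{s^\prime \sim m[\pi]_h}\bigl[\norm{\pi_h\givenbullet{s^\prime} - \pi_h^\prime\givenbullet{s^\prime}}\bigr] + \norm{m[\pi]_h - m[\pi^\prime]_h}.
\]
Applying the induction hypothesis to the second term on the right and combining the two sums gives \eqref{eq:Lip_m} at index $h+1$, closing the induction.

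The only care needed is bookkeeping of the indices: the sum in \eqref{eq:Lip_m} runs over $l=0,\dots,h$, and the $l=0$ contribution vanishes by the uniform convention, so the inductive expansion reproduces exactly the stated upper bound. There is no substantive obstacle here—the result is essentially a one-step Markov contraction argument—so the ``hardest'' part is merely making sure the index shift matches the convention on $\pi_0$.
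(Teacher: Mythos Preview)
Your proposal is correct and matches the paper's proof essentially line for line: both argue by induction on $h$, use the same add-and-subtract decomposition of $\pi_h\given{a}{s'}m[\pi]_h(s')-\pi_h'\given{a}{s'}m[\pi']_h(s')$, collapse the sums via $\sum_{s}P_h\given{s}{s',a}=1$ and $\sum_a\pi_h'\given{a}{s'}=1$ to obtain the one-step bound, and then invoke the induction hypothesis. The only cosmetic difference is that the paper indexes the inductive step as $h+1\mapsto h+2$ whereas you write it as $h\mapsto h+1$.
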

\begin{proof}
    Fix $\pi$, $\pi^\prime\in\PolicySet$.
    We prove the inequality by induction on $h$.
    \paragraph{(I) Base step \texorpdfstring{$h=0$}{h=0}:}
    
    It is obvious because $\norm{m[\pi]_1-m[\pi^\prime]_1}=\norm{\mu_1-\mu_1}=0$. 
    
    \paragraph{(II) Inductive step:}
    Suppose that there exists $h\in[H]$ satisfying the inequality \eqref{eq:Lip_m}.
    By \eqref{eq:forward_eq}, we obtain
    \begin{equation*}
        \begin{aligned}
            &\norm{m[\pi]_{h+2}-m[\pi^\prime]_{h+2}}\\
            \leq{}&\sum_{\substack{s_{h+2}\in\State,\\
            (s_{h+1},a_{h+1})\in\State\times\A}}P_{h+1}\given{s_{h+2}}{s_{h+1},a_{h+1}}m[\pi]_{h+1}(s_{h+1})\abs{\pi_{h+1}\given{a_{h+1}}{s_{h+1}}-\pi_{h+1}^\prime\given{a_{h+1}}{s_{h+1}}}\\
            &+\sum_{\substack{s_{h+2}\in\State,\\
            (s_{h+1},a_{h+1})\in\State\times\A}}P_{h+1}\given{s_{h+2}}{s_{h+1},a_{h+1}}\pi_{h+1}^\prime\given{a_{h+1}}{s_{h+1}}\abs{m[\pi]_{h+1}(s_{h+1})-m[\pi^\prime]_{h+1}(s_{h+1})}\\
            \leq{}&\sum_{(s_{h+1},a_{h+1})\in\State\times\A}m[\pi]_{h+1}(s_{h+1})\abs{\pi_{h+1}\given{a_{h+1}}{s_{h+1}}-\pi_{h+1}^\prime\given{a_{h+1}}{s_{h+1}}}\\
            &+\sum_{\substack{s_{h+1}\in\State}}\abs{m[\pi]_{h+1}(s_{h+1})-m[\pi^\prime]_{h+1}(s_{h+1})}\\
            ={}&\Expect_{s_{h+1}\sim m[\pi]_{h+1}}\qty[\norm{\pi_{h+1}\givenbullet{s_{h+1}}-\pi_{h+1}^\prime\givenbullet{s_{h+1}}}]+\norm{m[\pi]_{h+1}-m[\pi^\prime]_{h+1}}.
        \end{aligned}
    \end{equation*}
    By the hypothesis of the induction, we finally obtain
    \begin{align*}
        &\norm{m[\pi]_{h+2}-m[\pi^\prime]_{h+2}}\\
        \leq{}&\Expect_{s\sim m[\pi]_{h+1}}\qty[\norm{\pi_{h+1}\givenbullet{s}-\pi_{h+1}^\prime\givenbullet{s}}]+\sum_{l=1}^h\Expect_{s\sim m[\pi]_{l}}\norm{\pi_l\givenbullet{s}-\pi^\prime_l\givenbullet{s}}\\
        \leq{}&\sum_{l=1}^{h+1}\Expect_{s\sim m[\pi]_{l}} \norm{\pi_l\givenbullet{s}-\pi^\prime_l\givenbullet{s}}.
    \end{align*}
\end{proof}

\begin{lemma}
    Let $\pi$, $\pi^\prime\in\PolicySet$, $\mu$, $\mu^\prime\in\MeanfieldSet$, $s\in\State$, and $h\in\{1,\dots,H+1\}$.
    Assume \[
    \min\limits_{(h,a,s)\in[H]\times\A\times\State}\min\{\pi_h\given{a}{s},\pi^\prime_h\given{a}{s}\}>0,
    \]
    and set $\mu_{H+1}=\mu^\prime_{H+1}=\Unif_\State$, $\pi_{H+1}\givenbullets=\pi^\prime_{h+1}\givenbullets=\Unif_\A$ for all $s\in\State$.
    \label{lem:Lip_V}
    \[
    \begin{aligned}
        &\abs{V_h^{\lambda,\sigma}(s,\pi,\mu)-V_h^{\lambda,\sigma}(s,{\pi}^\prime,{\mu}^\prime)}\\
        \leq{}&
    \Expect\Given{\sum_{l=h}^{H+1}\qty(C^{\lambda,\sigma}(\pi,\pi^\prime)\norm{\pi_l\givenbullet{s_l}-\pi^\prime_l\givenbullet{s_l}}_1+L\norm{\mu_l-\mu_l^\prime}_1)}{
    \begin{array}{c}
        s_h=s,\\
        s_{l+1}\sim P_l\givenbullet{s_l,a_l},\\
        a_{l}\sim\pi_l\givenbullet{s_{l}} \\
        \text{\rm{for each} }l\in\{h,\dots,H+1\}
    \end{array}
    }
    \end{aligned}
    \]
    for 
    Here, $C^{\lambda,\sigma}(\pi,\pi^\prime)>0$ is defined in \cref{lem:diff_Qast}, and the discrete time stochastic process $(s_l)_{l=h}^H$ is induced recursively as $s_{l+1}\sim P_l\givenbullet{s_l,a_l},a_{l}\sim\pi_l\givenbullet{s_{l}}$ for each $l\in\{h,\dots,H-1\}$.
\end{lemma}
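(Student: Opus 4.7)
The plan is to prove this by backward induction on $h$, running from $h=H+1$ down to $h=1$. The base case $h=H+1$ is immediate: by the convention $V_{H+1}^{\lambda,\sigma}\equiv 0$ in \eqref{eq:state_value} the left-hand side vanishes, while the right-hand side contains only the single summand at $l=H+1$, which is also zero because of the padding $\pi_{H+1}=\pi'_{H+1}=\Unif_\A$ and $\mu_{H+1}=\mu'_{H+1}=\Unif_\State$.

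For the inductive step from $h+1$ to $h$, I would unfold the Bellman-type recursion
\[
V_h^{\lambda,\sigma}(s,\mu,\pi)=\la r_h(s,\bullet,\mu_h),\pi_h\givenbullets\ra-\lambda\KL(\pi_h\givenbullets,\sigma_h\givenbullets)+\sum_{a,s'}\pi_h\given{a}{s}P_h\given{s'}{s,a}V_{h+1}^{\lambda,\sigma}(s',\mu,\pi),
\]
and decompose $V_h^{\lambda,\sigma}(s,\mu,\pi)-V_h^{\lambda,\sigma}(s,\mu',\pi')$ into five pieces by successively adding and subtracting $\la r_h(s,\bullet,\mu'_h),\pi_h\givenbullets\ra$, $\lambda\KL(\pi'_h\givenbullets,\sigma_h\givenbullets)$, and $\sum_{a,s'}\pi_h\given{a}{s}P_h\given{s'}{s,a}V_{h+1}^{\lambda,\sigma}(s',\mu',\pi')$. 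The crucial book-keeping choice is to keep $\pi_h$ (not $\pi'_h$) in front of the recursion on $V_{h+1}^{\lambda,\sigma}$, so that the prefactor $\sum_a\pi_h\given{a}{s}\sum_{s'}P_h\given{s'}{s,a}$ matches the trajectory law $s_{l+1}\sim P_l\givenbullet{s_l,a_l},a_l\sim\pi_l\givenbullet{s_l}$ in the claim; this is exactly what lets the induction hypothesis at $h+1$ chain cleanly via the tower property into the expectation at $h$, producing the $l=h+1,\dots,H+1$ summands of the bound.

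Then I would estimate each of the other four pieces. The reward difference in $\mu$ at step $h$ contributes $L\norm{\mu_h-\mu'_h}_1$ by \cref{assump:Lip}, giving the $l=h$ summand of the mean-field part. The remaining three terms are each linear in $\pi_h\givenbullets-\pi'_h\givenbullets$: (i) the reward difference in $\pi_h$ is controlled by $\norm{r_h}_\infty\leq1$; (ii) the KL difference $\lambda\abs{\KL(\pi_h\givenbullets,\sigma_h\givenbullets)-\KL(\pi'_h\givenbullets,\sigma_h\givenbullets)}$ by a mean-value estimate on the segment between $\pi_h$ and $\pi'_h$, whose gradient $\log(p/\sigma)+\vone$ has sup-norm at most $\abs{\log\pi_{\min}}+\abs{\log\sigma_{\min}}+1$ under the full-support hypothesis; and (iii) the $\pi_h$-tilting of the expected future value by $\norm{V_{h+1}^{\lambda,\sigma}}_\infty\leq H$ from \cref{lem:bound_on_value_function}. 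Summing these three yields the single constant $C^{\lambda,\sigma}(\pi,\pi')$ of \cref{lem:diff_Qast} multiplying $\norm{\pi_h\givenbullets-\pi'_h\givenbullets}_1$, which becomes the $l=h$ summand of the policy part of the bound.

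The main obstacle is entirely in step (ii): the KL divergence is not Lipschitz in its first argument on the whole simplex, and its Lipschitz constant blows up as either $\pi_h$ or $\sigma_h$ approaches the boundary. The full-support hypothesis on $\pi,\pi'$ together with \cref{assump:full_spt} on $\sigma$ are precisely what absorb this singularity into the finite constant $C^{\lambda,\sigma}(\pi,\pi')$. Once that Lipschitz-like bound is in hand, the remaining ``future'' piece $\sum_{a,s'}\pi_h\given{a}{s}P_h\given{s'}{s,a}\bigl[V_{h+1}^{\lambda,\sigma}(s',\mu,\pi)-V_{h+1}^{\lambda,\sigma}(s',\mu',\pi')\bigr]$ is handled purely mechanically by the induction hypothesis and the tower property, completing the step.
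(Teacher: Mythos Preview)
Your backward-induction strategy and Bellman decomposition match the paper's proof exactly. The one discrepancy is in the constant accounting: your pieces (i)--(iii) do not sum to the specific $C^{\lambda,\sigma}(\pi,\pi')=2-\lambda\inf_{(h,s,a)}\log(\sigma\pi\pi')_h\given{a}{s}$ of \cref{lem:diff_Qast}. Your piece (iii) costs $\|V_{h+1}^{\lambda,\sigma}\|_\infty$, which by \cref{lem:bound_on_value_function} is of order $(H-h)\max\{1,-\lambda\log\sigma_{\mathrm{min}}\}$ rather than just $H$, and in any case injects an $H$-dependence that is absent from $C^{\lambda,\sigma}$. The paper's bookkeeping differs in two places: it bounds the $(\pi_h-\pi'_h)$-tilting of the future value by $\|\pi_h\givenbullets-\pi'_h\givenbullets\|_1$ with coefficient~$1$ (this supplies the second unit in the leading~$2$), and for the KL piece it first rewrites $\KL(p,\sigma)=\sum_a p(a)\bigl(\log\tfrac{p(a)}{\sigma(a)}-1\bigr)+1$ so that the componentwise mean-value estimate uses the derivative $\log\tfrac{p}{\sigma}$ rather than $\log\tfrac{p}{\sigma}+1$, yielding the factor $\max\bigl|\log\tfrac{p}{\sigma}\bigr|\leq -\inf\log(\sigma\pi\pi')$; this is also where the $\pi'$-dependence in $C^{\lambda,\sigma}$ enters. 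To recover the lemma exactly as stated you should mirror that accounting; otherwise your argument proves the same inequality with a larger, $H$-dependent constant.
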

\begin{proof}
    Fix $\pi$, $\pi^\prime$, $\mu$ and $\mu^\prime$. 
    We prove the inequality by backward induction on $h$.
    \paragraph{(I) Base step \texorpdfstring{$h=H+1$}{h=H+1}:}
    It is obvious because $\abs{V_{H+1}^{\lambda,\sigma}(s,\pi,\mu)-V_{H+1}^{\lambda,\sigma}(s,{\pi}^\prime,{\mu}^\prime)}=\abs{0-0}=0$.
    \paragraph{(II) Inductive step:}
    Suppose that there exists $h\in[H]$ satisfying
    \begin{equation}
        \label{eq:hypo_backward_induction}
        \begin{aligned}
            &\abs{V_{h+1}^{\lambda,\sigma}(s,\pi,\mu)-V_{h+1}^{\lambda,\sigma}(s,{\pi}^\prime,{\mu}^\prime)}\\
            \leq{}&\Expect\Given{\sum_{l=h+1}^{H+1}\qty(C^{\lambda,\sigma}(\pi,\pi^\prime)\norm{\pi_l\givenbullet{s_l}-\pi^\prime_l\givenbullet{s_l}}_1+L\norm{\mu_h-\mu_h^\prime}_1)}{
            \begin{array}{c}
                s_{h+1}=s,\\
                s_{l+1}\sim P_l\givenbullet{s_l,a_l},\\
                a_{l}\sim\pi_l\givenbullet{s_{l}} \\
                \text{\rm{for each} }l\in\{h+1,\dots,H+1\}
            \end{array}
            },
        \end{aligned}
    \end{equation}
    for all $s\in\State$.
    By the definition of the value function in \eqref{eq:state_value} and \cref{assump:Lip}, we have
        \begin{align*}
            &\abs{V_h^{\lambda,\sigma}(s,\pi,\mu)-V_h^{\lambda,\sigma}(s,{\pi}^\prime,{\mu}^\prime)}\\
            \leq{}&\abs{\sum_{a_h\in\A}\qty(\pi_h\given{a_h}{s}{r_h(s,a_h,\mu_h)}-\pi^\prime_h\given{a_h}{s}{r_h(s,a_h,\mu_h^\prime)})}\\
            &+\lambda\abs{\KL(\pi_h\givenbullets,\sigma_h\givenbullets)-\KL(\pi_h^\prime\givenbullets,\sigma_h\givenbullets)}\\
            &+\abs{\sum_{\substack{a_h\in\A,\\s_{h+1}\in\State}}P_h\given{s_{h+1}}{s,a_h}\qty(\pi_h\given{a_h}{s}V^{\lambda,\sigma}_{h+1}(s_{h+1},\pi,\mu)-\pi^\prime\given{a_h}{s}V^{\lambda,\sigma}_{h+1}(s_{h+1},\pi^\prime,\mu^\prime))}\\
            \leq{}&\norm{\pi_h\givenbullets-\pi^\prime_h\givenbullets}_1+{\sum_{a_h\in\A}\pi_h\given{a_h}{s}\abs{{r_h(s,a_h,\mu_h)}-{r_h(s,a_h,\mu_h^\prime)}}}\\
            &+\lambda\abs{\sum_{a_h\in\A}\qty(\pi_h\given{a_h}{s}\qty(\log\frac{\pi_h\given{a_h}{s}}{\sigma_h\given{a_h}{s}}-1)-\pi^\prime_h\given{a_h}{s}\qty(\log\frac{\pi^\prime_h\given{a_h}{s}}{\sigma_h\given{a_h}{s}}-1))}\\
            &+\norm{\pi_h\givenbullets-\pi^\prime_h\givenbullets}_1\\
            &+{\sum_{\substack{a_h\in\A,\\s_{h+1}\in\State}}P_h\given{s_{h+1}}{s,a_h}\pi_h\given{a_h}{s}\abs{V^{\lambda,\sigma}_{h+1}(s_{h+1},\pi,\mu)-V^{\lambda,\sigma}_{h+1}(s_{h+1},\pi^\prime,\mu^\prime)}}\\
            \leq{}&2\norm{\pi_h\givenbullets-\pi^\prime_h\givenbullets}_1+L\norm{\mu_h-\mu_h^\prime}_1\\
            &+\lambda\max_{(h,a,s)}\log\frac{1}{(\sigma\pi\pi^\prime)_h\given{a}{s}}\norm{\pi_h\givenbullets-\pi^\prime_h\givenbullets}_1\\
            &+{\sum_{\substack{a_h\in\A,\\s_{h+1}\in\State}}P_h\given{s_{h+1}}{s,a_h}\pi_h\given{a_h}{s}\abs{V^{\lambda,\sigma}_{h+1}(s_{h+1},\pi,\mu)-V^{\lambda,\sigma}_{h+1}(s_{h+1},\pi^\prime,\mu^\prime)}}\\
            \leq{}&C^{\lambda,\sigma}(\pi,\pi^\prime)\norm{\pi_h\givenbullets-\pi^\prime_h\givenbullets}_1+L\norm{\mu_h-\mu_h^\prime}_1\\
            &+\Expect\Given{\abs{V^{\lambda,\sigma}_{h+1}(s_{h+1},\pi,\mu)-V^{\lambda,\sigma}_{h+1}(s_{h+1},\pi^\prime,\mu^\prime)}}{
            \begin{array}{c}
                s_{h}=s,\\
                s_{h+1}\sim P_h\givenbullet{s_h,a_h},\\
                a_{h}\sim\pi_h\givenbullet{s_{h}}
            \end{array}
            }.
        \end{align*}
    Combining the above inequality and the hypothesis of the induction completes the proof.
\end{proof}
\begin{proposition}
    \label{lem:diff_Qast}
    Let $Q^{\lambda,\sigma}$ be the function defined by \eqref{eq:state_action_value}, and $(\pi,\pi^\prime)\in\qty(\PolicySet)^2$ be policies with full supports.
    Under \cref{assump:Lip,assump:full_spt}, it holds that
    \[
    \begin{aligned}
        &\abs{Q^{\lambda,\sigma}_h(s,a,\pi,\mu)-Q^{\lambda,\sigma}_h(s,a,\pi^\prime,\mu^\prime)}\\
        \leq{}&L\sum_{l=h}^{H}\norm{\mu_l-\mu^\prime_l}+C^{\lambda,\sigma}(\pi,\pi^\prime)\Expect_{(s_l)_{l=h+1}^H}
        \Given{\sum_{l=h+1}^{H}\norm{\pi_l\givenbullet{s_l}-\pi^\prime_l\givenbullet{s_l}}}{s_h=s}
    ,
    \end{aligned}
    \]
    for $(h,s,a)\in[H]\times\State\times\A$ and $\mu,\mu^\prime\in\MeanfieldSet$.
    Here, the random variables $(s_l)_{l=h+1}^H$ follows the stochastic process starting from state $s$ at time $h$, induced from $P$ and $\pi$, and the function $C^{\lambda,\sigma}\colon\qty(\PolicySet)^2\to\R$ is given by
    $
                C^{\lambda,\sigma}(\pi,\pi^\prime)=2-\lambda\inf_{(h,s,a)\in[H]\times\State\times\A}\log{(\sigma\pi\pi^\prime)_h\given{a}{s}}.
    $
\end{proposition}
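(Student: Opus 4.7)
The plan is to apply Lemma \ref{lem:Lip_V}, which bounds the difference of the state value functions $V^{\lambda,\sigma}_h$, to the one-step Bellman-type recursion that defines $Q^{\lambda,\sigma}_h$ in terms of $r_h$ and $V^{\lambda,\sigma}_{h+1}$. First I would invoke equation \eqref{eq:state_action_value}, together with the paper's standing assumption that $P_h$ is independent of the mean-field, to decompose
\[
Q^{\lambda,\sigma}_h(s,a,\pi,\mu) - Q^{\lambda,\sigma}_h(s,a,\pi^\prime,\mu^\prime) = \bigl(r_h(s,a,\mu_h) - r_h(s,a,\mu^\prime_h)\bigr) + \Expect_{s_{h+1}\sim P_h\given{\cdot}{s,a}}\bigl[V^{\lambda,\sigma}_{h+1}(s_{h+1},\pi,\mu) - V^{\lambda,\sigma}_{h+1}(s_{h+1},\pi^\prime,\mu^\prime)\bigr].
\]
The triangle inequality combined with \cref{assump:Lip} immediately bounds the reward-difference term by $L\norm{\mu_h - \mu^\prime_h}$, leaving only the value-function difference to control.

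For the remaining piece I would bring the absolute value inside the expectation over $s_{h+1}$ and apply Lemma \ref{lem:Lip_V} at time index $h+1$ with initial state $s_{h+1}$. The tower property of conditional expectation then yields
\begin{align*}
&\Expect_{s_{h+1}}\bigl|V^{\lambda,\sigma}_{h+1}(s_{h+1},\pi,\mu) - V^{\lambda,\sigma}_{h+1}(s_{h+1},\pi^\prime,\mu^\prime)\bigr| \\
&\qquad \leq \Expect\Given{\sum_{l=h+1}^{H+1}\bigl(C^{\lambda,\sigma}(\pi,\pi^\prime)\norm{\pi_l\givenbullet{s_l}-\pi^\prime_l\givenbullet{s_l}} + L\norm{\mu_l-\mu^\prime_l}\bigr)}{s_h=s},
\end{align*}
where the outer expectation is along the trajectory induced by $P$ and $\pi$ starting from $s_h=s$ with $a_h=a$ held fixed. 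The boundary conventions $\pi_{H+1}\equiv\pi^\prime_{H+1}$ and $\mu_{H+1}\equiv\mu^\prime_{H+1}$ stipulated in Lemma \ref{lem:Lip_V} make the $l=H+1$ summand vanish, and the deterministic term $L\norm{\mu_l-\mu^\prime_l}$ can be pulled out of the expectation.

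Combining the two bounds, and noting that the reward contribution $L\norm{\mu_h-\mu^\prime_h}$ supplies exactly the missing $l=h$ index of the summation, gives the asserted inequality. The principal obstacle is essentially bookkeeping: one must (i) reconcile the trajectory convention of Lemma \ref{lem:Lip_V}, in which $a_h\sim\pi_h\givenbullets$ is sampled, with that of the proposition, where $a_h=a$ is fixed, by unrolling one extra step of the Bellman recursion before appealing to the value-function lemma; and (ii) check that the $l=H+1$ boundary index is absorbed by the uniform conventions. The genuinely delicate analytical content — controlling the logarithmic blow-up of $C^{\lambda,\sigma}(\pi,\pi^\prime)$ as either policy approaches the boundary of the simplex — is already packaged inside Lemma \ref{lem:Lip_V}, so no new analytical ingredient beyond this single Bellman unrolling is required.
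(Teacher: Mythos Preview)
Your proposal is correct and follows exactly the paper's own argument: decompose $Q^{\lambda,\sigma}_h$ via the one-step recursion \eqref{eq:state_action_value}, bound the reward difference by \cref{assump:Lip}, and then invoke \cref{lem:Lip_V} on the $V^{\lambda,\sigma}_{h+1}$ term. Your write-up is in fact more careful than the paper's terse proof, explicitly tracking the fixed-action versus sampled-action convention and the vanishing $l=H+1$ boundary term.
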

\begin{proof}[Proof of \cref{lem:diff_Qast}]
    Let $h$ be larger than $2$.
    By the definition of $Q^{\lambda,\sigma}_h$ given in \eqref{eq:state_action_value} and \cref{lem:Lip_V}, we have
    \begin{align*}
        &\abs{Q^{\lambda,\sigma}_{h-1}(s,a,\pi,\mu)-Q^{\lambda,\sigma}_{h-1}(s,a,\pi^\prime,\mu^\prime)}\\
        \leq{}&\abs{r_{h-1}(s,a,\mu_{h-1})-r_{h-1}(s,a,\mu^{\prime}_{h-1})}+\Expect_{s_{h}\sim P_{h-1}\givenbullet{s,a}}\qty[\abs{V_{h}^{\lambda,\sigma}(s_{h},\pi,\mu)-V_{h}^{\lambda,\sigma}(s_{h},\pi^\prime,\mu^\prime)}]\\
        \leq{}&L\norm{\mu_{h-1}-\mu_{h-1}^\prime}+\Expect_{s_{h}\sim P_{h-1}\givenbullet{s,a}}\qty[\abs{V_{h}^{\lambda,\sigma}(s_{h},\pi,\mu)-V_{h}^{\lambda,\sigma}(s_{h},\pi^\prime,\mu^\prime)}].
    \end{align*}
    Combining the above inequality and \cref{lem:Lip_V} completes the proof.
\end{proof}




\section{Experiment Details}\label{appendix:BeachBar}
We ran experiments on a laptop with an 11th Gen Intel Core i7-1165G7 8-core CPU, 16GB RAM, running Windows 11 Pro with WSL.
As is clear from \cref{alg:reward_transform}, $\APP$ is deterministic.
Thus, we ran the algorithm only once for each experimental setting.
We implemented $\APP$ using Python. The computation of $Q^{\lambda,\sigma}$ and $\mu$ in \cref{alg:reward_transform} was based on the implementation provided by \textcite{pmlr-v206-fabian23a}.

\paragraph{Algorithms.}In this experiment, we implement $\APP$ in \cref{alg:reward_transform}.
For comparison, we also implement $\RMD$ (i.e., \cref{alg:reward_transform} without the update of $\sigma_k$) in \eqref{eq:update}.
For both algorithms, the learning rate is fixed at \(\eta = 0.1\), and we vary the regularization parameter $\lambda$ and update time $T$ to run the experiments.

We show further details for the Beach Bar Process.
We set $H=10,\abs{\State}=10,\A=\qty{-1,\pm0,+1},\lambda=0.1,\eta=0.1$, and
\[
    P_h\given{s^\prime}{s,a}=
    \left\{
    \begin{aligned}
        &1-\varepsilon&&\text{if }a=\pm0\text{ \& }s^\prime=s,\\
        &\frac{\varepsilon}{2}&&\text{if }a=\pm1\text{ \& }s^\prime=s\pm1,\\
        &0&&\text{otherwise},
    \end{aligned}
    \right.
\]
where we choose $\varepsilon=0.1$.
In addition, we initialize $\sigma^0$ and $\pi^0$ in \cref{alg:reward_transform} as the uniform distributions on $\A$.
\begin{remark}
    When the contraction factor $1-\lambda\eta$ is close to $1$, we can observe a small $\tau$ can lead to instability in the outer PP loop.
\end{remark}

\end{document}